\let\originalo\o
\renewcommand{\o}{\text{\originalo}}
\renewcommand{\O}{\varnothing}
\newtheorem{theorem}{Theorem}
\newtheorem{proposition}[theorem]{Proposition}
\newtheorem{lemma}[theorem]{Lemma}
\newtheorem{corollary}[theorem]{Corollary}
\newtheorem{definition}[theorem]{Definition}
\renewcommand\footnotetextcopyrightpermission[1]{}
\begin{document}

\title{Reachability in Geometrically $d$-Dimensional VASS}
\author{Yuxi Fu}
    \email{fu-yx@cs.sjtu.edu.cn}
    \affiliation{
        \institution{BASICS, Shanghai Jiaotong University}
        \city{Shanghai}
        \country{China}
    } 
\author{Yangluo Zheng}
    \email{wunschunreif@sjtu.edu.cn}
    \affiliation{
        \institution{BASICS, Shanghai Jiaotong University}
        \city{Shanghai}
        \country{China}
    } 
\author{Qizhe Yang}
    \email{qzyang@shnu.edu.cn}
    \authornote{Corresponding author.}
    \affiliation{
        \institution{Shanghai Normal University}
        \city{Shanghai}
        \country{China}
    }

\begin{abstract}
Reachability of vector addition systems with states (VASS) is Ackermann complete~\cite{leroux2021reachability,czerwinski2021reachability}.
For $d$-dimensional VASS reachability it is known that the problem is NP-complete~\cite{HaaseKreutzerOuaknineWorrell2009} when $d=1$, PSPACE-complete~\cite{BlondinFinkelGoellerHaaseMcKenzie2015} when $d=2$, and in $\mathbf{F}_d$~\cite{FuYangZheng2024} when $d>2$.
A geometrically $d$-dimensional VASS is a $D$-dimensional VASS for some $D\ge d$ such that the space spanned by the displacements of the circular paths admitted in the $D$-dimensional VASS is $d$-dimensional.
It is proved that the $\mathbf{F}_d$ upper bounds remain valid for the reachability problem in the geometrically $d$-dimensional VASSes with $d>2$.
\end{abstract}

\begin{CCSXML}
<ccs2012>
<concept>
<concept_id>10003752.10003790.10002990</concept_id>
<concept_desc>Theory of computation~Logic and verification</concept_desc>
<concept_significance>500</concept_significance>
</concept>
</ccs2012>
\end{CCSXML}

\ccsdesc[500]{Theory of computation~Logic and verification}

\ccsdesc[500]{Theory of computation~Logic and verification}

\keywords{Reachability, vector addition systems with states, fast growing complexity classes}

\maketitle
\section{Introduction}

Petri nets offer a model for concurrent systems in which system parameters are represented by a vector of nonnegative integers~\cite{Peterson1981}.
In a particular state a Petri net may enable a particular system transition that updates the system parameters.
The parameter updates can be defined by vectors of integers.
Alternatively a Petri net can be defined as a vector addition system with states (VASS).
A VASS is a directed graph in which the vertices are system states and the labeled edges are transition rules.
Starting with an initial vector of system parameters, the execution admitted by the VASS is a consecutive sequence of configuration transitions, where a configuration consists of a state and a vector of system parameters.
Many system properties can be studied using VASS model~\cite{Esparza1998}.
Decidability has been a central issue in these studies.
The most well-known result in this area is the decidability of the reachability problem.
In both theory and practice reachability is an important issue.
A variety of formal verification problems can be reduced to the VASS reachability problem~\cite{Schmitz2016c}.
In applications it is often necessary to know that unsafe configurations are never to be reached.

The decidability of the VASS reachability problem started to attract the attention of researchers in early 1970's~\cite{HopcroftPansiot1979,vanLeeuwen1974}.
A decidability proof was outlined, completed and improved by successive efforts of Sacerdote and Tenney~\cite{SacerdoteTenney1977}, Mayr~\cite{Mayr-1981}, Kosaraju~\cite{Kosaraju-1982} and Lambert~\cite{Lambert1992}.
The algorithm is known as KLMST decomposition.
Different decidability proofs have been studied by Leroux~\cite{Leroux2010,Leroux2011,Leroux2012}.

Complexity theoretical studies of the VASS reachability problem have also advanced significantly our understanding of the VASS model.
Here is a summary of the known upper bounds, some of which are quite new.
For the low dimensional VASS reachability problem, Haase, Kreutzer, Ouaknine and Worrell showed that the problem is NP-complete in the $1$-dimensional case~\cite{HaaseKreutzerOuaknineWorrell2009}.
Blondin, Finkel, G\"{o}ller, Haase and McKenzie proved that it is PSPACE complete in the $2$-dimensional case~\cite{BEFGHLMT2021}.
The problems are NL-complete if unary encoding is used~\cite{EnglertLazicTotzke2016,BEFGHLMT2021}.
For the high dimensional VASS reachability problem, Leroux and Schmitz proved that the $d$-dimensional VASS reachability problem is in $\mathbf{F}_{d+4}$~\cite{LerouxSchmitz2015,Schmitz2017,LerouxSchmitz2019} for all $d>2$.
As a consequence the general VASS problem, in which dimension is part of input, is in $\mathbf{F}_{\omega}$.
The lower bound of the high dimensional VASS reachability problem has also been investigated in~\cite{CzerwinskiLasotaLazicLerouxMazowiecki,czerwinski2021reachability,lasota2021improved,leroux2021reachability}.
It was proved by Czerwi{\'n}ski and Orlikowski that the general VASS reachability problem is Ackermann complete~\cite{czerwinski2021reachability}.
The $\mathbf{F}_{d+4}$ upper bound has been improved recently.
It is proved in~\cite{FuYangZheng2024} that the reachability problem of $d$-VASS is in $\mathbf{F}_d$ for all $d>2$.

In applications, the system parameters we are concerned with are not necessarily independent.
There may well be huge interdependency among the parameters.
Parameter dependency can be measured by geometric dimension.
If in a say $D$-dimensional VASS, the system parameters in every execution lie completely in a region that is essentially $d$-dimensional for some $d<D$, there are only $d$ independent parameters from which the other $D-d$ parameters can be derived.
We shall call such a VASS geometrically $d$-dimensional.
In this paper we are interested in the complexity upper bound of $g\mathbb{VASS}^d$ (the reachability problem of the geometrically $d$-dimensional VASSes).
The main contributions of the paper are stated in the following theorem, which strengthens the recent result proved in~\cite{FuYangZheng2024}.
\begin{theorem}\label{MAIN}
$g\mathbb{VASS}^d$ is in $\mathbf{F}_d$ for all $d>2$.
\end{theorem}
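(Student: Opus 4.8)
The plan is to reduce the reachability problem for a geometrically $d$-dimensional VASS to the reachability problem for an (ordinary) $d$-dimensional VASS, and then invoke the $\mathbf{F}_d$ upper bound of~\cite{FuYangZheng2024}. Concretely, suppose $V$ is a $D$-dimensional VASS with the property that the linear span $L$ of the displacements of all circular paths of $V$ has dimension $d < D$. The key geometric observation is that along any run of $V$ the reachable configuration vectors, modulo their starting point, are confined to a bounded neighbourhood of the subspace $L$: any path from one configuration to another decomposes (via the standard flow/cycle decomposition of its Parikh image) into a cycle part, whose displacement lies in $L$, plus a bounded-length acyclic part whose displacement lies in a fixed finite set. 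Hence the "interesting" behaviour happens inside a $d$-dimensional affine slice.

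First I would make this precise. Fix an integer basis for $L$; by standard lattice arguments one can choose a unimodular transformation that sends the integer lattice points of $L$ to $\mathbb{Z}^d \times \{0\}^{D-d}$. The remaining $D-d$ coordinates are then, on any run, determined up to a bounded additive error by the $d$ "free" coordinates together with the control state and the sequence of transitions taken — more carefully, one shows there is a constant $B$ (polynomial, or at worst exponential, in the size of $V$) such that on every run each of the last $D-d$ coordinates stays within $B$ of a fixed affine-linear function of the first $d$ coordinates. These $D-d$ bounded coordinates can then be absorbed into the control state: I would build a new VASS $V'$ of dimension $d$ whose states are pairs (original state, a vector in $\{-B,\dots,B\}^{D-d}$ tracking the offsets), with transitions updating the offset components in the state and the genuinely $d$-dimensional components on the counters. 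Reachability in $V$ between the given source and target then holds iff reachability holds in $V'$ between the corresponding translated configurations.

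The size of $V'$ is $|V|$ times $(2B+1)^{D-d}$, which is elementary — in fact exponential — in $|V|$; this is harmless, because $\mathbf{F}_d$ for $d \ge 3$ is closed under elementary (indeed primitive recursive) blow-ups of the input. So the final step is simply: run the $\mathbf{F}_d$ algorithm of~\cite{FuYangZheng2024} on $V'$, and note that the reduction plus that algorithm stays within $\mathbf{F}_d$. I would also need to handle the edge case where $L$ fails to be a coordinate subspace in a "nice" way, i.e. where the unimodular change of basis is essential; one must check that after the change of basis the transition displacements, although no longer drawn from a fixed small set, still decompose with a bounded acyclic remainder, which is exactly where the defining property of geometric dimension (that \emph{cycle} displacements span only $L$) is used.

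The main obstacle I anticipate is establishing the uniform bound $B$ on the off-subspace coordinates along arbitrary runs — not just simple paths. A run can revisit states and accumulate large cycle displacements, and one must argue that, because every cycle displacement lies in $L$, the projection of the run onto the complementary $(D-d)$-dimensional space is the projection of the acyclic skeleton alone, whose length is bounded. The delicate point is that the run must stay \emph{nonnegative} in all $D$ coordinates throughout, so one cannot freely rearrange cycles; the argument has to bound the off-subspace coordinates of every prefix, which requires a more careful combinatorial decomposition of runs (e.g.\ along the lines of the classical results bounding the "width" of VASS runs, or a pumping argument on the intermediate states). I expect this to be the technical heart of the reduction, with the construction of $V'$ and the appeal to~\cite{FuYangZheng2024} being comparatively routine.
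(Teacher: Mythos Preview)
Your reduction has a genuine gap, and it is not the one you anticipate. The bound $B$ on the off-$L$ component of every \emph{prefix} is in fact easy: any prefix of a walk decomposes as a multiset of simple cycles of $G$ (whose displacements lie in $L$) plus a simple path of length $<|Q|$, so its image under any linear functional vanishing on $L$ is bounded by $|Q|\cdot\|T\|$ times a basis-change constant. No rearrangement of cycles and no pumping is needed; nonnegativity plays no role here.

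The real obstacle is that after your unimodular change of basis and after absorbing the bounded last $D-d$ coordinates into the control state, the resulting $d$-counter system is \emph{not a $d$-VASS}. The semantics of the original $D$-VASS requires each of the $D$ \emph{original} coordinates to stay nonnegative; in the new basis every original coordinate is a fixed integer linear combination of the $d$ live counters and the state-encoded offset, so the guards you must enforce along a run are of the form
\[
\sum_{j\le d}\alpha_{ij}\,y_j \;\ge\; c_i(\text{state}),\qquad i\in[D],
\]
with the $\alpha_{ij}$ of both signs. Already for $D=2$ and $L=\mathbb{Q}(1,-1)$ this is visible: $x+y$ is bounded along every run, and after encoding it in the state you are left with a single counter $y_1=x$ subject to $y_1\ge0$ \emph{and} $y_1\le c(\text{state})$. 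For $D=3$ and $L=\{(a,b,c):c=a-b\}$ one gets a two-counter system with the guard $y_2-y_1\le c(\text{state})$, which is not a conjunction of nonnegativity constraints on $y_1,y_2$. Re-encoding each such inequality as nonnegativity of a fresh slack counter raises the dimension again (essentially back to $D$), so the appeal to the $\mathbf{F}_d$ bound of~\cite{FuYangZheng2024} for ordinary $d$-VASS is unjustified.

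The paper proceeds in a completely different way: it never produces a syntactically $d$-dimensional VASS. It runs the KLMST decomposition \emph{directly on the $D$-dimensional instance} and proves (Lemmas~\ref{decomp-2023-05-11} and~\ref{2024-01-17}) that every algebraic and every combinatorial decomposition step strictly decreases the \emph{geometric} dimension of the affected component. Hence the refinement tree has depth at most $d-2$, the leaves being geometrically $2$-dimensional components handled by Theorem~\ref{theorem-d-vasslps}; a proliferation-tree count (Lemma~\ref{jinggangshan}) then bounds the output size by $F_d(|\mathbf{a}\xi\mathbf{b}|)$.
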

At the technical level, our contributions are threefold:
\begin{itemize}
    \item We provide a simpler proof that reachability in geometrically $2$-dimensional VASSes can be captured by linear path schemes of small size.
The new proof does not refer to the Farkas-Minkowski-Weyl Theorem~\cite{BEFGHLMT2021,FuYangZheng2024} and ought to be accessible to a wider audience.
    \item We provide a simpler analysis of the complexity of the KLMST algorithm.
Previous studies~\cite{FuYangZheng2024} refer to the Length Function Theorem~\cite{Schmitz2016b}.
    \item Our presentation of the KLMST algorithm sheds more light on the algebraic, combinatorial and geometric properties of VASS reachability, which are exploited in our new version of the KLMST algorithm.
\end{itemize}

Section~\ref{Sec-Preliminary} recaps some preliminaries about VASS reachability.
Section~\ref{Sec-Refinement-Operations} proposes an improvement of the KLMST decomposition method.
Section~\ref{Sec-Reachability-in-d-VASS} observes a geometric termination condition, based on which a new KLMST algorithm is designed.
Section~\ref{Sec-Geometrically-2-Dimensional-d-VASS}
offers a simpler proof that the geometrically $2$-dimensional VASSes admit short paths.
Section~\ref{Sec-Proliferation-Tree} presents a simple proof of the $\mathbf{F}_d$ upper bound of the KLMST decomposition approach.
Section~\ref{Sec-Conclusion} points out a few open problems.

\section{Preliminaries}\label{Sec-Preliminary}

Let $\mathbb{N}$ be the set of nonnegative integers, $\mathbb{Z}$ the set of integers, $\mathbb{Q}$ the set of rational numbers and $\mathbb{Q}_{\ge}$ the set of the nonnegative rational numbers.
Let $\mathbb{V}$ denote the set of variables for nonnegative integers.
We write the small letters $a,b,c,d,i,j,k,l,m,n$, the capital letters $A,B,C,D,L,M,N$, and $\imath,\jmath$ for elements in $\mathbb{N}$, $r,s,t$ for elements in $\mathbb{Z}$, $\ell$ for elements in $\mathbb{Q}$, and $x,y,z$ for elements in $\mathbb{V}$.
For $L\le M$ let $[L,M]$ be the set $\{L,\ldots,M\}$.
Let $[1,M]$ be abbreviated to $[M]$
and let $[M]_{0}$ be $\{0\}\cup[M]$.
The notation $|S|$ stands for the number of the elements of the finite set $S$.

We write $\mathbf{a},\mathbf{b},\mathbf{c},\mathbf{d}$ for $D$-dimensional vectors in $\mathbb{N}^D$, $\mathbf{r},\mathbf{s},\mathbf{t},\mathbf{u},\mathbf{v}$ for vectors in $\mathbb{Z}^D$, and $\mathbf{x},\mathbf{y},\mathbf{z}$ for vectors in $\mathbb{V}^D$.
All vectors are column vectors.
For $i\in[D]$ we write for example $\mathbf{a}(i)$ for the {\em $i$-th entry} of $\mathbf{a}$.
We often write $a_i$ for $\mathbf{a}(i)$.
So $\mathbf{a}=(a_1,a_2,\ldots,a_D)^{\dag}$, where $(\_)^{\dag}$ is the transposition operator.
Let $\mathbf{1}=(1,\ldots,1)^{\dag}$ and $\mathbf{0}=(0,\ldots,0)^{\dag}$.
If $\sigma$ is a finite string of vectors, $|\sigma|$ is the length of $\sigma$, and for $i\in[|\sigma|]$, $\sigma[i]$ is the $i$-th element appearing in $\sigma$, and $\sigma[i,j]$ is the sub-string $\sigma[i]\ldots\sigma[j]$.
The binary relation $\le$ on $\mathbb{Z}^D$, and on $\mathbb{N}^D$ as well, is the point-wise ordering.

Recall that the {\em $1$-norm} $\|\mathbf{r}\|_1$ of $\mathbf{r}$ is $\sum_{i\in[D]}|\mathbf{r}(i)|$.
The {\em $\infty$-norm} $\|\mathbf{r}\|_{\infty}$ is $\max_{i\in[D]}|\mathbf{r}(i)|$.
The {\em $1$-norm} $\|A\|_1$ of an integer matrix $A$ is $\sum_{i,j}|A(i,j)|$, and the {\em $\infty$-norm} $\|A\|_{\infty}$ is $\max_{i,j}|A(i,j)|$.

The vector space $\mathbb{Z}^D$ is divided into $2^D$ orthants.
For $\#_1,\#_2,\ldots,\#_D\in\{+,-\}$, the {\em orthant} $Z_{\#_1,\#_2,\ldots,\#_D}$ contains all the $D$-dimensional vectors $(r_1,r_2,\ldots,r_D)^{\dag}\in\mathbb{Z}^D$ such that, for each $i\in[D]$, $r_i\ge0$ if $\#_i=+$ and $r_i\le0$ if $\#_i=-$.
The {\em first orthant} refers to $Z_{+,+,\ldots,+}$.
For $\mathbf{v}\in\mathbb{Z}^D$, we write $Z_{\mathbf{v}}$ for the orthant $Z_{\#_1,\#_2,\ldots,\#_D}$ such that for each $i\in[D]$, the sign $\#_i$ is $+$ if $\mathbf{v}(i)\ge0$ and is $-$ otherwise.

In the following, we review the necessary background materials.

\subsection{Non-Elementary Complexity Class}
\label{s-Non-Elementary-Complexity-Class}

VASS reachability is not elementary~\cite{czerwinski2021reachability}.
To characterize the problem complexity theoretically, one needs complexity classes beyond the elementary class.
Schmitz introduced an ordinal indexed collection of complexity classes $\mathbf{F}_3,\mathbf{F}_4,\ldots,\mathbf{F}_{\omega},\ldots,\mathbf{F}_{\omega^2},\ldots,\mathbf{F}_{\omega^{\omega}},\ldots$ and showed that many problems arising in theoretical computer science are complete problems in this hierarchy~\cite{Schmitz2016a}.
In the above sequence $\mathbf{F}_{3}=\mathbf{Tower}$ and $\mathbf{F}_{\omega}=\mathbf{Ackermann}$.
The class $\mathbf{Tower}$ is closed under elementary reduction and $\mathbf{Ackermann}$ is closed under primitive recursive reduction.
For the purpose of this paper it suffices to say that $\mathbf{F}_d$, where $d>2$, contains all the problems whose space complexity is bounded by functions of the form
\[
f_{d-1}^{x+1}(x)=\underset{x+1}{\underbrace{f_{d-1}\ldots f_{d-1}}}(x),
\]
where $f_{d-1}\in\mathbf{F}_{d-1}$.
Define the tower function $\mathfrak{t}(x,y)$ as follows:
\begin{equation}\label{tower-2023-11-09}
\mathfrak{t}(x,0) = x,\text{ and }
\mathfrak{t}(x,y+1) = 2^{t(x,y)}.
\end{equation}
Now $F_{3}(x)\stackrel{\rm def}{=}\mathfrak{t}(x,x+1)\in\mathbf{F}_3$, and $F_{d}(x)\stackrel{\rm def}{=}F_{d-1}^{x+1}(x)\in\mathbf{F}_d$ for $d>3$.

For notational simplicity we shall hide constant factors when writing a complexity function.
For example we write $\texttt{poly}(n)$ for {\em some} polynomial of $n$, and write $2^{O(n)}$ for $2^{cn}$ for some constant $c$.
Accordingly we write for example $\texttt{poly}(n){\cdot}2^{O(n)}=2^{O(n)}$ to indicate that $\texttt{poly}(n){\cdot}2^{O(n)}$ is dominated by $2^{c'n}$ for some constant $c'$ that depends on the degree of $\texttt{poly}(n)$ and the constant factor in $2^{O(n)}$ of $\texttt{poly}(n){\cdot}2^{O(n)}$.

\subsection{Integer Programming}

We shall need a result in integer linear programming~\cite{Schrijver1986}.
Let $A$ be an $m\,{\times}\,k$ integer matrix and $\mathbf{x}\in\mathbb{V}^k$.
The {\em homogeneous equation system} of $A$ is given by the linear equation system $\mathcal{E}_0$ specified by
\begin{equation}\label{2019-05-15}
A\mathbf{x}=\mathbf{0}.
\end{equation}
A solution to~(\ref{2019-05-15}) is some nontrivial $\mathbf{n}\in\mathbb{N}^k\setminus\{\mathbf{0}\}$ such that $A\mathbf{n}=\mathbf{0}$.
Since the point-wise ordering $\le$ on $\mathbb{N}^k$ is a well founded well quasi order, $\mathcal{S}$ must be generated by a finite set of nontrivial minimal solutions.
This finite set is called the {\em Hilbert base} of $\mathcal{E}_0$, denoted by $\mathcal{H}(\mathcal{E}_0)$.
The following important result is proved by Pottier~\cite{Pottier1991}, in which $r$ is the {\em rank} of the matrix $A$.
\begin{lemma}[Pottier]\label{pottier-lemma}
$\|\mathbf{n}\|_1 \le (1\,{+}\,k{\cdot}\|A\|_{\infty})^r$ for every $\mathbf{n}\in\mathcal{H}(\mathcal{E}_0)$.
\end{lemma}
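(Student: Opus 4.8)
The plan is the classical one: pass to a full-rank system, translate a minimal solution into a zero-sum multiset of columns of $A$, and bound its cardinality by counting the partial sums of a well-chosen ordering of that multiset. First I would reduce to the case where $A$ has full row rank $r$: deleting from $A$ those rows that are rational combinations of the remaining rows changes neither the solution set of $A\mathbf{x}=\mathbf{0}$ nor $\mathcal{H}(\mathcal{E}_0)$ nor the rank, and it can only decrease $\|A\|_\infty$; so assume $A$ is $r\times k$ of rank $r$. Fix $\mathbf{n}\in\mathcal{H}(\mathcal{E}_0)$, set $N=\|\mathbf{n}\|_1$, and let $M$ be the multiset containing, for each $j\in[k]$, exactly $\mathbf{n}(j)$ copies of the $j$-th column of $A$. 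Then $M$ has $N$ elements, every element has $\infty$-norm at most $\|A\|_\infty$, and $\sum_{\mathbf{v}\in M}\mathbf{v}=A\mathbf{n}=\mathbf{0}$.

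For any enumeration $\mathbf{v}_1,\dots,\mathbf{v}_N$ of $M$ I would look at the partial sums $\mathbf{q}_t=\sum_{i\le t}\mathbf{v}_i\in\mathbb{Z}^r$ for $t\in[N]_0$, so that $\mathbf{q}_0=\mathbf{q}_N=\mathbf{0}$. The minimality of $\mathbf{n}$ enters exactly once, through the observation that $\mathbf{q}_0,\mathbf{q}_1,\dots,\mathbf{q}_{N-1}$ are pairwise distinct no matter which enumeration is used: if $\mathbf{q}_s=\mathbf{q}_t$ with $0\le s<t\le N$, then the segment $\mathbf{v}_{s+1},\dots,\mathbf{v}_t$ is a nonempty sub-multiset of $M$ that sums to $\mathbf{0}$, i.e.\ it is the column multiset of some $\mathbf{n}'\in\mathbb{N}^k\setminus\{\mathbf{0}\}$ with $A\mathbf{n}'=\mathbf{0}$ and $\mathbf{n}'\le\mathbf{n}$; minimality forces $\mathbf{n}'=\mathbf{n}$, hence the segment exhausts $M$, which is possible only for $s=0$ and $t=N$. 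Consequently $N$ is the number of distinct vectors among $\mathbf{q}_0,\dots,\mathbf{q}_{N-1}$, and it suffices to exhibit \emph{one} enumeration of $M$ whose partial sums all lie inside a box containing at most $(1+k\|A\|_\infty)^r$ lattice points.

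That last step is where the work is. The idea is a Steinitz-style rearrangement: since the elements of $M$ sum to $\mathbf{0}$ and each has $\infty$-norm at most $\|A\|_\infty$, $M$ can be enumerated so that every partial sum stays near the origin; equivalently, the induced monotone lattice path $\mathbf{p}_0=\mathbf{0},\mathbf{p}_1,\dots,\mathbf{p}_N=\mathbf{n}$ (where $\mathbf{q}_t=A\mathbf{p}_t$) can be made to hug the segment $\{\tfrac{t}{N}\mathbf{n}:t\in[N]_0\}$, so that $\mathbf{q}_t=A(\mathbf{p}_t-\tfrac{t}{N}\mathbf{n})$ with each coordinate of $\mathbf{p}_t-\tfrac{t}{N}\mathbf{n}$ of absolute value below $1$; then each coordinate of $\mathbf{q}_t$ is controlled by the corresponding row $1$-norm of $A$, which is at most $k\|A\|_\infty$. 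Counting lattice points in the resulting box and combining with the preceding paragraph gives $\|\mathbf{n}\|_1=N\le(1+k\|A\|_\infty)^r$. The full-rank reduction and the cycle-chord distinctness argument are routine; the main obstacle is this rearrangement, and in particular calibrating the constant: obtaining \emph{some} enumeration with bounded partial sums is standard, but getting the base of the exponential to be exactly $1+k\|A\|_\infty$ — rather than merely $O(k\|A\|_\infty)$ — requires the sharp $\ell_\infty$ form of the Steinitz lemma (or an explicit balanced-scheduling construction of the lattice path) together with a precise count of which integers each coordinate of $\mathbf{q}_t$ can take.
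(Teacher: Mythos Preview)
The paper does not prove this lemma at all: it is quoted as a known result of Pottier and cited to~\cite{Pottier1991} without argument, so there is no ``paper's own proof'' to compare your proposal against.

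Your plan is the classical one and is essentially Pottier's. The reduction to full row rank, the reinterpretation of a minimal solution as a zero-sum multiset of columns, and the distinctness of prefix sums via minimality are all correct and standard. You are also right that the only genuine work is the rearrangement step, and your monotone-lattice-path formulation (choosing $\mathbf{p}_t$ close to $\tfrac{t}{N}\mathbf{n}$ and using $A\mathbf{n}=\mathbf{0}$ to write $\mathbf{q}_t=A(\mathbf{p}_t-\tfrac{t}{N}\mathbf{n})$) is exactly the right way to get a bound depending on the row $1$-norms of $A$ rather than on the Steinitz constant times $\|A\|_\infty$.

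One caveat you already anticipate but perhaps understate: as written, the lattice-path bound gives $|\mathbf{q}_t(i)|<\sum_j|A(i,j)|\le k\|A\|_\infty$, hence $\mathbf{q}_t(i)\in\{-(k\|A\|_\infty-1),\dots,k\|A\|_\infty-1\}$, a set of $2k\|A\|_\infty-1$ integers per coordinate. Pigeonhole then yields $N\le(2k\|A\|_\infty-1)^r$, not $(1+k\|A\|_\infty)^r$. Closing this gap to the constant stated in the paper needs either a one-sided control on the partial sums or a more careful count; it is not obvious that the straight lattice-path argument alone delivers the base $1+k\|A\|_\infty$. That said, every use the paper makes of this lemma (Corollary~\ref{eq-sol}, Lemma~\ref{2024-01-08-sln}, and the $2^{\texttt{poly}(\cdot)}$ estimates downstream) is completely insensitive to the constant in the base, so your argument already suffices for everything the paper needs.
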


Let $\mathbf{r}\in\mathbb{Z}^k$.
Solutions to the {\em equation system} $\mathcal{E}$ defined by
\begin{equation}\label{hes}
A\mathbf{x}=\mathbf{r}
\end{equation}
can be derived from the Hilbert base of the homogeneous equation system $A\mathbf{x}-x'\mathbf{r}=\mathbf{0}$.
A solution to $\mathcal{E}$ is of the form
\[
\mathbf{m} + \sum_{\mathbf{n}\in\mathcal{H}(\mathcal{E}_0)}k_{\mathbf{n}}\mathbf{n},
\]
where $\mathbf{m}\in\mathcal{H}(\mathcal{E})$, the set of the minimal solutions to $\mathcal{E}$, and $k_{\mathbf{n}}\in\mathbb{N}$ for each $\mathbf{n}\in\mathcal{H}(\mathcal{E}_0)$.
The following is immediate from Lemma~\ref{pottier-lemma}.
\begin{corollary}\label{eq-sol}
$\|\mathbf{m}\|_1 \le (1+k{\cdot}\|A\|_{\infty}+\|\mathbf{r}\|_{\infty})^{r+1}$ for every $\mathbf{m}\in\mathcal{H}(\mathcal{E})$.
\end{corollary}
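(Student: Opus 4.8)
The plan is to homogenize $\mathcal{E}$ and then quote Lemma~\ref{pottier-lemma}, following the reduction sketched before the statement. First I would pass to the homogeneous system $\mathcal{E}_0'$ given by $A\mathbf{x}-x'\mathbf{r}=\mathbf{0}$ in the $k{+}1$ variables $\mathbf{y}=(\mathbf{x},x')$, whose coefficient matrix is $B=[\,A\mid-\mathbf{r}\,]$. Two elementary facts about $B$ will be used downstream: adjoining one column raises the rank by at most one, so the rank of $B$ is at most $r+1$; and each row of $B$ consists of $k$ entries of $A$ together with one entry of $-\mathbf{r}$, so every row of $B$ has $1$-norm at most $k\|A\|_{\infty}+\|\mathbf{r}\|_{\infty}$.

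The one step that carries real content is the claim that every $\mathbf{m}\in\mathcal{H}(\mathcal{E})$ lifts to a Hilbert-base element $(\mathbf{m},1)\in\mathcal{H}(\mathcal{E}_0')$. Indeed $(\mathbf{m},1)$ is a nontrivial solution of $B\mathbf{y}=\mathbf{0}$ because $A\mathbf{m}-\mathbf{r}=\mathbf{0}$. For minimality, suppose $(\mathbf{n},c)\le(\mathbf{m},1)$ is a nontrivial solution with $(\mathbf{n},c)\ne(\mathbf{m},1)$; then $c\in\{0,1\}$. If $c=1$ then $A\mathbf{n}=\mathbf{r}$ and $\mathbf{n}\le\mathbf{m}$, so $\mathbf{n}=\mathbf{m}$ by the minimality of $\mathbf{m}$ in $\mathcal{H}(\mathcal{E})$, a contradiction. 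If $c=0$ then $\mathbf{0}\ne\mathbf{n}\le\mathbf{m}$ and $A\mathbf{n}=\mathbf{0}$, so $\mathbf{m}-\mathbf{n}\in\mathbb{N}^k$ is a solution of $\mathcal{E}$ lying strictly below $\mathbf{m}$, again a contradiction. (The case $\mathbf{r}=\mathbf{0}$, where $\mathcal{H}(\mathcal{E})=\{\mathbf{0}\}$ and the bound is trivial, is set aside at the outset.)

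It then remains to apply Lemma~\ref{pottier-lemma} to $\mathcal{E}_0'$. In the row-norm form its proof yields, the $1$-norm of a Hilbert-base element of a rank-$\rho$ system is at most $(1+\mu)^{\rho}$, where $\mu$ is the largest $1$-norm of a row of the coefficient matrix — the displayed form of the lemma being the weakening that bounds $\mu$ by the number of columns times the $\infty$-norm. Applying this to $B$, with the two facts recorded above, gives $\|\mathbf{p}\|_1\le(1+k\|A\|_{\infty}+\|\mathbf{r}\|_{\infty})^{r+1}$ for every $\mathbf{p}\in\mathcal{H}(\mathcal{E}_0')$; in particular $\|\mathbf{m}\|_1<\|\mathbf{m}\|_1+1=\|(\mathbf{m},1)\|_1$ obeys this bound, which is the corollary (the extra $+1$ in the exponent comfortably absorbing the passage from $k$ to $k{+}1$ columns). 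The main obstacle is the lifting step of the second paragraph; everything after it is routine bookkeeping with norms and ranks.
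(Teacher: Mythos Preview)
Your approach is precisely the one the paper intends: homogenize $\mathcal{E}$ to $A\mathbf{x}-x'\mathbf{r}=\mathbf{0}$ and invoke Pottier, exactly as the sentence preceding the corollary suggests. Your minimality argument for $(\mathbf{m},1)\in\mathcal{H}(\mathcal{E}_0')$ is clean and correct, including the side case $\mathbf{r}=\mathbf{0}$. The only thing worth flagging is that the precise bound $(1+k\|A\|_\infty+\|\mathbf{r}\|_\infty)^{r+1}$ does require the row-norm form of Pottier's bound rather than the column-count form displayed as Lemma~\ref{pottier-lemma}; you noticed and handled this, and it is indeed what the paper is tacitly using, since applying the displayed form verbatim to $B=[A\mid-\mathbf{r}]$ would only give $(1+(k{+}1)\max(\|A\|_\infty,\|\mathbf{r}\|_\infty))^{r+1}$.
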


By Lemma~\ref{pottier-lemma} and Corollary~\ref{eq-sol}, if in~(\ref{2019-05-15}) and~(\ref{hes}) the number of variables is $2^{\texttt{poly}(n)}$ and the number of equations is $\texttt{poly}(n)$, then the minimal solutions are of $\texttt{poly}(n)$ size, and both $\mathcal{H}(\mathcal{E}_0)$ and $\mathcal{H}(\mathcal{E})$ are of $2^{\texttt{poly}(n)}$ size and can be guessed in $\texttt{poly}(n)$ space.

\subsection{VASS as Labeled Digraph}\label{Sec-Vector-Addition-System-with-States}

A {\em digraph} is a finite directed graph in which multi-edges and self loops are admitted.
A {\em $D$-dimensional vector addition system with states}, or {\em $D$-VASS}, is a labeled digraph $G=(Q,T)$ where $Q$ is the set of vertices and $T$ is the set of edges.
Edges are labeled by elements of $\mathbb{Z}^D$, and the labels are called {\em displacements}.
A {\em state} is identified to a vertex and a {\em transition} is identified to a labeled edge.
We write $o,p,q$ for states, and $e,f,g$ for edges.
A transition from $p$ to $q$ labeled $\mathbf{t}$ is denoted by $p\stackrel{\mathbf{t}}{\longrightarrow}q$.
The edge $p\stackrel{\mathbf{t}}{\longrightarrow}q$ is an {\em out-going edge} of $p$ and an {\em incoming edge} of $q$.
In the rest of the paper we refer to $G=(Q,T)$ either as a graph or as a VASS.
The input size of a VASS $G=(Q,T)$, denoted by $|G|$, is to be understood as the length of its binary code.
We say that a $D$-VASS is {\em strongly connected} if the graph is strongly connected.
We write $\|T\|$ for $\max_{p\stackrel{\mathbf{t}}{\longrightarrow}q\in T}\|\mathbf{t}\|_{\infty}$.
In other words, $\|T\|$ is the maximal absolute value of all the entry values of the transition labels of $G$.

A {\em path} in $G$ from $p$ to $q$ labeled $\pi=\mathbf{t}_1\ldots\mathbf{t}_k$ is an edge sequence
\begin{equation}\label{2023-12-09}
p=p_0\stackrel{\mathbf{t}_1}{\longrightarrow}p_1\stackrel{\mathbf{t}_2}{\longrightarrow}\ldots
\stackrel{\mathbf{t}_{k-1}}{\longrightarrow}p_{k-1}\stackrel{\mathbf{t}_{k}}{\longrightarrow}p_k=q
\end{equation}
for the unique states $p_1,\ldots,p_{k-1}$, often abbreviated to $p\stackrel{\pi}{\longrightarrow}q$.
We shall confuse $\pi$ to the path~(\ref{2023-12-09}).
The length of $\pi$ is denoted by $|\pi|$, which is the number of the edges appearing in $\pi$.
Let $\pi[i]$ denote $p_{i-1}\stackrel{\mathbf{t}_i}{\longrightarrow}p_i$, and let $\pi[i,j]$ denote $p_{i-1}\stackrel{\mathbf{t}_{i}}{\longrightarrow}p_{i}\stackrel{\mathbf{t}_{i+1}}{\longrightarrow}\ldots
\stackrel{\mathbf{t}_{j}}{\longrightarrow}p_{j}$.
We shall write $p\longrightarrow^{*}q$ for a path when the transition labels are immaterial.
A {\em circular path} (or {\em cycle}) $\O$ is a path $p\longrightarrow^{*}q$ such that $p=q$.
A {\em simple cycle} $\o$ is a cycle in which no vertex occurs more than once. 

A {\em Parikh image} for $G=(Q,T)$ is a function in $\mathbb{N}^{T}$.
We will write $\psi$ for Parikh images, and $\Psi\in\mathbb{V}^T$ for Parikh image variables.
The {\em displacement} $\Delta(\psi)$ is defined by $\sum_{e=(p,\mathbf{t},q)\in T}\psi(e){\cdot}\mathbf{t}$, and $\Delta(\Psi)$ by $\sum_{e=(p,\mathbf{t},q)\in T}\Psi(e){\cdot}\mathbf{t}$.
For a path $\pi$ we write $\Im(\pi)$ for the Parikh image defined by letting $\Im(\pi)(e)$ be the number of occurrences of $e$ in $\pi$.

\subsection{Constraint Graph Sequence}

Let $G=(Q,T)$ be a {\em strongly connected} $D$-VASS.
For $p,q\in Q$, not necessarily distinct, the triple $pGq$ is a {\em constraint graph (CG)}.
We call $p$ the {\em entry state} and $q$ the {\em exit state} of the CG.
When $G$ is the trivial graph with one state $p$ and no transition, $p=q$ and the CG is simply the state $p$.
A constraint graph is nothing but a strongly connected VASS with a specified entry state and a specified exit state.
Constraint graphs are the basic building bricks in the studies of VASS reachability.

In the constraint graph $pGq$ we are interested in paths from $p$ to $q$.
Path is a graph theoretical concept.
With VASS models one is interested not only in system states but also system parameters.
A {\em local state} $o(\mathbf{s})$ consists of a state $o$ and a {\em location} $\mathbf{s}\in \mathbb{Z}^D$.
If the initial state of~(\ref{2023-12-09}) is located at $\mathbf{s}_0$, the path $\pi$ can be annotated with the locations and becomes
\begin{equation}\label{2024-11-02}
p_0(\mathbf{s}_0)\stackrel{\mathbf{t}_1}{\longrightarrow}p_1(\mathbf{s}_1) \stackrel{\mathbf{t}_2}{\longrightarrow}\ldots \stackrel{\mathbf{t}_{k-1}}{\longrightarrow}p_{k-1}(\mathbf{s}_{k-1})\stackrel{\mathbf{t}_k}{\longrightarrow}p_k(\mathbf{s}_k),
\end{equation}
where $\mathbf{s}_i=\mathbf{s}_{i-1}+\mathbf{t}_{i}$ for all $i\in[k]$.
The {\em displacement} $\Delta(\pi)$ of $\pi$ is the vector $\sum_{i\in[k]}\mathbf{t}_i$ for the accumulated effect of the transitions of $\pi$.

\begin{figure*}[t]
\centering
\includegraphics[scale=0.7]{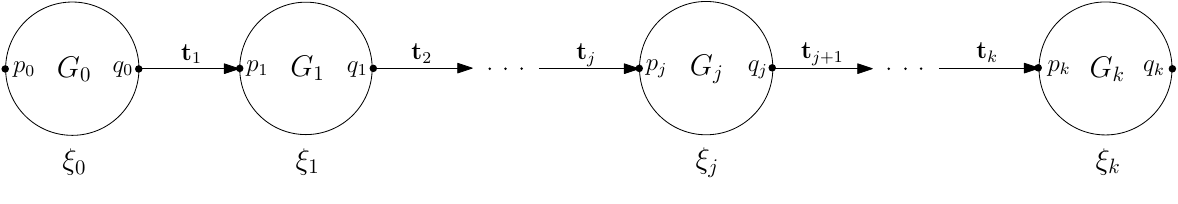}
\caption{Constraint Graph Sequence} \label{CGS}
\end{figure*}

A {\em constraint graph sequence (CGS) $\xi$} from $p=p_0$ to $q=q_k$ is a graph of the following form
\begin{equation}\label{CGS-2023-08-26}
\xi_0\stackrel{\mathbf{t}_1}{\longrightarrow}\xi_1\stackrel{\mathbf{t}_2}{\longrightarrow}\ldots
\stackrel{\mathbf{t}_{j-1}}{\longrightarrow}\xi_{j-1}\stackrel{\mathbf{t}_j}{\longrightarrow}\xi_{j}\stackrel{\mathbf{t}_{j+1}}{\longrightarrow}\ldots\stackrel{\mathbf{t}_k}{\longrightarrow}\xi_k,
\end{equation}
where $\xi_j=p_jG_jq_j$ and $G_j=(Q_j,T_j)$ is strongly connected for every $j\in[k]_{0}$, and $q_{j-1}\stackrel{\mathbf{t}_{j}}{\longrightarrow}p_{j}$ for every $j\in[k]$.
A diagrammatic illustration of~(\ref{CGS-2023-08-26}) is given in Figure~\ref{CGS}.
We call the edges $q_{0}\stackrel{\mathbf{t}_{1}}{\longrightarrow}p_{1}$, \ldots, $q_{k-1}\stackrel{\mathbf{t}_{k}}{\longrightarrow}p_{k}$ the {\em connecting edges}, and the CGes $\xi_0,\xi_1,\ldots,\xi_k$ the {\em component CGes} (or {\em components}) of $\xi$.
We will always assume that for each $j\in[k]_0$, attached to $\xi_j$ are a distinctive $D$-dimensional vector $\mathbf{x}_j$ of nonnegative integer variables for the entry location at $p_j$, a distinctive $D$-dimensional vector $\mathbf{y}_j$ of nonnegative integer variables for the exit location at $q_j$, and a distinctive injective function $\Psi_j$ from $T_j$ to $\mathbb{V}$ for the number of the occurrence of the edges.
We write $\mathbf{x}_j\xi_j\mathbf{y}_j$ or $p_j(\mathbf{x}_j)G_jq_j(\mathbf{y}_j)$ when we are concerned with the entry location and the exit location.
We write $\mathbf{a}\xi\mathbf{b}$ for the CGS obtained from $\xi$ by instantiating $\mathbf{x}_0$ and $\mathbf{y}_k$ by $\mathbf{a}$ and $\mathbf{b}$ respectively, and write $\mathbf{a}_j\xi_j\mathbf{b}_j$ or $p_j(\mathbf{a}_j)G_jq_j(\mathbf{b}_j)$ if $\mathbf{x}_j$ and $\mathbf{y}_j$ are instantiated by $\mathbf{a}_j$ and $\mathbf{b}_j$ respectively in the walk we are concerned with.
The size $|\xi|$ of $\xi$ is the size of the graph of $\xi$.
The size $|\mathbf{a}\xi\mathbf{b}|$ of $\mathbf{a}\xi\mathbf{b}$ is defined by $|\mathbf{a}|+|\xi|+|\mathbf{b}|$, where $|\mathbf{a}|$ for example is $\sum_{i\in[D]}|\mathbf{a}(i)|$.
We remark that a CGS differs from an KLM sequence~\cite{LerouxSchmitz2019} in that the components of the former must be strongly connected.

The CGS in~(\ref{CGS-2023-08-26}) imposes constraints on the shape of paths from $p_0$ to $q_k$.
A path $\pi$ is {\em admitted} by $\xi$ if $\pi=\pi_0\mathbf{t}_1\pi_1\mathbf{t}_2\pi_2\mathbf{t}_3\ldots\mathbf{t}_k\pi_k$ such that for each $j\in[k]$ the path $\pi_j$ is from $p_j$ to $q_j$ inside $G_j$.
In Section~\ref{Sec-Refinement-Operations} we will define operations on CGSes that adjoin new constraints on the old ones.
We shall be mostly interested in the special relationship between the CGSes defined next.
\begin{definition}
A CGS $\xi'$ {\em refines} a CGS $\xi$, notation $\xi'\sqsubseteq\xi$, if every path admitted by $\xi'$ is admitted by $\xi$.
\end{definition}

For $\mathbb{S}\subseteq\mathbb{Z}^D$, we write $p(\mathbf{r})\stackrel{\pi}{\longrightarrow}_{\mathbb{S}}q(\mathbf{s})$, or even $p(\mathbf{r})\longrightarrow_{\mathbb{S}}^*q(\mathbf{s})$, if $p(\mathbf{r})\stackrel{\pi}{\longrightarrow}q(\mathbf{s})$ passes only locations in $\mathbb{S}$.
By abusing terminology, we often say that $p(\mathbf{r})\stackrel{\pi}{\longrightarrow}q(\mathbf{s})$ is in $\mathbb{S}$ when $p(\mathbf{r})\stackrel{\pi}{\longrightarrow}_{\mathbb{S}}q(\mathbf{s})$.
For example, we may write $p_0(\mathbf{s}_0)\stackrel{\pi}{\longrightarrow}_{\mathbb{Z}^D}p_k(\mathbf{s}_k)$ or $p_0(\mathbf{s}_0)\longrightarrow_{\mathbb{Z}^D}^*p_k(\mathbf{s}_k)$ for the path in~(\ref{2024-11-02}).
If $\pi$ is a path admitted by $p'G'q'$ and $\xi$ refines $p'G'q'$, we will also say, when $p'G'q'$ is clear from context, that $\xi$ is a refinement of $\pi$.

\subsection{Configuration, Walk and Traversal}

If locations are system parameters and transitions are rules for system evolution, we should really be interested in paths in the first orthant. A {\em configuration} is a local state whose location is in $\mathbb{N}^D$.
A {\em walk} is a path in which all the local states in the walk are configurations.
A walk $\mathbf{w}$ is {\em proper} if no configuration appears more than once in $\mathbf{w}$.
Given $\mathbf{a},\mathbf{b}\in\mathbb{N}^D$, a {\em witness} to $\mathbf{a}\xi\mathbf{b}$ (cf. Figure~\ref{CGS}) is a walk of the form
\[
p(\mathbf{a})\longrightarrow_{\mathbb{N}^D}^{*}q_0(\mathbf{b}_0) \stackrel{\mathbf{t}_1}{\longrightarrow}p_1(\mathbf{a}_1)\longrightarrow_{\mathbb{N}^D}^{*}\ldots \stackrel{\mathbf{t}_k}{\longrightarrow} p_k(\mathbf{a}_k)\longrightarrow_{\mathbb{N}^D}^{*}q(\mathbf{b})
\]
admitted by $\xi$.
Let $\mathcal{W}_{\mathbf{a}\xi\mathbf{b}}$ denote the set of all witnesses to $\mathbf{a}\xi\mathbf{b}$, equivalently it is the set of walks from $p(\mathbf{a})$ to $q(\mathbf{b})$ admitted by $\xi$.
The problem $g\mathbb{VASS}^d$ can be defined in terms of $\mathcal{W}_{\mathbf{a}\xi\mathbf{b}}$ as follows:
\begin{quote}
Given a geometrically $d$-dimensional VASS $G$ and two configurations $p(\mathbf{a}),q(\mathbf{b})$, is
$\mathcal{W}_{p(\mathbf{a})Gq(\mathbf{b})}\ne\emptyset$\,?
\end{quote}
For $\mathbf{w}\in\mathcal{W}_{\mathbf{a}\xi\mathbf{b}}$, the notation $\mathbf{w}_j$ is for the sub-walk of $\mathbf{w}$ inside $\xi_j$.

A {\em traversal} in $\mathbf{a}\xi\mathbf{b}$ is a witness to $\mathbf{a}\xi\mathbf{b}$ that passes all the edges of $\xi$.
Quantitatively a traversal satisfies the {\em Euler Condition}, defined for each $j\in[k]_0$ by the following.
\begin{eqnarray}\label{2023-12-28}
\sum_{e=(p',\mathbf{t},q')\in T_j}\Psi_j(e)(\mathbf{1}_{q'}-\mathbf{1}_{p'}) &=& \mathbf{1}_{q_j}-\mathbf{1}_{p_j}, \label{lyjs-0} \\
\Psi(e) &>& 0, \ \text{ for every $e\in T_j$}, \label{all-edges} \\
\mathbf{x}_j+\Delta(\Psi_j) &=& \mathbf{y}_j. \label{lyjs-1}
\end{eqnarray}
where $\Psi_j$ is an injective function from $T_j$ to $\mathbb{V}$ and the notation $\mathbf{1}_{q_j}$ for example is a $|Q|$-dimensional indicator vector whose $q_j$-th entry is $1$ and all the other entries are $0$.
The variable $\Psi_j(e)$ is for the number of occurrence of the edge $e$.
The Euler Condition is necessary and sufficient for the existence of a path that enters $G_j$ in $p_j$ and leaves $G_j$ from $q_j$ and {\em passes all the edges of $\,G_j$}.
For each $j\in[k]$ there is an equation connecting the exit place of $\xi_{j-1}$ to the entry place of $\xi_{j}$:
\begin{eqnarray}
\mathbf{y}_{j-1}+\mathbf{t}_j &=& \mathbf{x}_j. \label{lyjs-2}
\end{eqnarray}
To specify that $p_0$ is located at $\mathbf{a}$ and $q_k$ is located at $\mathbf{b}$, introduce the equations
\begin{eqnarray}
\mathbf{x}_0 &=& \mathbf{a}, \label{lyjs-3} \\
\mathbf{y}_k &=& \mathbf{b}. \label{lyjs-4}
\end{eqnarray}
The {\em characteristic system} $\mathcal{E}_{\mathbf{a}\xi\mathbf{b}}$ for $\mathbf{a}\xi\mathbf{b}$ is defined by~(\ref{lyjs-0}), (\ref{all-edges}), (\ref{lyjs-1}), (\ref{lyjs-2}), (\ref{lyjs-3}) and~(\ref{lyjs-4}).
A solution to $\mathcal{E}_{\mathbf{a}\xi\mathbf{b}}$ is an assignment of {\em nonnegative} integers to the variables that renders valid~(\ref{lyjs-0}) through~(\ref{lyjs-4}).
The system $\mathcal{E}_{\mathbf{a}\xi\mathbf{b}}$ is an {\em under}-specification of the traversals in $\mathbf{a}\xi\mathbf{b}$ (cf. Figure~\ref{CGS}).
The reason that $\mathcal{E}_{\mathbf{a}\xi\mathbf{b}}$ does not completely characterize the traversals is that it does nothing to ensure that the paths inside each $\xi_j$ are in $\mathbb{N}^D$.
What are not captured equationally will be dealt with by other means, see Section~\ref{Sec-Coverability} and Section~\ref{Sec-Geometric-Dimension}.

The system $\mathcal{E}_{\mathbf{a}\xi\mathbf{b}}$ is {\em satisfiable} if it has a solution, it is {\em unsatisfiable} otherwise.
The reachability instance $\mathbf{a}\xi\mathbf{b}$ is (un)satisfiable if $\mathcal{E}_{\mathbf{a}\xi\mathbf{b}}$ is (un)satisfiable.
It should be emphasized that the unsatisfiability of $\mathcal{E}_{\mathbf{a}\xi\mathbf{b}}$ does not deny the existence of a witness to $\mathbf{a}\xi\mathbf{b}$, it only rules out the existence of any traversals in $\mathbf{a}\xi\mathbf{b}$.
The complexity of satisfiability testing is ignorable compared to that of reachability testing.
Our nondeterministic algorithm aborts whenever it produces an unsatisfiable CGS.
We will not mention the satisfiability testing issue any more.

Suppose $\mathbf{n}$ is a solution to $\mathcal{E}_{\mathbf{a}\xi\mathbf{b}}$.
We write $\mathbf{x}_j^{\mathbf{n}}$ for the vector of the nonnegative integers assigned to $\mathbf{x}_j$ by the solution $\mathbf{n}$.
The notations $\mathbf{y}_j^{\mathbf{n}},\Psi_j^{\mathbf{n}}$ are defined analogously.
More generally for a variable $z$ of the equation system, we write $z^{\mathbf{m}}$ for the value assigned to $z$ by $\mathbf{m}$.
The paths defined by $\Psi_j^{\mathbf{n}}$ are not unique, which will not be an issue for a nondeterministic algorithm.
Whenever we say ``a path defined by $\Psi_j^{\mathbf{n}}$'', we mean some path defined by $\Psi_j^{\mathbf{n}}$.
We shall confuse ``a solution to $\mathcal{E}_{\mathbf{a}\xi\mathbf{b}}$'' to ``a path from $\mathbf{a}$ to $\mathbf{b}$ admitted by $\xi$''.

A solution $\mathbf{m}'$ to $\mathcal{E}_{\mathbf{a}\xi\mathbf{b}}$ is {\em over} another solution $\mathbf{m}$ to $\mathcal{E}_{\mathbf{a}\xi\mathbf{b}}$, notation $\mathbf{m}\le\mathbf{m}'$, if $z^{\mathbf{m}}\le z^{\mathbf{m}'}$ for every variable of $\mathcal{E}_{\mathbf{a}\xi\mathbf{b}}$.
If $\mathbf{m}\ne\mathbf{m}'$, the difference $\mathbf{m}'-\mathbf{m}$ is a solution to the {\em homogeneous characteristic system} $\mathcal{E}_{\mathbf{a}\xi\mathbf{b}}^0$ obtained from $\mathcal{E}_{\mathbf{a}\xi\mathbf{b}}$ by changing the constant terms to $0$.
The system $\mathcal{E}_{\mathbf{a}\xi\mathbf{b}}^0$ contains the equations
\begin{eqnarray}
\mathbf{x}_0 &=& \mathbf{0}, \label{hcs-1} \\
\mathbf{y}_k &=& \mathbf{0}, \label{hcs-2}
\end{eqnarray}
and for all $j\in[k]_0$, the equations
\begin{eqnarray}
\sum_{e=(p',\mathbf{t},q')\in T_j}\Psi_j(e)(\mathbf{1}_{q'}-\mathbf{1}_{p'}) &=& \mathbf{0}, \label{hcs-3} \\
\mathbf{x}_j+\Delta(\Psi_j) &=& \mathbf{y}_j, \label{hcs-4} \\
\mathbf{y}_{j-1} &=& \mathbf{x}_j. \label{hcs-5}
\end{eqnarray}
A solution to $\mathcal{E}_{\mathbf{a}\xi\mathbf{b}}^0$ is a {\em nontrivial} assignment of nonnegative integers to the variables rendering valid~(\ref{hcs-1}) through~(\ref{hcs-5}).

For complexity theoretical studies, the size of the minimal solutions to $\mathcal{E}_{\mathbf{a}\xi\mathbf{b}}$/$\mathcal{E}_{\mathbf{a}\xi\mathbf{b}}^0$ is highly relevant.
\begin{lemma}\label{2024-01-08-sln}
$\|\mathbf{m}\|_1=2^{\texttt{poly}(|\mathbf{a}\xi\mathbf{b}|)}$ for every $\mathbf{m}$ in $\mathcal{H}(\mathcal{E}_{\mathbf{a}\xi\mathbf{b}})\cup\mathcal{H}(\mathcal{E}_{\mathbf{a}\xi\mathbf{b}}^0)$.
\end{lemma}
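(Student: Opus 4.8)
The plan is to read both $\mathcal{E}_{\mathbf{a}\xi\mathbf{b}}$ and $\mathcal{E}_{\mathbf{a}\xi\mathbf{b}}^{0}$ as linear systems over the nonnegative integers in the format $A\mathbf{x}=\mathbf{r}$ of~(\ref{hes}) and then invoke Lemma~\ref{pottier-lemma} and Corollary~\ref{eq-sol}. The first step is to record the size parameters. Viewing $\mathcal{E}_{\mathbf{a}\xi\mathbf{b}}$ as a system in the unknowns given by the entries of $\mathbf{x}_j,\mathbf{y}_j$ for $j\in[k]_0$ together with the occurrence variables $\Psi_j(e)$ for $e\in T_j$, the number $k_{\mathrm{var}}$ of unknowns is at most $2D(k{+}1)+\sum_j|T_j|$; since the dimension $D$, the number $k{+}1$ of components, and the total number of edges are each bounded by $|\xi|$, we get $k_{\mathrm{var}}=\texttt{poly}(|\xi|)=\texttt{poly}(|\mathbf{a}\xi\mathbf{b}|)$. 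The equations~(\ref{lyjs-0}), (\ref{lyjs-1}), (\ref{lyjs-2}), (\ref{lyjs-3}), (\ref{lyjs-4}) likewise number $\texttt{poly}(|\xi|)$, so the rank $r$ of the coefficient matrix $A$ is $\texttt{poly}(|\mathbf{a}\xi\mathbf{b}|)$. The nonzero entries of $A$ are either $\pm1$ (the incidence coefficients $\mathbf{1}_{q'}-\mathbf{1}_{p'}$ in~(\ref{lyjs-0}), and the coefficients of $\mathbf{x}_j,\mathbf{y}_j$ in~(\ref{lyjs-1}), (\ref{lyjs-2}), (\ref{lyjs-3}), (\ref{lyjs-4})) or entries of transition displacements occurring in $\Delta(\Psi_j)$, so $\|A\|_{\infty}\le\max(\|T\|,1)\le 2^{O(|\xi|)}$; the constant vector $\mathbf{r}$ collects $\mathbf{a}$, $\mathbf{b}$ and the connecting displacements $\mathbf{t}_j$, hence $\|\mathbf{r}\|_{\infty}\le\|\mathbf{a}\|_{\infty}+\|\mathbf{b}\|_{\infty}+\|T\|\le 2^{O(|\mathbf{a}\xi\mathbf{b}|)}$.

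The only feature of $\mathcal{E}_{\mathbf{a}\xi\mathbf{b}}$ not literally of the form~(\ref{hes}) is the strict inequality~(\ref{all-edges}). I would dispose of it by the substitution $\Psi_j(e)\mapsto\Psi_j(e)+1$ for every $e\in T_j$ and every $j\in[k]_0$: this turns $\mathcal{E}_{\mathbf{a}\xi\mathbf{b}}$ into an equivalent system $\mathcal{E}'$ with the same coefficient matrix $A$, in which~(\ref{all-edges}) has become $\Psi_j(e)\ge0$ and may be dropped, at the cost of replacing $\mathbf{r}$ by some $\mathbf{r}'$ with $\|\mathbf{r}'\|_{\infty}\le\|\mathbf{r}\|_{\infty}+\|T\|\cdot|\xi|\le 2^{O(|\mathbf{a}\xi\mathbf{b}|)}$. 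A solution of $\mathcal{E}_{\mathbf{a}\xi\mathbf{b}}$ is exactly what is obtained from a solution of $\mathcal{E}'$ by adding $1$ to each occurrence variable, and the two operations are mutually inverse and order preserving between the solution sets; consequently every $\mathbf{m}\in\mathcal{H}(\mathcal{E}_{\mathbf{a}\xi\mathbf{b}})$ arises from some $\mathbf{m}'\in\mathcal{H}(\mathcal{E}')$ by such a shift, whence $\|\mathbf{m}\|_{1}\le\|\mathbf{m}'\|_{1}+|\xi|$. As for $\mathcal{E}_{\mathbf{a}\xi\mathbf{b}}^{0}$, the equations~(\ref{hcs-1})--(\ref{hcs-5}) carry no strict inequality and simply form a homogeneous system $A\mathbf{x}=\mathbf{0}$ with coefficient matrix $A$ of rank $r=\texttt{poly}(|\mathbf{a}\xi\mathbf{b}|)$ and $\|A\|_{\infty}\le 2^{O(|\xi|)}$.

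Now Corollary~\ref{eq-sol} applied to $\mathcal{E}'$ gives, for every $\mathbf{m}'\in\mathcal{H}(\mathcal{E}')$,
\[
\|\mathbf{m}'\|_{1}\le\bigl(1+k_{\mathrm{var}}\cdot\|A\|_{\infty}+\|\mathbf{r}'\|_{\infty}\bigr)^{r+1}\le\bigl(2^{O(|\mathbf{a}\xi\mathbf{b}|)}\bigr)^{\texttt{poly}(|\mathbf{a}\xi\mathbf{b}|)}=2^{\texttt{poly}(|\mathbf{a}\xi\mathbf{b}|)},
\]
and with $\|\mathbf{m}\|_{1}\le\|\mathbf{m}'\|_{1}+|\xi|$ this yields the bound for every $\mathbf{m}\in\mathcal{H}(\mathcal{E}_{\mathbf{a}\xi\mathbf{b}})$; Lemma~\ref{pottier-lemma} applied to $A\mathbf{x}=\mathbf{0}$ gives the same bound for every $\mathbf{n}\in\mathcal{H}(\mathcal{E}_{\mathbf{a}\xi\mathbf{b}}^{0})$, which completes the proof. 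I do not expect a genuine obstacle here: the substitution handling~(\ref{all-edges}) is routine, and the whole content is the bookkeeping of the first paragraph, namely checking that $k_{\mathrm{var}}$ and $r$ are polynomial while $\|A\|_{\infty}$ and $\|\mathbf{r}'\|_{\infty}$ are singly exponential in $|\mathbf{a}\xi\mathbf{b}|$, after which the exponentiation in Pottier's bound collapses into a single $2^{\texttt{poly}(|\mathbf{a}\xi\mathbf{b}|)}$.
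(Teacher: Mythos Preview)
Your proposal is correct and follows essentially the same approach as the paper: bound the number of variables, the number of equations (hence the rank), and the coefficient and constant norms, then invoke Corollary~\ref{eq-sol} (respectively Lemma~\ref{pottier-lemma} for the homogeneous system). The paper's proof is a terse three-line version of your argument; your treatment is more explicit, in particular your substitution $\Psi_j(e)\mapsto\Psi_j(e)+1$ to eliminate the strict inequality~(\ref{all-edges}) is the standard manoeuvre that the paper silently absorbs into its appeal to Corollary~\ref{eq-sol}.
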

\begin{proof}
There are $O(|\mathbf{a}\xi\mathbf{b}|)$ equations in $\mathcal{E}_{\mathbf{a}\xi\mathbf{b}}$, and in $\mathcal{E}_{\mathbf{a}\xi\mathbf{b}}^0$ as well, containing $O(|\mathbf{a}\xi\mathbf{b}|)$ variables.
The rank of the characteristic system is $O(|\mathbf{a}\xi\mathbf{b}|)$.
The norms $\|T\|$, $\|\mathbf{a}\|$ and $\|\mathbf{b}\|$ are bounded by $2^{|\mathbf{a}\xi\mathbf{b}|}$.
It follows from Corollary~\ref{eq-sol} that $\|\mathbf{m}\|_1=2^{\texttt{poly}(|\mathbf{a}\xi\mathbf{b}|)}$.
\end{proof}

\subsection{Linear CGS}
\label{Sec-Linear-CGS}

A directed graph is {\em circular} if it is strongly connected, contains at least one edge,  and every vertex of the graph has one incoming edge and one out-going edge.
It is {\em linear} if it is connected and every vertex has at most one incoming edge and at most one out-going edge.
Suppose $G=(Q,T)$ is a $D$-dimensional VASS.
If $G$ is circular and $q$ is a vertex of $G$, then $qGq$ is a {\em linear constraint graph} (LCG).
A {\em linear constraint graph sequence} (LCGS) is a CGS whose components are all LCGes.
An LCGS can be denoted by $\pi_0\O_1\pi_1\O_2\pi_2\ldots\pi_{k-1}\O_k\pi_k$ where $\pi_0\pi_1\pi_2\ldots\pi_{k-1}\pi_k$ is a {\em linear graph} and $\O_1,\O_2,\ldots,\O_k$ are {\em circular graphs}.
The following is an LCGS from $p$ to $q$ with three circular graphs.
\begin{center}
\includegraphics[scale=0.7]{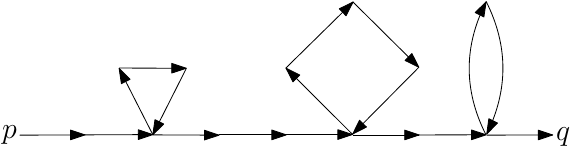}
\end{center}
An LCGS $\lambda=\pi_0\O_1\pi_1\O_2\pi_2\ldots\pi_{k-1}\O_k\pi_k$ is often defined by the edges of some VASS, where $\pi_0\pi_1\pi_2\ldots\pi_{k-1}\pi_k$ is a path admitted by the VASS, and $\O_1,\O_2,\ldots,\O_k$ are cycles admitted in the VASS.
In literature the linear CGSes are called linear path schemes~\cite{BEFGHLMT2021}.
We hope to see a linear path scheme as a graph because we will embed linear path schemes in CGSes. 

A traversal in an LCGS $\lambda=\pi_0\O_1\pi_1\O_2\pi_2\ldots\pi_{k-1}\O_k\pi_k$  is a walk admitted by $\lambda$ that passes all the edges of $\lambda$.
In other words, a traversal in an LCGS must pass every cycle at least once.
Unlike the general CGSes, the traversals in LCGSes can be completely characterized by linear systems.
To simplify the account, we introduce a map $\Delta^{\min}$ defined as follows:
For each path $\pi$ and each $i\in[D]$, let
\begin{eqnarray}\label{2024-01-04-minD}
\Delta^{\min}(\pi)(i) &=& \min_{j\in[|\pi|]}\left\{\Delta(\pi[1])(i)+\ldots+\Delta(\pi[j])(i)\right\}.
\end{eqnarray}
We need to check algebraically if there are traversals from $p(\mathbf{a})$ to $q(\mathbf{b})$ in $\mathbf{a}\lambda\mathbf{b}$.
The linear system for $\pi_0\O_1$ consists of three groups of inequations.
The first group is about $\pi_0$, it contains
    \begin{eqnarray*}
    \mathbf{a}+\Delta^{\min}(\pi_0) &\ge& \mathbf{0}.
    \end{eqnarray*}
The second group imposes the constraint that the first lap of $\O_1$ lies completely in the first orthant.
The inequation is
    \begin{eqnarray*}
    \mathbf{a}+\Delta(\pi_0) + \Delta^{\min}(\O_1) &\ge& \mathbf{0}.
    \end{eqnarray*}
The third group ensures that the last lap of $\O_1$ lies completely in the first orthant.
The inequations are
    \begin{eqnarray*}
    x_1&\ge& 1, \\
    \mathbf{a}+\Delta(\pi_0)+(x_1-1)\Delta(\O_1)  + \Delta^{\min}(\O_1) &\ge& \mathbf{0}.
    \end{eqnarray*}
Since both the first and the last laps defined by $\O_1$ are in the first orthant, the whole path $\pi_0\O_1^{x_1}$ lies in the first orthant.
Continuing in the above fashion we eventually get a linear system.
Let's call it the {\em witness system} of $\mathbf{a}\lambda\mathbf{b}$.
The system has $O(k)$ many variables.
By introducing no more than $O(k)$ variables, the inequations can be turned into equations.
By Corollary~\ref{eq-sol} one can find out the minimal solutions in $\texttt{poly}(|\mathbf{a}\lambda\mathbf{b}|)$ space.

The LCGSes play a crucial role in the studies of $\mathbb{VASS}^2$~\cite{BlondinFinkelGoellerHaaseMcKenzie2015}.
The key technical lemma of~\cite{BlondinFinkelGoellerHaaseMcKenzie2015} can be stated as follows.
\begin{lemma}\label{2023-12-27}
If there is a path from $o(\mathbf{r})$ to $o'(\mathbf{r}')$ admitted by a $2$-dimensional VASS $G=(Q,T)$, there is a path admitted by an LCGS $\lambda=\pi_0\O_1\pi_1\O_2\pi_2\ldots\pi_{k-1}\O_k\pi_k$ from $o(\mathbf{r})$ to $o'(\mathbf{r}')$ such that
\begin{itemize}
\item $\lambda$ refines $oGo'$,
\item $k=O(|Q|^2)$, and
\item $|\lambda|=(|Q|+\|T\|)^{O(1)}$.
\end{itemize}
Moreover if $o=o'$, then $k\le2$ and $\Delta(\O_1),\Delta(\O_2)\in Z_{\mathbf{r}'-\mathbf{r}}$.
\end{lemma}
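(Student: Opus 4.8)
The plan is to take the given path $\pi$ from $o(\mathbf{r})$ to $o'(\mathbf{r}')$, cut it along the strongly connected structure of the part of $G$ it actually uses, flatten each piece into a short sequence of iterated simple cycles, and reassemble.

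First I would restrict attention to the sub-VASS $H$ consisting of the transitions traversed by $\pi$. Since $\pi$ passes through the strongly connected components of $H$ in a single topological order $C_0,\dots,C_m$ with $m<|Q|$, using one transition to move from $C_{i-1}$ to $C_i$, and since the stretch of $\pi$ inside each $C_i$ is a path from where $\pi$ enters $C_i$ to where it leaves, it suffices to treat one strongly connected $C$ carrying a path from $u$ to $w$ of displacement $\mathbf{d}$. The $C_i$ are vertex-disjoint, so concatenating their per-component linear path schemes with the $m$ bridging transitions yields a genuine LCGS $\lambda$ with $\lambda\sqsubseteq oGo'$; once each component contributes $O(|Q|)$ simple cycles, each of length below $|Q|$ with labels bounded by $\|T\|$, the bounds $k=O(|Q|^2)$ and $|\lambda|=(|Q|+\|T\|)^{O(1)}$ follow, and the path witnessing $\lambda$ is obtained by reading the iteration counts off $\pi$.

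For one strongly connected $C$ I would decompose its $u$-to-$w$ stretch of $\pi$ into a simple path $\rho$ with $|\rho|<|Q|$ plus a multiset of simple cycles of $C$, so that $\mathbf{e}:=\mathbf{d}-\Delta(\rho)$ is a \emph{nonnegative} integer combination of simple-cycle displacements of $C$; it then remains to reproduce $\mathbf{e}$ by iterating a few simple cycles anchored along $\rho$. The argument splits on the dimension of the $\mathbb{Z}$-span $\mathcal{L}$ of the simple-cycle displacements of $C$. If $\dim\mathcal{L}=0$, then $\mathbf{e}=\mathbf{0}$ and $\rho$ already does the job. If $\dim\mathcal{L}=1$, say $\mathcal{L}=\mathbb{Z}\mathbf{c}$ with $\mathbf{c}$ the primitive generator, then $\mathbf{e}=n\mathbf{c}$ with $n\ge0$, and since the multipliers of $\mathbf{c}$ realised by simple cycles of $C$ have greatest common divisor $1$, a B\'ezout-and-Frobenius argument writes $n\mathbf{c}$ as a nonnegative integer combination of the displacements of polynomially many simple cycles. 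If $\dim\mathcal{L}=2$, then $\mathbf{e}$ lies in the planar cone generated by the simple-cycle displacements; sorting these by angle and invoking Carath\'eodory's theorem for cones in the plane puts $\mathbf{e}$ in the cone of at most two of them, after which a short planar argument (peeling off integer multiples of the two chosen displacements and clearing the remaining bounded integer residual with a few more simple cycles) realises $\mathbf{e}$ by $O(|Q|)$ simple cycles. Threading these cycles onto $\rho$ in the order dictated by the decomposition of $\pi$, re-routing inside $C$ so that each is anchored at a backbone vertex, gives the LCGS for the component.

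The step I expect to be the main obstacle is the $\dim\mathcal{L}=2$ case: turning a nonnegative combination of arbitrarily many simple-cycle displacements into one of only polynomially many while keeping every coefficient a nonnegative integer. Carath\'eodory over $\mathbb{Q}_{\ge}$ is immediate but only yields rational coefficients, so it is the integrality, together with the bookkeeping of anchoring the selected cycles onto the simple backbone, that has to be handled by hand; this is precisely where two-dimensionality is used essentially and where an elementary planar argument does the work previously done by the Farkas-Minkowski-Weyl Theorem. Finally, the ``moreover'' clause is the $o=o'$ case: then $\pi$ is a closed walk, so its support $H$ is connected and balanced, hence strongly connected and equal to a single component, and its decomposition into simple cycles already presents $\mathbf{r}'-\mathbf{r}=\Delta(\pi)$ as a nonnegative integer combination of simple-cycle displacements of $H$; the planar Carath\'eodory count then needs only two of them, and choosing the two extreme cycles of that decomposition — so that any cycle whose displacement leaves $Z_{\mathbf{r}'-\mathbf{r}}$ is absorbed by those staying inside $Z_{\mathbf{r}'-\mathbf{r}}$ — yields $\O_1,\O_2$ with $\Delta(\O_1),\Delta(\O_2)\in Z_{\mathbf{r}'-\mathbf{r}}$ and $k\le2$.
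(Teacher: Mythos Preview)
The paper does not prove this lemma at all: it is quoted as ``the key technical lemma of~\cite{BlondinFinkelGoellerHaaseMcKenzie2015}'' and used as a black box in Section~\ref{Sec-Geometrically-2-Dimensional-d-VASS}. So there is no paper proof to compare against; I can only comment on whether your sketch stands on its own.

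For the first part (general $o,o'$) your SCC-decomposition and cycle-space case analysis is a reasonable plan, and for a statement about \emph{paths} in $\mathbb{Z}^2$ it is largely workable, modulo the bookkeeping you already flag. Note, however, that your scheme would give $|\lambda|=|Q|^{O(1)}$ with no dependence on $\|T\|$; the presence of $\|T\|$ in the stated bound is a hint that the cycles $\O_i$ in the intended construction are not simple cycles of $G$ but possibly polynomial-in-$\|T\|$ concatenations of them.

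The real gap is in the ``moreover'' clause. Your argument for $\Delta(\O_1),\Delta(\O_2)\in Z_{\mathbf{r}'-\mathbf{r}}$ does not go through as written. The simple cycles appearing in the decomposition of a closed walk $\pi$ need not have displacements in $Z_{\Delta(\pi)}$: for instance with $\Delta(\pi)=(1,1)$ one can have a decomposition into simple cycles of displacements $(2,-1)$ and $(-1,2)$, neither of which lies in the first quadrant. Planar Carath\'eodory over $\mathbb{Q}_{\ge0}$ picks two generators of a cone containing $\Delta(\pi)$, but gives you no control over which orthant those generators sit in, and your phrase about cycles outside $Z_{\mathbf{r}'-\mathbf{r}}$ being ``absorbed'' by those inside does not correspond to a concrete operation. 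What is actually needed is to \emph{combine} simple cycles into two (not necessarily simple) cycles $\O_1,\O_2$ whose displacements land in $Z_{\mathbf{r}'-\mathbf{r}}$, to bound the number of constituent simple cycles by a polynomial in $|Q|$ and $\|T\|$ (this is where $\|T\|$ enters the size bound), and then to reconcile the nonzero displacement of the connecting backbone $\pi_0\pi_1\pi_2$ with the target $\mathbf{r}'-\mathbf{r}$. None of these three steps is addressed in your sketch, and the first one is exactly the substantive content of the cited result; without it the orthant condition, which is the part of the lemma the present paper actually relies on (see the monotonicity argument in Section~\ref{sec-CP}), is unproved.
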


The most remarkable thing about Lemma~\ref{2023-12-27} is that the size $(|Q|+\|T\|)^{O(1)}$ depends only on the VASS, it does {\em not} depend on $\mathbf{r},\mathbf{r}'$!
There could be infinitely many proper walks from $p(\mathbf{r})$ to $q(\mathbf{r}')$, and $\|\mathbf{r}\|,\|\mathbf{r}'\|$ could be extremely large.
The lemma points out however that there is a {\em finite} set $\mathcal{L}$ of LCGSes of size bounded by $(|Q|+\|T\|)^{O(1)}$ such that if there is a traversal from $p(\mathbf{r})$ to $q(\mathbf{r}')$, then there is a traversal from $p(\mathbf{r})$ to $q(\mathbf{r}')$ that is a witness to an LCGS in $\mathcal{L}$.
In the light of the fact that each element of $\mathcal{L}$ is a {\em linear refinement} of $oGo'$, Lemma~\ref{2023-12-27} can be interpreted as saying that every $2$-dimensional CG admits short paths.

\subsection{Fixed Entries and Bounded Edges}
\label{Sec-Fixed-Entries-and-Bounded-Edges}

Let $\mathbf{m}\in\mathcal{H}(\mathcal{E}_{\mathbf{a}\xi\mathbf{b}})$ be such that $\mathbf{m}\le\mathbf{w}$, where $\mathbf{w}$ is seen as a solution to $\mathcal{E}_{\mathbf{a}\xi\mathbf{b}}$.
Notice that $\mathbf{w}-\mathbf{m}$ is a solution to $\mathcal{E}_{\mathbf{a}\xi\mathbf{b}}^0$ if $\mathbf{w}\ne\mathbf{m}$.
If $\mathbf{w}-\mathbf{m}$ is a solution to $\mathcal{E}_{\mathbf{a}\xi\mathbf{b}}^0$, then $\mathbf{w}-\mathbf{m}$ must be over some $\mathbf{o}\in\mathcal{H}(\mathcal{E}_{\mathbf{a}\xi\mathbf{b}}^0)$.
By repeating the argument, one decomposes $\mathbf{w}$ as follows:
\begin{equation}\label{2024-01-31}
\mathbf{w} = \mathbf{m} + \sum_{\mathbf{o}\in\mathcal{H}(\mathcal{E}_{\mathbf{a}\xi\mathbf{b}}^0)}l_{\mathbf{o}}{\cdot}\mathbf{o},
\end{equation}
where $l_{\mathbf{o}}$ is a nonnegative integer for each $\mathbf{o}\in\mathcal{H}(\mathcal{E}_{\mathbf{a}\xi\mathbf{b}}^0)$.

The minimal solution sets $\mathcal{H}(\mathcal{E}_{\mathbf{a}\xi\mathbf{b}}),\mathcal{H}(\mathcal{E}_{\mathbf{a}\xi\mathbf{b}}^0)$ reveal a lot of properties about the traversals in $\mathbf{a}\xi\mathbf{b}$.
To define these properties, it is convenient to consider the solution
\begin{eqnarray}\label{2024-01-04}
\mathbf{O}_{\mathbf{a}\xi\mathbf{b}} &=& \sum_{\mathbf{o}\in\mathcal{H}(\mathcal{E}_{\mathbf{a}\xi\mathbf{b}}^0)}\mathbf{o}
\end{eqnarray}
that takes into account of all the minimal solutions to $\mathcal{E}_{\mathbf{a}\xi\mathbf{b}}^0$.
Since the minimal solutions are of polynomial size and there are an exponential number of them, the size of the solution $\mathbf{O}_{\mathbf{a}\xi\mathbf{b}}$ must be polynomial.
We would like to know how $\mathbf{w}$ differs from $\mathbf{m}$.
One difference is to do with the entry/exit locations of say $\xi_j$.
\begin{definition}
The entry $i\in[D]$ is {\em unbounded} at $p_j$, respectively $q_j$, if $\mathbf{x}_j^{\mathbf{O}_{\mathbf{a}\xi\mathbf{b}}}(i)>0$, respectively $\mathbf{y}_j^{\mathbf{O}_{\mathbf{a}\xi\mathbf{b}}}(i)>0$;
otherwise it is {\em bounded} at $p_j$, respectively $q_j$.
\end{definition}
It is clear from~(\ref{2024-01-31}) and (\ref{2024-01-04}) that $\mathbf{x}_j^{\mathbf{w}}(i)=\mathbf{x}_j^{\mathbf{m}}(i)$ whenever $i$ is bounded at $p_j$, and that $\mathbf{y}_j^{\mathbf{w}}(i)=\mathbf{y}_j^{\mathbf{m}}(i)$ whenever $i$ is bounded at $q_j$.
In more general terms, every witness to $\mathbf{a}\xi\mathbf{b}$ that is over $\mathbf{m}$ enters $p_j$ at a location whose $i$-th entry is the fixed value $\mathbf{x}_j^{\mathbf{m}}(i)$, and it leaves $q_j$ at a location whose $i$-th entry is the fixed value $\mathbf{y}_j^{\mathbf{m}}(i)$.
The solution $\mathbf{O}_{\mathbf{a}\xi\mathbf{b}}$ also reveals similar property for the edges.

\begin{definition}
An edge $e$ in $\xi_j$ is {\em unbounded} if \,$\Psi_j^{\mathbf{O}_{\mathbf{a}\xi\mathbf{b}}}(e)>0$; it is {\em bounded} otherwise.
The CG $\xi_j$ is {\em bounded} if it contains a bounded edge, it is {\em unbounded} otherwise.
The CGS $\xi$ is {\em bounded} if some $\xi_j$ is bounded, it is {\em unbounded} otherwise.
\end{definition}
Again it follows from~(\ref{2024-01-31}) and (\ref{2024-01-04}) that every witness to $\mathbf{a}\xi\mathbf{b}$ that is over $\mathbf{m}$ must pass a bounded edge $e$ for precisely $\Psi_j^{\mathbf{m}}(e)$ times.

\section{Refinement Operations}
\label{Sec-Refinement-Operations}

The central theme of our KLMST algorithm is to refine the input CGS $\xi$ repeatedly until it has constructed a refinement sequence
\begin{equation}\label{2024-10-28}
\xi_k\sqsubseteq\xi_{k-1}\sqsubseteq\ldots\sqsubseteq\xi_1\sqsubseteq\xi
\end{equation}
such that $\mathbf{a}\xi_k\mathbf{b}$ is either unsatisfactory or both satisfactory and good.
Intuitively $\mathbf{a}\xi_k\mathbf{b}$ is good if there is provably a witness to it.
If there is a witness to $\mathbf{a}\xi\mathbf{b}$, there is an execution of the nondeterministic algorithm that, upon receiving $\mathbf{a}\xi\mathbf{b}$, produces a CGS sequence like~(\ref{2024-10-28}) such that $\mathbf{a}\xi_k\mathbf{b}$ is satisfactory and good.
If there is no witness to $\mathbf{a}\xi\mathbf{b}$, every execution sequence produced by the algorithm ends up with a CGS say $\xi_k$ such that $\mathbf{a}\xi_k\mathbf{b}$ is unsatisfactory.
We shall present a {\em witness based refinement} approach that removes not only bad paths but also some good paths, so long as the status of reachability remains unchanged.
In the rest of the section, we fix the CGS $\xi$ as shown in Figure~\ref{CGS}, and fix a witness $\mathbf{w}$ to $\mathbf{a}\xi\mathbf{b}$ and a minimal solution $\mathbf{m}$ to $\mathcal{E}_{\mathbf{a}\xi\mathbf{b}}$ such that $\mathbf{m}\le\mathbf{w}$.
We shall use $\mathbf{w}$ and $\mathbf{m}$ to argue for the soundness of the refinement operations.

We will classify the refinements into two forms, simplifications and decompositions.
The former are local manipulations on graphs that do {\em not} increase graph size.
The latter are global operations on CGSes that may incur an {\em exponential blowup} in graph size.
Two simplification operations and two decomposition operations will be introduced.

\subsection{Eulerian Simplification}
\label{Sec-Eulerian-Simplification}

The witness $\mathbf{w}$ to $\mathbf{a}\xi\mathbf{b}$ may not pass all the edges in $\xi_j=p_jG_jq_j$.
In order to carry out the satisfiability testing, we must delete all the edges of $G_j$ that do not appear in $\mathbf{w}$.
After the deletion the new graph say $G_j^{-}$ might not be strongly connected.
One may think of $G_j^{-}$ as a DAG where each vertex is a {\em strongly connected component} (SCC).
There is a polynomial time algorithm that computes the class of the SCCes.
The SCCs are connected by the edges of $G$ that are not in any SCC.
Starting from the entry state $p_j$ there may be more than one way to go from an SCC to another and then stop at the exit state $q_j$.
Every linearization of the SCCes with the edges connecting them form a CGS $\xi_j'$ that admits $\mathbf{w}_j$.
Since $\xi_j'$ is a subgraph of $G_j^{-}$,  $|\xi_j'|\le|G_j^{-}|$.
The {\em Eulerian simplification} $\xi'$ of $\xi$ is obtained from $\xi$ by substituting $\xi_j'$ for $\xi_j$.
The next lemma is evident.
\begin{lemma}\label{Eulerian-Splf}
$\xi'\sqsubseteq\xi$, $|\xi'|\le|\xi|$, and $\mathbf{w}\in\mathcal{W}_{\mathbf{a}\xi'\mathbf{b}}$.
\end{lemma}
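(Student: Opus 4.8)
The plan is to prove the three assertions one component at a time, by unwinding the construction of the Eulerian simplification. Fix $j$ and write $\mathbf{w}_j$ for the (single, contiguous) sub-walk of $\mathbf{w}$ inside $\xi_j=p_jG_jq_j$; it runs from $p_j$ to $q_j$ and, by the definition of $G_j^{-}$, traverses every edge of $G_j^{-}$ at least once. First I would establish the structural fact that $\mathbf{w}_j$ visits the SCCes of $G_j^{-}$ in a forced linear order: once $\mathbf{w}_j$ leaves an SCC it can never return to it (that would create a cycle spanning two distinct SCCes), and since $\mathbf{w}_j$ uses every edge of $G_j^{-}$, in particular every edge between distinct SCCes, the SCC-DAG of $G_j^{-}$ is in fact a simple path $S_1\to S_2\to\cdots\to S_\ell$ with $p_j\in S_1$ and $q_j\in S_\ell$. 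Consequently $\mathbf{w}_j$ factors uniquely as
\[
\mathbf{w}_j \;=\; \rho_1\, f_1\, \rho_2\, f_2\, \cdots\, f_{\ell-1}\, \rho_\ell,
\]
where each $\rho_i$ is a walk lying entirely inside $S_i$ and each $f_i$ is the connecting edge of $G_j$ from the exit state of $S_i$ to the entry state of $S_{i+1}$. This is precisely the shape dictated by the CGS $\xi_j'$ obtained by stringing the $S_i$ together, so $\mathbf{w}_j$ is admitted by $\xi_j'$.

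From this the claim $\mathbf{w}\in\mathcal{W}_{\mathbf{a}\xi'\mathbf{b}}$ follows immediately: outside the components $\xi'$ is literally $\xi$, so glueing the factorizations above back into $\mathbf{w}$ shows that $\mathbf{w}$ is admitted by $\xi'$, and every local state along $\mathbf{w}$ is already a configuration, so $\mathbf{w}$ is a witness to $\mathbf{a}\xi'\mathbf{b}$. The size bound is equally direct: every vertex of $\xi_j'$ is a vertex of $G_j$, every internal edge of each $S_i$ is an edge of $G_j^{-}\subseteq G_j$, and every connecting edge $f_i$ is an edge of $G_j$; nothing is created, so $|\xi_j'|\le|\xi_j|$, and summing over $j$ (the connecting edges of $\xi$ and the untouched components contribute the same on both sides) gives $|\xi'|\le|\xi|$.

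For the refinement $\xi'\sqsubseteq\xi$, I would take an arbitrary path $\pi$ admitted by $\xi'$ and restrict it to the block that replaced $\xi_j$: there it reads $\sigma_1 f_1 \sigma_2 \cdots f_{\ell-1}\sigma_\ell$ with each $\sigma_i$ a path inside $S_i$, hence a path in $G_j$ from $p_j$ to $q_j$, hence admitted by $\xi_j$. Performing this replacement for every $j$ and leaving the rest of $\pi$ unchanged exhibits $\pi$ as a path admitted by $\xi$, as required.

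The one point I expect to require care, and would spell out in full, is the structural fact that underlies everything: that the SCCes of $G_j^{-}$ are forced to sit in a single line, with the SCC of $p_j$ first and the SCC of $q_j$ last. This is what turns the SCC decomposition of $G_j^{-}$ into a legitimate CGS, one whose connecting edges and entry/exit states match the required format, rather than an arbitrary acyclic gadget. Once that is in place the remaining verifications are pure bookkeeping.
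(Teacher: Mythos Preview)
Your argument is correct and in fact supplies the details the paper omits: the paper's own proof consists of the single sentence ``The next lemma is evident.'' Your structural observation that the condensation of $G_j^{-}$ is a simple path (because $G_j^{-}$ consists \emph{precisely} of the edges used by $\mathbf{w}_j$, so every inter-SCC edge is traversed exactly once, at the unique moment the walk leaves one SCC for the next) is sharper than the paper's exposition, which somewhat loosely says ``there may be more than one way to go from an SCC to another'' and speaks of choosing a linearization; under the hypothesis that $G_j^{-}$ is literally the edge set of $\mathbf{w}_j$, your argument shows there is exactly one. The one point you flagged for care---that the SCCs sit in a single line with $p_j$ in the first and $q_j$ in the last, and with a \emph{unique} connecting edge between consecutive ones---is indeed the only place any thought is needed; once you write that out (the uniqueness of each $f_i$ follows because $\mathbf{w}_j$ crosses from $S_i$ to $S_{i+1}$ exactly once), the three conclusions $\xi'\sqsubseteq\xi$, $|\xi'|\le|\xi|$, and $\mathbf{w}\in\mathcal{W}_{\mathbf{a}\xi'\mathbf{b}}$ drop out exactly as you describe.
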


\subsection{Orthogonal Simplification}
\label{Sec-Eulerian-Simplification}

Suppose $\o$ is a simple cycle in a $D$-dimensional VASS $G$.
Let
\[\bot_{\o} = \left\{i\in[D] \mid \Delta(\o)(i)=0\right\}.\]
By definition $\bot_{\o}$ is the set of the axes orthogonal to the displacement $\Delta(\o)$ of the simple cycle $\o$.
There may well be many simple cycles in $G$, hence the next definition.
\begin{definition}\label{our-rigidity}
$i\in[D]$ is {\em orthogonal} to $G$ if $\,i\in\bot_{\o}$ for every simple cycle $\o$ in $G$.
\end{definition}
\noindent Let $\bot_G$ be the set of all the indices in $[D]$ orthogonal to $G$.
When $G$ is clear from context, we will simply say that $i$ is orthogonal.
The nice thing about an orthogonal index $i$ is that all values in the $i$-th entry can be safely ignored.
Let's elaborate on this point with $\xi_j$.
Suppose $i$ is orthogonal to $G_j$.
There are two situations.
\begin{enumerate}
\item $i$ is bounded at $p_j$.
If $p_j(\mathbf{x}_j^{\mathbf{m}})\longrightarrow_{\mathbb{Z}^D}^*o(\mathbf{r})$ and $p_j(\mathbf{x}_j^{\mathbf{m}})\longrightarrow_{\mathbb{Z}^D}^*o(\mathbf{r}')$, then $\mathbf{r}(i)=\mathbf{r}'(i)$ must be valid.
To see this one only has to concatenate the two paths with a path from $o$ to $p_j$, which is possible due to the strong connectedness of $G_j$.
As a consequence of this observation, $i$ must also be bounded at $q_j$.
Another consequence is that a unique integer can be attached to a state of $G_j$, for instance the integer $\mathbf{r}(i)$ is attached to $o$.
This property is the reason for an alternative terminology for orthogonality, the so-called {\em rigidity}~\cite{LerouxSchmitz2019}.
At the algorithmic level, orthogonality ensures that the states attached with negative integers can be deleted from $G_j$ without sacrificing reachability.
When a state is deleted, so are the transitions attached to it, producing a new graph $G_j'$.
We then carry out a Eulerian simplification to $G_j'$, giving rise to a new CGS $\xi_j'$.
The {\em orthogonal simplification} $\xi’$ is obtained from $\xi$ by substituting $\xi_j'$ for $\xi_j$.

\item $i$ is unbounded at $p_j$.
For every simple cycle $\o$, the value $\|\Delta(\o)\|_{\infty}$ is bounded by $|Q|{\cdot}\|T\|$.
If $\mathbf{x}_j^{\mathbf{m}}(i)\le\mathbf{x}_j^{\mathbf{w}}(i)\le|Q|{\cdot}\|T\|$, the orthogonal simplification is the same as defined in the previous case.
If $\mathbf{x}_j^{\mathbf{w}}(i)>|Q|{\cdot}\|T\|$, then we impose the inequality $\mathbf{x}_j(i)>|Q|{\cdot}\|T\|$ in all future characteristic systems.
In this case we also say that the CGS $\xi$ imposed with the additional inequality
is an {\em orthogonal simplification} of $\xi$.
\end{enumerate}
After orthogonal simplification values in the $i$-th entry of $\xi_j'$ never drop below zero.
It is in this sense that the orthogonal entries in $\xi_j'$ can be ignored.
We have again the following evident fact.
\begin{lemma}\label{WeqWp}
$\xi'\sqsubseteq\xi$, $|\xi'|\le|\xi|$ and $\mathbf{w}\in\mathcal{W}_{\mathbf{a}\xi'\mathbf{b}}$.
\end{lemma}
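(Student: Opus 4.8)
The plan is to unfold the definition of the orthogonal simplification and argue case by case, the two ``state deletion'' cases being handled together and the ``recorded inequality'' case separately; the Eulerian step is in each case delegated to Lemma~\ref{Eulerian-Splf}. Consider first the cases in which $\xi'$ is obtained by replacing the component $\xi_j$ with $\xi_j'$, where $\xi_j'$ is produced by deleting from $G_j$ every state carrying a negative integer together with its incident transitions, giving a graph $G_j'$, and then performing a Eulerian simplification on $G_j'$. Since $G_j'$ is a subgraph of $G_j$ and every linearisation $\xi_j'$ output by the Eulerian simplification is in turn a subgraph of $G_j'$, any path admitted by $\xi'$ is admitted by $\xi$, so $\xi'\sqsubseteq\xi$; and $|\xi_j'|\le|G_j'|\le|\xi_j|$, whence $|\xi'|\le|\xi|$. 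For the Eulerian step the two inequalities are exactly Lemma~\ref{Eulerian-Splf}.

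It remains to check $\mathbf{w}\in\mathcal{W}_{\mathbf{a}\xi'\mathbf{b}}$, i.e.\ that the sub-walk $\mathbf{w}_j$ of $\mathbf{w}$ inside $\xi_j$ never visits a deleted state. Here I would use that $i$ is orthogonal to $G_j$: every cycle of $G_j$ decomposes, through its Parikh image, into simple cycles, each of which has vanishing $i$-th displacement, so once the $i$-th entry at $p_j$ is fixed the $i$-th coordinate of the location becomes a rigidly determined function of the current state --- this is precisely the well-definedness noted before the statement. Taking the $i$-th entry at $p_j$ to be the value $\mathbf{w}$ assigns there (which, when $i$ is bounded at $p_j$, is just $\mathbf{x}_j^{\mathbf{m}}(i)$ since then $\mathbf{x}_j^{\mathbf{w}}(i)=\mathbf{x}_j^{\mathbf{m}}(i)$), the integer attached to a state $o$ equals the $i$-th coordinate of the location of $o$ in $\mathbf{w}_j$; that coordinate is nonnegative because $\mathbf{w}_j$ is a walk. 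Hence $\mathbf{w}_j$ visits no deleted state, so it is a walk of $G_j'$ and is admitted by some linearisation $\xi_j'$; the witness-preservation clause of Lemma~\ref{Eulerian-Splf} transfers this through the Eulerian step and yields $\mathbf{w}\in\mathcal{W}_{\mathbf{a}\xi'\mathbf{b}}$.

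In the remaining case $\mathbf{x}_j^{\mathbf{w}}(i)>|Q|{\cdot}\|T\|$, the orthogonal simplification is $\xi$ itself with the inequality $\mathbf{x}_j(i)>|Q|{\cdot}\|T\|$ appended to all future characteristic systems; the underlying graph is unchanged, so $\xi'\sqsubseteq\xi$ and $|\xi'|=|\xi|$ are immediate, and $\mathbf{w}\in\mathcal{W}_{\mathbf{a}\xi'\mathbf{b}}$ because the hypothesis of this case is exactly that $\mathbf{w}$ meets the new inequality.

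The only real friction I anticipate is in the middle paragraph, namely being honest about which value of the $i$-th entry at $p_j$ drives the deletion: when $i$ is unbounded at $p_j$ the walk $\mathbf{w}$ may enter $\xi_j$ with a strictly larger $i$-entry than the minimal solution $\mathbf{m}$ does, so the deletion must be referred to the entry value of $\mathbf{w}$ (legitimate, since $\mathbf{w}$ is the witness the nondeterministic algorithm is tracking) rather than to that of $\mathbf{m}$. Once rigidity of the $i$-th coordinate is in hand, everything else is bookkeeping.
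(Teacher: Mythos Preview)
Your proposal is correct and is essentially a careful unpacking of what the paper leaves implicit: the paper gives no proof beyond calling the lemma an ``evident fact'' and remarking that after orthogonal simplification the $i$-th entry never drops below zero, which is precisely the rigidity argument you spell out. Your identification of the one subtle point --- that in the unbounded-but-small sub-case the state labels driving the deletion should be read off the entry value of $\mathbf{w}$ rather than of $\mathbf{m}$ --- is apt and is the only place where ``evident'' hides a choice; the paper's witness-based framing (Section~\ref{Sec-Refinement-Operations}, opening paragraph) legitimises exactly this reading. One minor quibble: in the recorded-inequality case, $\mathbf{w}\in\mathcal{W}_{\mathbf{a}\xi'\mathbf{b}}$ holds simply because the underlying graph is unchanged and $\mathcal{W}$ is defined by walks admitted by the graph, not by solutions of the characteristic system; that $\mathbf{w}$ meets the new inequality is relevant for satisfiability of $\mathcal{E}_{\mathbf{a}\xi'\mathbf{b}}$ but not for membership in $\mathcal{W}_{\mathbf{a}\xi'\mathbf{b}}$.
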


There is a well-known construction that turns an index to an orthogonal index.
The trade-off is an increase of the number of states.
Suppose $i$ is not orthogonal to $G_j$.
If there is a function $U(n)$ such that every configuration $p'(\mathbf{r}')$ the walk $\mathbf{w}_j$ passes by satisfies the inequality $\mathbf{r}'(i)\le U(|\mathbf{a}\xi\mathbf{b}|)$, then $\mathbf{w}_j$ is said to be {\em $U(|\mathbf{a}\xi\mathbf{b}|)$-bounded in the $i$-th entry}.
The locations of the walk $\mathbf{w}_j$ must fall in the region
\[
\mathbb{B}_i \;\stackrel{\rm def}{=}\; \underset{i-1\ \mathrm{times}}{\underbrace{\mathbb{N}\times\ldots\times\mathbb{N}}} \times[0,U(|\mathbf{a}\xi\mathbf{b}|)] \times \underset{d-i\ \mathrm{times}}{\underbrace{\mathbb{N}\times\ldots\times\mathbb{N}}}.
\]
We turn $\xi_j=p_j(\mathbf{x}_j)G_jq_j(\mathbf{y}_j)$ to $\xi_j'=(p,\mathbf{x}_j^{\mathbf{w}}(i))(\mathbf{x}_j)\,G_j^{-i}\,(q,\mathbf{y}_j^{\mathbf{w}}(i))(\mathbf{y}_j)$, where the new VASS $G_j^{-i}=\left(Q_j^{-i},T_j^{-i}\right)$ is defined by
\begin{itemize}
\item $Q_j^{-i} = \left\{(p,c) \mid p\in Q\ \mathrm{and}\ 0\le c\le U(|\mathbf{a}\xi\mathbf{b}|)\right\}$, and
\item $T_j^{-i} =  \left\{(p,c)\stackrel{\mathbf{t}}{\longrightarrow}(q,c\,{+\,}\mathbf{t}(i))
\ \left|    \begin{array}{l}
p\stackrel{\mathbf{t}}{\longrightarrow}q \in T,\\
0\le c\le U(|\mathbf{a}\xi\mathbf{b}|), \\
0\le c\,{+\,}\mathbf{t}(i)\le U(|\mathbf{a}\xi\mathbf{b}|)
\end{array}\right\}\right.$.
\end{itemize}
Let $\xi'$ be obtained from $\xi$ by substituting $\xi_j'$ for $\xi_j$.
The next two lemmas follow immediately from the construction.
\begin{lemma}\label{2024-09-04}
The index $i$ is orthogonal to $G_j^{-i}$.
\end{lemma}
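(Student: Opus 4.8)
The plan is to read the conclusion straight off the construction of $G_j^{-i}$: the whole point of adjoining the counter $c$ to the states is that $c$ faithfully tracks the $i$-th entry along any walk, so a closed walk cannot change it. Thus the lemma reduces to a telescoping identity, and there is essentially nothing to prove beyond unwinding the definition of $T_j^{-i}$.

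Concretely, I would fix an arbitrary simple cycle $\o$ in $G_j^{-i}$ and write it out as
\[
(r_0,c_0)\stackrel{\mathbf{t}_1}{\longrightarrow}(r_1,c_1)\stackrel{\mathbf{t}_2}{\longrightarrow}\ldots\stackrel{\mathbf{t}_m}{\longrightarrow}(r_m,c_m)=(r_0,c_0).
\]
By the defining condition of $T_j^{-i}$, each edge of this cycle satisfies $c_\ell=c_{\ell-1}+\mathbf{t}_\ell(i)$ for $\ell\in[m]$. Summing over $\ell$ and telescoping gives $\Delta(\o)(i)=\sum_{\ell\in[m]}\mathbf{t}_\ell(i)=\sum_{\ell\in[m]}(c_\ell-c_{\ell-1})=c_m-c_0=0$, where the last equality uses that the cycle is closed, $(r_m,c_m)=(r_0,c_0)$. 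Hence $i\in\bot_{\o}$, and since $\o$ ranges over all simple cycles of $G_j^{-i}$, Definition~\ref{our-rigidity} yields that $i$ is orthogonal to $G_j^{-i}$.

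I do not foresee a genuine obstacle. The only subtlety worth flagging is that the transition labels of $G_j^{-i}$ are still the original $D$-dimensional displacements $\mathbf{t}$, so the vectors $\mathbf{t}_\ell$ occurring in the telescoping identity are exactly those defining $\Delta(\o)$; it is the (deliberately redundant) bookkeeping of the $i$-th coordinate inside the state component that pins $\Delta(\o)(i)$ to $0$. One may also note that the identical computation works verbatim for an arbitrary, not necessarily simple, cycle, so restricting attention to simple cycles in Definition~\ref{our-rigidity} costs nothing here.
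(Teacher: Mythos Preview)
Your argument is correct and is exactly the unwinding that the paper has in mind: the paper does not give a proof of this lemma beyond remarking that it ``follow[s] immediately from the construction,'' and your telescoping computation is precisely that immediate verification. There is nothing to add.
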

\begin{lemma}\label{2024-09-15}
$\xi'\sqsubseteq\xi$, $|\xi'|=O(U(|\mathbf{a}\xi\mathbf{b}|))$ and  $\mathbf{w}\in\mathcal{W}_{\mathbf{a}\xi'\mathbf{b}}$.
\end{lemma}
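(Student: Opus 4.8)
The plan is to read all three assertions directly off the definition of $G_j^{-i}$, the main device being the projection $\rho$ that forgets the extra coordinate, $\rho(p,c)=p$. This $\rho$ maps vertices of $G_j^{-i}$ to vertices of $G_j$ and maps an edge $(p,c)\stackrel{\mathbf{t}}{\longrightarrow}(q,c+\mathbf{t}(i))$ of $T_j^{-i}$ to the edge $p\stackrel{\mathbf{t}}{\longrightarrow}q$ of $T_j$ --- the latter belongs to $T_j$ by the very definition of $T_j^{-i}$ --- and it preserves displacements.

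First I would establish $\xi'\sqsubseteq\xi$. Since $\rho$ preserves displacements, it sends any path inside $G_j^{-i}$ from $(p_j,\mathbf{x}_j^{\mathbf{w}}(i))$ to $(q_j,\mathbf{y}_j^{\mathbf{w}}(i))$ to a path inside $G_j$ from $p_j$ to $q_j$ with the same label. As $\xi'$ has the same connecting edges as $\xi$ and the same components except at position $j$, any path admitted by $\xi'$, whose $j$-th segment is a path of the above kind, is carried componentwise to a path admitted by $\xi$; hence $\xi'\sqsubseteq\xi$. If one insists that a CGS have strongly connected components, one follows the substitution of $\xi_j'$ for $\xi_j$ by an Eulerian simplification (Lemma~\ref{Eulerian-Splf}), which preserves all three properties claimed here.

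The size bound is a counting argument. Here $|Q_j^{-i}|=(U(|\mathbf{a}\xi\mathbf{b}|)+1)\cdot|Q_j|$ and $|T_j^{-i}|\le(U(|\mathbf{a}\xi\mathbf{b}|)+1)\cdot|T_j|$, and the binary code of a vertex or edge of $G_j^{-i}$ exceeds that of its $\rho$-image by only $O(\log U(|\mathbf{a}\xi\mathbf{b}|))$ bits; moreover the two new state constants $\mathbf{x}_j^{\mathbf{w}}(i)$ and $\mathbf{y}_j^{\mathbf{w}}(i)$ attached to the entry and exit states are themselves bounded by $U(|\mathbf{a}\xi\mathbf{b}|)$ by the boundedness hypothesis, hence coded in $O(\log U(|\mathbf{a}\xi\mathbf{b}|))$ bits. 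Summing over the graph and absorbing the factors polynomial in $|\mathbf{a}\xi\mathbf{b}|$ into the dominant term gives $|\xi'|=O(U(|\mathbf{a}\xi\mathbf{b}|))$.

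Finally, for $\mathbf{w}\in\mathcal{W}_{\mathbf{a}\xi'\mathbf{b}}$ I would lift the witness: replace each local state $p'(\mathbf{r}')$ of the sub-walk $\mathbf{w}_j$ by $(p',\mathbf{r}'(i))(\mathbf{r}')$, leaving the connecting edges and the sub-walks in the other components of $\mathbf{w}$ untouched. Because $\mathbf{w}_j$ enters $G_j$ at $\mathbf{x}_j^{\mathbf{w}}$ and every transition it uses changes the $i$-th coordinate by $\mathbf{t}(i)$, the second component of the state along this lift always equals the $i$-th coordinate of the current location, so the lift runs from $(p_j,\mathbf{x}_j^{\mathbf{w}}(i))(\mathbf{x}_j^{\mathbf{w}})$ to $(q_j,\mathbf{y}_j^{\mathbf{w}}(i))(\mathbf{y}_j^{\mathbf{w}})$ as required; and because $\mathbf{w}_j$ is a walk that is $U(|\mathbf{a}\xi\mathbf{b}|)$-bounded in the $i$-th entry, every location it visits has its $i$-th coordinate in $[0,U(|\mathbf{a}\xi\mathbf{b}|)]$, which is exactly the side condition $0\le c\le U(|\mathbf{a}\xi\mathbf{b}|)$ and $0\le c+\mathbf{t}(i)\le U(|\mathbf{a}\xi\mathbf{b}|)$ defining the edges of $T_j^{-i}$. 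Thus each step of the lift is a transition of $G_j^{-i}$ and every local state of the lift is a configuration, so the lifted walk is admitted by $\xi'$ and witnesses $\mathbf{a}\xi'\mathbf{b}$. I expect no genuine obstacle; the only point needing care is this last step, where the boundedness hypothesis must be combined with the observation that the new state coordinate faithfully records the $i$-th coordinate of the walk.
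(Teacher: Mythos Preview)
Your proposal is correct and is exactly the direct verification the paper has in mind: the paper's own proof is the single sentence ``The next two lemmas follow immediately from the construction,'' and your projection~$\rho$, counting argument, and lift of $\mathbf{w}_j$ via $p'(\mathbf{r}')\mapsto(p',\mathbf{r}'(i))(\mathbf{r}')$ are precisely how one unpacks that remark. Your aside about strong connectedness (and the optional appeal to Lemma~\ref{Eulerian-Splf}) is a detail the paper glosses over, and your treatment of the size bound---absorbing the $\texttt{poly}(|\mathbf{a}\xi\mathbf{b}|)$ factors into $U(|\mathbf{a}\xi\mathbf{b}|)$---matches the paper's stated convention of hiding such factors.
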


It should be pointed out that $i$ may not be bounded at $p_j$ and/or $q_j$.
That means that an algorithm that carries out the construction may have to guess $\mathbf{x}_j^{\mathbf{w}}(i)$ and/or $\mathbf{y}_j^{\mathbf{w}}(i)$.

\subsection{Algebraic Decomposition}
\label{Sec-Algebraic-Aspect-of-VASS-Reachability}

Suppose $\xi_j$ is unbounded.
Let $unbd(\xi_j)$ be the set of the unbounded edges in $\xi_j$.
The edges in $unbd(\xi_j)$ form {\em connected components} (CCes) of $G_j$.
The CCes are interconnected by the bounded edges.
Now $G_j$ is strongly connected.
If there was an unbounded edge that is not in any SCC composed of unbounded edges, every cycle containing the unbounded edge must contain a bounded edge.
That would imply that the unbounded edge was bounded.
We conclude that the edges in $unbd(\xi_j)$ form a disjoint set of SCCes.
The walk $\mathbf{w}_j$ inside $G_j$ is generally of the shape
\begin{equation}\label{2019-09-28}
p_j^0 G_j^0 q_j^0\stackrel{\mathbf{t}_j^1}{\longrightarrow}p_j^1 G_j^1 q_j^1\stackrel{\mathbf{t}_j^2}{\longrightarrow}\ldots \stackrel{\mathbf{t}_j^{k_j}}{\longrightarrow} p_j^{k_j}G_j^{k_j}q_j^{k_j},
\end{equation}
where $G_j^0,\ldots,G_j^{k_j}$ are multiple occurrences of the SCCes, $p_j^0=p_j$, $q_j^{k_j}=q_j$, and every bounded edge $e$ in $G_j$ appears in $\stackrel{\mathbf{t}_j^1}{\longrightarrow}\ldots \stackrel{\mathbf{t}_j^{k_j}}{\longrightarrow}$ for precisely $\Psi_j^{\mathbf{m}}(e)=\Psi_j^{\mathbf{w}}(e)$ times.
We say that~(\ref{2019-09-28}) is {\em decomposed} from $\xi_j$ {\em algebraically} or that~(\ref{2019-09-28}) is an {\em algebraic decomposition} of $\xi_j$.
Let $\xi'$ be obtained from $\xi$ by substituting~(\ref{2019-09-28}) for $\xi_j$.
A diagrammatic illustration of the decomposition is given in Figure~\ref{DCP}.
\begin{figure}[t]
\centering
\includegraphics[scale=1.1]{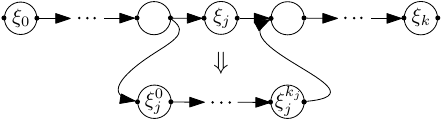}
\caption{Decomposition}  \label{DCP}
\end{figure}
The next lemma is assuring.
\begin{lemma}\label{my-decomposition-lemma}
$\xi'\sqsubseteq\xi$, $|\xi'|=2^{\texttt{poly}(|\mathbf{a}\xi\mathbf{b}|)}$ and  $\mathbf{w}\in\mathcal{W}_{\mathbf{a}\xi'\mathbf{b}}$.
\end{lemma}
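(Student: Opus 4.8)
The plan is to verify the three claims of Lemma~\ref{my-decomposition-lemma} in turn, using the structure of the algebraic decomposition~(\ref{2019-09-28}) together with the properties of the minimal solution $\mathbf{m}$ established in Section~\ref{Sec-Fixed-Entries-and-Bounded-Edges}. First I would check that $\mathbf{w}\in\mathcal{W}_{\mathbf{a}\xi'\mathbf{b}}$. By construction the walk $\mathbf{w}_j$ inside $G_j$ decomposes exactly as in~(\ref{2019-09-28}): every maximal stretch of $\mathbf{w}_j$ that uses only unbounded edges lies inside one of the SCCes formed by $unbd(\xi_j)$, and the bounded edges separate these stretches. Since~(\ref{2019-09-28}) is, as a graph, a CGS whose components are precisely those SCCes and whose connecting edges are the bounded edges traversed by $\mathbf{w}_j$, the walk $\mathbf{w}_j$ is admitted by it; the other components $\xi_i$ ($i\ne j$) are untouched, so $\mathbf{w}$ remains a witness to $\mathbf{a}\xi'\mathbf{b}$. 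For $\xi'\sqsubseteq\xi$ I would argue that any path admitted by~(\ref{2019-09-28}) is, read as an edge sequence in $G_j$, a path from $p_j$ to $q_j$ inside $G_j$ (each component is a subgraph of $G_j$ and each connecting edge is an edge of $G_j$), hence it is admitted by $\xi_j$; composing with the unchanged neighbours gives a path admitted by $\xi$.

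The key structural point to nail down is the claim, asserted in the paragraph before the lemma, that the unbounded edges of $G_j$ form a \emph{disjoint union of SCCes}. I would spell out the argument sketched in the text: if an unbounded edge $e$ lay outside every SCC composed solely of unbounded edges, then every cycle of $G_j$ through $e$ would have to use a bounded edge (otherwise $e$ and that cycle's unbounded edges would sit in a common SCC of $unbd(\xi_j)$). But then in the decomposition $\mathbf{w}=\mathbf{m}+\sum_{\mathbf{o}}l_{\mathbf{o}}\mathbf{o}$ of~(\ref{2024-01-31}), any minimal homogeneous solution $\mathbf{o}$ that gives $e$ a positive count corresponds to a union of cycles through $e$, each of which meets a bounded edge, forcing $\Psi_j^{\mathbf{o}}(f)>0$ for some bounded $f$ — contradicting boundedness of $f$ via~(\ref{2024-01-04}). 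So $e$ must in fact be bounded, and the claim follows. This also justifies that the bounded edges occur in $\mathbf{w}_j$ exactly $\Psi_j^{\mathbf{m}}(e)$ times, which pins down the length $k_j$ of the linearization.

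Finally, the size bound $|\xi'|=2^{\texttt{poly}(|\mathbf{a}\xi\mathbf{b}|)}$. The number $k_j$ of component occurrences in~(\ref{2019-09-28}) is the number of bounded-edge traversals in $\mathbf{w}_j$, which is $\sum_{e\text{ bounded in }\xi_j}\Psi_j^{\mathbf{m}}(e)\le\|\mathbf{m}\|_1$; by Lemma~\ref{2024-01-08-sln} this is $2^{\texttt{poly}(|\mathbf{a}\xi\mathbf{b}|)}$. Each component $G_j^l$ is a subgraph of $G_j$, hence of size at most $|\xi|$, and the connecting edges add only $O(k_j)$ more, so $|\xi'|\le|\xi|+k_j\cdot|\xi|=2^{\texttt{poly}(|\mathbf{a}\xi\mathbf{b}|)}$, which absorbs the other (unchanged) components of $\xi$. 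The main obstacle is really the structural lemma about $unbd(\xi_j)$ being a disjoint union of SCCes — more precisely, making the passage between ``an edge is unbounded'' ($\Psi_j^{\mathbf{O}_{\mathbf{a}\xi\mathbf{b}}}(e)>0$) and ``there is a homogeneous minimal solution routing flow through $e$ along cycles of $G_j$'' fully rigorous; once that is in hand the two size estimates and the refinement claim are routine bookkeeping.
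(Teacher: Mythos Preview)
Your proposal is correct and follows essentially the same approach as the paper. The paper's own proof is extremely terse---just two sentences noting that~(\ref{2019-09-28}) constrains more than $\xi_j$ (hence refinement) and that the size bound comes from Lemma~\ref{2024-01-08-sln}---while you have spelled out the details, including the structural argument about $unbd(\xi_j)$ forming a disjoint union of SCCes, which the paper places in the setup paragraph rather than the proof itself.
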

\begin{proof}
The CGS in~(\ref{2019-09-28}) is more of a constraint than $\xi_j$ in the sense that a path is admitted by $\xi_j$ whenever it is admitted by the CGS in~(\ref{2019-09-28}).
The size bound is due to Lemma~\ref{2024-01-08-sln}.
\end{proof}

We have seen how refinement can be done using the information about the bounded edges.
We see next how refinement can be done using the boundedness information about the entry/exit locations.

\subsection{Combinatorial Decomposition}
\label{Sec-Coverability}

Assume that $p(\mathbf{a})Gq(\mathbf{\mathbf{b}})$ is strongly connected and unbounded, and for simplicity assume that no $i\in[D]$ is orthogonal to $G$.
If there are infinitely many proper witnesses to $p(\mathbf{a})Gq(\mathbf{\mathbf{b}})$, there are arbitrarily long proper witnesses to $p(\mathbf{a})Gq(\mathbf{\mathbf{b}})$.
Suppose $\mathbf{w}$ is as long as necessary to validate the following argument.
For each positive integer $N_1>\|\mathbf{a}\|_{\infty}$, by the pigeon hole principle, there must be an initial walk $p(\mathbf{a})\longrightarrow_{\mathbb{N}^D}^{*}q_1(\mathbf{a}_1)$ of $\mathbf{w}$ such that $\mathbf{a}_1(i_1)\ge N_1$ for some $i_1\in[D]$, and the length of $p(\mathbf{a})\longrightarrow_{\mathbb{N}^D}^{*}q_1(\mathbf{a}_1)$ is bounded by $|Q|{\cdot}N_1^D$.
Now ignore the $i_1$-th entry temporarily, and let $N_2$ be another large number.
Starting from $q_1(\mathbf{a}_1)$, there is a subwalk $q_1(\mathbf{a}_1)\longrightarrow_{\mathbb{N}^D}^*q_2(\mathbf{a}_2)$ such that $\mathbf{a}_2(i_2)\ge N_2$ for some $i_2\in[D]\setminus\{i_1\}$.
The length of $q_1(\mathbf{a}_1)\longrightarrow_{\mathbb{N}^D}^*q_2(\mathbf{a}_2)$ is bounded by $|Q|{\cdot}N_2^{D-1}$.
We can choose $N_1$ so large that $N_1-N_2$ is still very large.
Continue the argument this way, we derive that there is an initial walk $p(\mathbf{a})\longrightarrow_{\mathbb{N}^D}^*q_d(\mathbf{a}_d)$ of $\mathbf{w}$ such that $\mathbf{a}_d\ge\mathbf{a}+|Q|{\cdot}\|T\|{\cdot}\mathbf{1}$.
Since $G$ is strongly connected and unbounded, there is a walk $q_d(\mathbf{a}_d)\longrightarrow_{\mathbb{N}^D}^*p(\mathbf{c})$ whose length is less than $|Q|$.
Thus
\[
p(\mathbf{a})\longrightarrow_{\mathbb{N}^D}^*p(\mathbf{c}) \text{ and } \mathbf{c}\ge\mathbf{a}+\mathbf{1},
\]
which is a kind of pumpability property.
It is proved by Rackoff~\cite{Rackoff1978} that in order to check if $\mathbf{c}$ exists that validates the pumping property, one only has to check the walks of length bounded by a double exponential function of $|p(\mathbf{a})Gq(\mathbf{b})|$, decidable in exponential space.
If the dimension $D$ is fixed, Rackoff's checking algorithm can be done in polynomial space.

Having motivated the pumpability property, suppose now that the component $\xi_j=p_jG_jq_j$ is unbounded.
The situation is a little more complex than what is described in the above since there are three types of entries at $p_j$.
For the bounded entries, we can indeed discuss its pumpability.
For the unbounded entries, there is no way to do that, but we can lift the values in these entries to so large that they will never drop below $0$ throughout $\xi_j$.
If $i$ is bounded at $p_j$ and is orthogonal to $G_j$, we can ignore the values in the $i$-th entry.
We are led to focus on the sets of indices defined as follows:
\begin{eqnarray*}
\imath(p_j) &=& \{i \mid i\ \text{is bounded at $p_j$ but is not orthogonal to } G_j\}, \\
\imath(q_j) &=& \{i \mid i\ \text{is bounded at $q_j$ but is not orthogonal to } G_j\}.
\end{eqnarray*}
Since we will talk about pumpability for a subset $I\subseteq [D]$ of the entries, we need to generalize the notion of walk.
An {\em $I$-walk} is a path $p_0(\mathbf{r}_0)\stackrel{\mathbf{t}_1}{\longrightarrow}p_1(\mathbf{r}_1)\stackrel{\mathbf{t}_2}{\longrightarrow}
\ldots\stackrel{\mathbf{t}_g}{\longrightarrow}p_g(\mathbf{r}_g)$ such that for every $g'\in[g]_{0}$ and every $i\in I$ the inequality $\mathbf{r}_{g'}(i)\ge0$ is true.
We write $\mathbf{r}\ge_I\mathbf{r}'$ if $\mathbf{r}(i)\ge\mathbf{r}'(i)$ for all $i\in I$.

By definition, $\mathbf{x}_j^{\mathbf{w}}(i)=\mathbf{x}_j^{\mathbf{m}}(i)$ for each $i\in\iota(p_j)$, and $\mathbf{y}_j^{\mathbf{w}}(i)=\mathbf{y}_j^{\mathbf{m}}(i)$ for each $i\in\iota(q_j)$.
The equalities imply that we really only have to consider the pumpability with regards to the minimal solution $\mathbf{m}$.
\begin{definition}
The CG $\mathbf{x}_j\xi_j\mathbf{y}_j$ is {\em forward $\mathbf{m}$-pumpable} if an $\imath(p_j)$-walk $p_j(\mathbf{x}_j^{\mathbf{m}})\longrightarrow^{*}p_j(\mathbf{r})$ exists such that $\mathbf{r}\ge_{\imath(p_j)}\mathbf{x}_j^{\mathbf{m}}+\mathbf{1}$.
It is {\em backward $\mathbf{m}$-pumpable} if there is an $\imath(q_j)$-walk
$q_j(\mathbf{r}')\longrightarrow^{*}q_j(\mathbf{y}_j^{\mathbf{m}})$ such that $\mathbf{r}'\ge_{\imath(q_j)}\mathbf{y}_j^{\mathbf{m}}+\mathbf{1}$.
The CG $\mathbf{x}_j\xi_j\mathbf{y}_j$ is {\em $\mathbf{m}$-pumpable} if it is both forward $\mathbf{m}$-pumpable and backward $\mathbf{m}$-pumpable.
\end{definition}

Rackoff's algorithm was proposed for checking coverability~\cite{Rackoff1978}.
Leroux and Schmitz applied the algorithm to test pumpability~\cite{LerouxSchmitz2015}.
We state the key technical lemma on which the algorithm is based, and provide the proof for the sake of completeness.

\begin{lemma}\label{2019-05-09}
Let $p(\mathbf{r}_0)=p^0(\mathbf{r}_0)\longrightarrow p^1(\mathbf{r}_1)\longrightarrow \ldots\longrightarrow p^L(\mathbf{r}_L)$ be an $I$-walk admitted in $G$, where $G=(Q,T)$ is strongly connected.
Let $c=|I|$ and $A>2^{|G|+1}$.
\begin{itemize}
\item [(I)]
If for every $i\,{\in}\,I$, there is some $L'\,{\in}[L]_{0}$ such that $\mathbf{r}_{L'}(i)\ge A^{1+c^c}$, then there is some $L''\,{\in}[L]_{0}$ such that $\mathbf{r}_{L''}(i)> A-2^{|G|}$ for every $i\in I$ and $L''<A^{(c+1)^{(c+1)}}$.
\item [(II)]
Under the condition of (I), there is an $I$-walk $p(\mathbf{a})\stackrel{\sigma}{\longrightarrow}p(\mathbf{r})$ such that $\mathbf{r}(i)>A-2^{|G|+1}$ for every $i\in I$ and that $|\sigma|<A^{(c+1)^{(c+1)}}$.
\end{itemize}
\end{lemma}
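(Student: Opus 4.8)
Throughout write $B=2^{|G|}$, and recall that $|Q|\le B$, $\|T\|\le B$, and (from the convention that $|G|$ is the binary code length) $|Q|\cdot\|T\|\le B$; consequently, along any path of length $\ell$ every coordinate changes by less than $\ell B$. The hypothesis is $A>2B$. The plan is to prove (I) by induction on $c=|I|$, in the style of Rackoff's coverability argument, and then to derive (II) by routing back to $p$.

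For (I), the base case $c=0$ is trivial: the constraint on $I$-coordinates is vacuous, so $L''=0<A=A^{1^1}$ works. For the inductive step ($c\ge1$) I would let $t_1$ be the \emph{first} position at which \emph{some} $I$-coordinate, say $i_1$, reaches the value $A^{1+c^c}$. By the choice of $t_1$, the $I$-coordinates of every configuration strictly before $t_1$ lie in $\{0,\dots,A^{1+c^c}-1\}$, so the pairs (graph state, restriction-to-$I$ of the location) range over a set of size at most $|Q|\cdot A^{c(1+c^c)}$. Hence whenever $t_1$ exceeds this bound two such pairs coincide before $t_1$, and the cycle they delimit has displacement $\mathbf{0}$ on every coordinate of $I$; excising it keeps the object an $I$-walk, leaves the $I$-coordinates of all surviving configurations from $t_1$ onward unchanged, and therefore preserves both the value $\ge A^{1+c^c}$ of $i_1$ at the (new) position $t_1$ and the fact that every other $i\in I$ still reaches $A^{1+c^c}$ later (its first such time being $\ge t_1$ by minimality). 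Iterating, one may assume $t_1\le|Q|\cdot A^{c(1+c^c)}\le B\cdot A^{c+c^{c+1}}<A^{1+c+c^{c+1}}$.

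Now I would apply the induction hypothesis with $I':=I\setminus\{i_1\}$ to the sub-$I$-walk starting at $p^{t_1}(\mathbf{r}_{t_1})$: every $i\in I'$ reaches $A^{1+c^c}\ge A^{1+(c-1)^{(c-1)}}$ along it, so there is an $I'$-walk from $p^{t_1}(\mathbf{r}_{t_1})$ of length below $A^{c^c}$ ending at a location whose $I'$-coordinates all exceed $A-B$. Along this walk $i_1$ drops by less than $\|T\|\cdot A^{c^c}\le B\cdot A^{c^c}<\tfrac12 A^{1+c^c}\le A^{1+c^c}-A$, where the middle inequality is exactly $2B<A$; hence $i_1$ stays positive (so the walk really is an $I$-walk) and in fact ends above $A$. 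Prepending the compressed prefix $W[0,t_1]$ produces an $I$-walk from $p^0(\mathbf{r}_0)$ of total length $L''<A^{1+c+c^{c+1}}+A^{c^c}<2A^{1+c+c^{c+1}}\le A^{2+c+c^{c+1}}\le A^{(c+1)^{(c+1)}}$ — the last exponent inequality $2+c+c^{c+1}\le(c+1)^{c+1}$ being a routine check (with equality at $c=1$) — whose final location has all $I$-coordinates above $A-2^{|G|}$. This gives (I). For (II), strong connectedness supplies a simple path of length $<|Q|$ from that endpoint back to $p$; appending it makes the walk close up at $p$, lowers each $I$-coordinate by less than $|Q|\cdot\|T\|\le 2^{|G|}$ (so it stays above $A-2^{|G|+1}>0$, preserving the $I$-walk property), and keeps the total length below $A^{(c+1)^{(c+1)}}$ since there is ample slack.

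The main obstacle is the bookkeeping in the compression step: one must choose $t_1$ as the \emph{first} time any $I$-coordinate crosses $A^{1+c^c}$ precisely so that the excised cycles are null on all of $I$ and so that the downstream ``reaches $A^{1+c^c}$'' hypotheses for the coordinates in $I\setminus\{i_1\}$ are not destroyed; and one must carry out the exponent arithmetic so that $|Q|\cdot A^{c(1+c^c)}$ together with the recursive bound $A^{c^c}$ still fits under $A^{(c+1)^{(c+1)}}$, using $A>2^{|G|+1}$ both there and to guarantee that $i_1$'s head start of $A^{1+c^c}$ survives the recursive sub-walk. (The statement's ``$L''\in[L]_0$'' is then read as the length of the constructed compressed walk.)
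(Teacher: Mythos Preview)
Your proposal is correct and follows essentially the same Rackoff-style induction on $|I|$ as the paper. The arithmetic is identical (the paper's $1+(1+c^c)c$ is your $1+c+c^{c+1}$), and part~(II) is derived in the same way by routing back along a short path and losing at most another $2^{|G|}$.

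Two minor differences are worth noting. First, the paper abbreviates your cycle-excision step by declaring ``we consider only proper walks'' and invoking pigeonhole; your explicit argument that repeated (state, $I$-restriction) pairs yield excisable cycles null on all of $I$ is in fact the cleaner justification here, since literal properness of the full configuration does not by itself give the $|Q|\cdot A^{c(1+c^c)}$ bound. Second, the paper lets $I'$ consist of \emph{all} coordinates still below $A^{1+c^c}$ at step $m{+}1$, so it may drop several indices at once, whereas you always drop exactly one; this changes nothing in the bounds. Your closing remark that $L''$ should be read as the length of the constructed compressed $I$-walk (rather than a literal index into the given walk) is the right reading, and is implicitly what the paper does as well.
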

\begin{proof}
(I) We consider only proper walks.
If $c=0$, there is nothing to prove; otherwise let $m\in[L-1]$ be the largest number rendering true the proposition $\forall m'{\in}[m].\forall i{\in}I.\mathbf{r}_{m'}(i)<A^{1+c^c}$.
By the pigeonhole principle the $I$-walk $p(\mathbf{r}_0)\longrightarrow^{*}p^{m}(\mathbf{r}_{m})$ is bounded by
\[|Q|{\cdot}\underset{c}{\underbrace{A^{1+c^c}\cdots A^{1+c^c}}} \,\le\, 2^{|G|}{\cdot}A^{(1+c^c)c} < A^{1+(1+c^c)c}\]
in length.
Let $c'=|I'|$, where $i=\left\{i\,{\in}\,I\mid \mathbf{r}_{m+1}(i)<A^{1+c^c}\right\}$.
Thus
\begin{equation}\label{2024-09-05}
\forall i\in I\setminus I'.\mathbf{r}_{m+1}(i)\ge A^{1+c^c}
\end{equation}
Now $c'<c$ by the maximality of $m$.
By induction we get for some $L''\in[L]$ the $I'$-walk $p^{m+1}(\mathbf{r}_{m+1})\longrightarrow^{*} p^{L''}(\mathbf{r}_{L''})$ whose length is bounded by $A^{(c'+1)^{(c'+1)}}$ and that $\mathbf{r}_{L''}(i)>A-2^{|G|}$ for all $i\in I'$.
Thus $p(\mathbf{r}_0)\longrightarrow^{*}p^{m+1}(\mathbf{r}_{m+1})\longrightarrow^{*}p^{L''}(\mathbf{r}_{L''})$ is bounded in length by
\begin{equation}\label{2023-08-22}
A^{1+(1+c^c)c} + A^{(c'+1)^{(c'+1)}} \le A^{1+(1+c^c)c} + A^{c^c}<A^{(c+1)^{(c+1)}}.
\end{equation}
And for all $i\in I\setminus I'$,
\[
\mathbf{r}_{L''}(i) \ >\ \mathbf{r}_{m+1}(i)-\|T\| {\cdot}A^{(c'+1)^{(c'+1)}}
 \ \stackrel{(\ref{2024-09-05})}{>}\ (A-2^{|G|})A^{c^c}
 \ >\ A-2^{|G|}.
\]
Therefore $p(\mathbf{r}_0)\longrightarrow^{*}p^{m+1}(\mathbf{r}_{m+1})\longrightarrow^{*}p^{L''}(\mathbf{r}_{L''})$ is an $I$-walk.

(II) Since $G$ is strongly connected, a walk $p^{L''}(\mathbf{r}_{L''})\stackrel{\pi}{\longrightarrow}p(\mathbf{r})$ exists rendering true the inequality $|\pi|<|Q|$.
Moreover for all $i\in I$,
\[
\mathbf{r}(i) \ >\ \mathbf{r}_{L''}(i)-\sum_{p'\stackrel{\mathbf{t}}{\longrightarrow}q'\in T}\|\mathbf{t}\|_{\infty} \
>\ A-2^{|G|}-2^{|G|} \
=\ A-2^{|G|+1}.
\]
It is easily seen that~(\ref{2023-08-22}) can be tightened to
\[
A^{1+(1+c^c)c} + A^{(c'+1)^{(c'+1)}}+|Q| < A^{1+(1+c^c)c} + A^{c^c}+A <A^{(c+1)^{(c+1)}}.
\]
We conclude that $p(\mathbf{r}_0)\longrightarrow^{*}p^{m+1}(\mathbf{r}_{m+1})\longrightarrow^{*}p^{L''}(\mathbf{r}_{L''})\longrightarrow^{*}p(\mathbf{r})$
is an $I$-walk whose length is bounded by $A^{(c+1)^{(c+1)}}$.
\end{proof}

By our assumption,  $\mathbf{m}\le\mathbf{w}\in\mathcal{W}_{\mathbf{a}\xi\mathbf{b}}$. If $\mathbf{x}_j\xi_j\mathbf{y}_j$ is forward $\mathbf{m}$-pumpable, set $A=\|\mathbf{x}_j^{\mathbf{m}}\|_{\infty}+2^{|G_j|+1}+1$ and set $B=A^{1+D^D}=2^{\texttt{poly}(|\mathbf{a}\xi\mathbf{b}|)}$, where the second equality is due to Lemma~\ref{2024-01-08-sln}.
According to the (II) of Lemma~\ref{2019-05-09}, there is a circular walk $p_j(\mathbf{x}_j^{\mathbf{m}})\longrightarrow^{*}p_j(\mathbf{c})$ such that $\mathbf{c}(i)\ge\mathbf{x}_j^{\mathbf{m}}(i)+1$ for all $i\in\iota(p_j)$.
If $\mathbf{x}_j\xi_j\mathbf{y}_j$ is not forward $\mathbf{m}$-pumpable, then by Lemma~\ref{2019-05-09}, there is some $i\in\imath(p_j)$ such that the values of the $i$-th entry of the locations throughout $\mathbf{w}_j$ are all in $[0,B]$.
For the same reason, if $\mathbf{x}_j\xi_j\mathbf{y}_j$ is not backward $\mathbf{m}$-pumpable, then there is some $i\in\imath(q_j)$ such that the values of the $i$-th entry of the locations throughout $\mathbf{w}_j$ are in $[0,B]$.
So if $\mathbf{x}_j\xi_j\mathbf{y}_j$ is not $\mathbf{m}$-pumpable, by using the construction defined in the last paragraph of Section~\ref{Sec-Eulerian-Simplification}, we can turn $\xi_j=p_jG_jq_j$ to some $\xi_j^{-i}$ to which $i$ is orthogonal by Lemma~\ref{2024-09-04}.
The size of $\xi_j^{-i}$ is bounded by $B=2^{\texttt{poly}(|\mathbf{a}\xi\mathbf{b}|)}$.
By applying the orthogonal simplification to $\xi_j^{-i}$, we can turn it to a CGS $\xi_j'$ of the form~(\ref{2019-09-28}) such that $|\xi_j'|\le|\xi_j^{-i}|$.
The diagram illustration is still the one in Figure~\ref{DCP}.
Let $\xi'$ be obtained from $\xi$ by substituting $\xi_j'$ for $\xi_j$.
We say that $\xi'$ is {\em decomposed} from $\xi$ {\em combinatorially}, or that $\xi'$ is a {\em combinatorial decomposition} of $\xi$.
The next lemma follows from Lemma~\ref{2024-09-15}.
\begin{lemma}\label{my-reduction-lemma}
$\xi'\sqsubseteq\xi$, $|\xi'|=2^{\texttt{poly}(|\mathbf{a}\xi\mathbf{b}|)}$ and  $\mathbf{w}\in\mathcal{W}_{\mathbf{a}\xi'\mathbf{b}}$.
\end{lemma}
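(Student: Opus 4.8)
The plan is to read off the three assertions from the two-step construction performed just above the statement: $\xi'$ is obtained from $\xi$ by first replacing the chosen component $\xi_j = p_jG_jq_j$ with its $i$-th-entry unfolding $\xi_j^{-i}$, and then orthogonally simplifying the result to a CGS $\xi_j'$ of the form~(\ref{2019-09-28}). Writing $\xi^{-i}$ for $\xi$ with $\xi_j$ replaced by $\xi_j^{-i}$, the claims then follow by applying Lemma~\ref{2024-09-15} to the first step, Lemma~\ref{WeqWp} to the second step, and transitivity of $\sqsubseteq$; the key quantitative input, that the unfolding can be taken of size $B = 2^{\texttt{poly}(|\mathbf{a}\xi\mathbf{b}|)}$, comes from Lemma~\ref{2019-05-09}.

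For the first step I would spell out the choice of $i$ and $B$. Since $\mathbf{x}_j\xi_j\mathbf{y}_j$ is not $\mathbf{m}$-pumpable, it is not forward $\mathbf{m}$-pumpable or it is not backward $\mathbf{m}$-pumpable; in the former case Lemma~\ref{2019-05-09}(II), read contrapositively with $I = \imath(p_j)$, supplies an index $i \in \imath(p_j)$ whose value along every location of $\mathbf{w}_j$ stays in $[0,B]$, and symmetrically with $I = \imath(q_j)$ in the latter case, where $B = A^{1+D^D}$ for $A = \|\mathbf{x}_j^{\mathbf{m}}\|_\infty + 2^{|G_j|+1} + 1$. By Lemma~\ref{2024-01-08-sln}, $\|\mathbf{x}_j^{\mathbf{m}}\|_\infty \le \|\mathbf{m}\|_1 = 2^{\texttt{poly}(|\mathbf{a}\xi\mathbf{b}|)}$, hence $B = 2^{\texttt{poly}(|\mathbf{a}\xi\mathbf{b}|)}$. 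So $\mathbf{w}_j$ is $B$-bounded in the $i$-th entry, the unfolding of Section~\ref{Sec-Eulerian-Simplification} applies with $U(|\mathbf{a}\xi\mathbf{b}|) = B$, and Lemma~\ref{2024-09-15} gives $\xi^{-i} \sqsubseteq \xi$, $|\xi^{-i}| = 2^{\texttt{poly}(|\mathbf{a}\xi\mathbf{b}|)}$ and $\mathbf{w} \in \mathcal{W}_{\mathbf{a}\xi^{-i}\mathbf{b}}$; moreover $i$ is orthogonal to $G_j^{-i}$ by Lemma~\ref{2024-09-04}.

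For the second step, $i$ being orthogonal to $G_j^{-i}$ allows the orthogonal simplification of Section~\ref{Sec-Eulerian-Simplification} --- delete the states carrying a negative $i$-value, then run an Eulerian simplification to linearize the surviving SCCes --- yielding $\xi'$ with $\xi_j'$ of the form~(\ref{2019-09-28}) in place of $\xi_j^{-i}$. Since $\mathbf{w}_j$ visits only configurations, it touches no deleted state and factors through the SCCes, so Lemma~\ref{WeqWp} (which itself relies on Lemma~\ref{Eulerian-Splf}) gives $\xi' \sqsubseteq \xi^{-i}$, $|\xi'| \le |\xi^{-i}|$ and $\mathbf{w} \in \mathcal{W}_{\mathbf{a}\xi'\mathbf{b}}$. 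Combining this with the first step and using transitivity of $\sqsubseteq$ yields $\xi' \sqsubseteq \xi$, $|\xi'| = 2^{\texttt{poly}(|\mathbf{a}\xi\mathbf{b}|)}$, and $\mathbf{w} \in \mathcal{W}_{\mathbf{a}\xi'\mathbf{b}}$, as required.

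I expect the main obstacle to be the first step: producing the index $i$ and certifying that the $i$-th coordinate of $\mathbf{w}_j$ never exceeds the singly exponential bound $B$. This is precisely where the quantitative Rackoff-style estimate of Lemma~\ref{2019-05-09} is needed, in the forward direction via $\imath(p_j)$ or the backward direction via $\imath(q_j)$, together with the polynomial bound on $\|\mathbf{x}_j^{\mathbf{m}}\|_\infty$ coming from Lemma~\ref{2024-01-08-sln}. Everything after $B$ is fixed is routine: the three invariants are propagated mechanically through the unfolding and the orthogonal simplification by Lemmas~\ref{2024-09-15} and~\ref{WeqWp}.
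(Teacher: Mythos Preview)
Your proposal is correct and follows precisely the approach of the paper, which disposes of the lemma in a single sentence (``follows from Lemma~\ref{2024-09-15}''); you have simply unpacked that sentence by making explicit the intermediate CGS $\xi^{-i}$, the role of Lemma~\ref{WeqWp} for the orthogonal-simplification step, and the quantitative input $B = 2^{\texttt{poly}(|\mathbf{a}\xi\mathbf{b}|)}$ obtained from Lemma~\ref{2019-05-09} together with Lemma~\ref{2024-01-08-sln}.
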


\subsection{Normal CGS}
\label{Sec-Normal-CGS}

Do the four refinement operations suffice for an algorithm that checks VASS reachability?
The answer is positive.
If no refinement can be applied to a CGS, it is a good CGS.

A good CGS $\mathbf{a}\Xi\mathbf{b}$ must be satisfiable, meaning that it admits at least one path; moreover every path it admits can be expanded to a walk.
For a good CGS, satisfiability guarantees the existence of a witness.
Recall that in Section~\ref{Sec-Linear-CGS} it is shown that every solution to a witness system of an LCGS defines a witness to the LCGS.
This is taken into account in the following definition.
\begin{definition}
The reachability instance $\mathbf{a}\Xi\mathbf{b}$ is {\em normal} if it is satisfiable and there is some $\mathbf{n}\in\mathcal{H}(\mathcal{E}_{\mathbf{a}\Xi\mathbf{b}})$ such that for all $j\in[k]_0$, either $\Xi_j$ is an LCGS or $\Xi_j$ is unbounded and $\mathbf{x}_j\Xi_j\mathbf{y}_j$ is $\mathbf{n}$-pumpable.
\end{definition}
A standard result in the theory of VASS reachability is that normal reachability instance has short witnesses.
Even though our definition of normality is slightly different, the proof of this standard result remains unchanged.
\begin{figure}[t]
\centering
\includegraphics[scale=0.8]{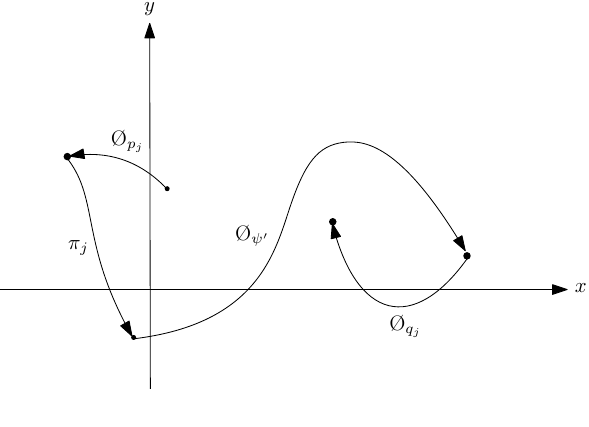}
\caption{Constructing Witness for Normal CGS.}  \label{cw4ncgs}
\end{figure}
\begin{proposition}\label{2024-11-01}
If $\mathbf{a}\Xi\mathbf{b}$ is a normal CGS, there is a witness to $\mathbf{a}\Xi\mathbf{b}$ whose length is bounded by $2^{\texttt{poly}(|\mathbf{a}\Xi\mathbf{b}|)}$.
\end{proposition}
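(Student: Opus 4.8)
The plan is to turn the normality witness $\mathbf{n}$ into an explicit walk, processing the components $\Xi_{0},\ldots,\Xi_{k}$ from left to right and, inside each one, \emph{lifting} the locations well above zero before running the stretch of walk prescribed by $\mathbf{n}$ and \emph{lowering} them back to the values $\mathbf{n}$ dictates. First I would fix $\mathbf{n}\in\mathcal{H}(\mathcal{E}_{\mathbf{a}\Xi\mathbf{b}})$ as in the definition of normality; by Lemma~\ref{2024-01-08-sln} each of $\|\mathbf{n}\|_{1}$, $\|\mathbf{x}_{j}^{\mathbf{n}}\|_{1}$, $\|\mathbf{y}_{j}^{\mathbf{n}}\|_{1}$, $\|\Psi_{j}^{\mathbf{n}}\|_{1}$ is $2^{\texttt{poly}(|\mathbf{a}\Xi\mathbf{b}|)}$. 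Since the characteristic system contains the Euler condition, $\Psi_{j}^{\mathbf{n}}$ is a circulation on $G_{j}$ positive on every edge, so it is realised by a path $\sigma_{j}$ in $G_{j}$ from $p_{j}$ to $q_{j}$ with $\Im(\sigma_{j})=\Psi_{j}^{\mathbf{n}}$, $\Delta(\sigma_{j})=\mathbf{y}_{j}^{\mathbf{n}}-\mathbf{x}_{j}^{\mathbf{n}}$ and $|\sigma_{j}|=2^{\texttt{poly}(|\mathbf{a}\Xi\mathbf{b}|)}$, and by~(\ref{lyjs-2})--(\ref{lyjs-4}) the $\mathbf{n}$-values along the connecting edges match, so $\sigma_{0}\mathbf{t}_{1}\sigma_{1}\cdots\mathbf{t}_{k}\sigma_{k}$ is a path from $p_{0}(\mathbf{a})$ to $q_{k}(\mathbf{b})$ of length $2^{\texttt{poly}(|\mathbf{a}\Xi\mathbf{b}|)}$ lying in $\mathbb{Z}^{D}$; the task reduces to repairing it inside each component so that it stays in $\mathbb{N}^{D}$. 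For a component $\Xi_{j}$ that is an LCGS this is immediate: normality makes the witness system of $\mathbf{x}_{j}^{\mathbf{n}}\Xi_{j}\mathbf{y}_{j}^{\mathbf{n}}$ (Section~\ref{Sec-Linear-CGS}) satisfiable, Corollary~\ref{eq-sol} bounds one of its solutions by $2^{\texttt{poly}(|\mathbf{a}\Xi\mathbf{b}|)}$, and by the construction of that system the solution \emph{is} a walk in $\mathbb{N}^{D}$ from $p_{j}(\mathbf{x}_{j}^{\mathbf{n}})$ to $q_{j}(\mathbf{y}_{j}^{\mathbf{n}})$ admitted by $\Xi_{j}$ of that length, which I put in place of $\sigma_{j}$.

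The substantial case is a component $\Xi_{j}$ that is unbounded and $\mathbf{n}$-pumpable. Processing left to right, I would maintain the invariant that upon entering $p_{j}$ the current location agrees with $\mathbf{x}_{j}^{\mathbf{n}}$ on $\imath(p_{j})$ and on the coordinates orthogonal to $G_{j}$, and is very large on the coordinates unbounded at $p_{j}$. Given this, I would first run $N$ copies of the forward-pumping circular $\imath(p_{j})$-walk provided by forward $\mathbf{n}$-pumpability --- of length $2^{\texttt{poly}(|\mathbf{a}\Xi\mathbf{b}|)}$ by Lemma~\ref{2019-05-09}(II) and with displacement $\ge_{\imath(p_{j})}\mathbf{1}$ --- where $N=2^{\texttt{poly}(|\mathbf{a}\Xi\mathbf{b}|)}$ is chosen larger than the deepest dip of $\sigma_{j}$, which is at most $\|\Psi_{j}^{\mathbf{n}}\|_{1}\,\|T\|=2^{\texttt{poly}(|\mathbf{a}\Xi\mathbf{b}|)}$; this lifts every $\imath(p_{j})$-coordinate above $N$, while the orthogonal coordinates stay nonnegative thanks to the orthogonal simplifications built into a good CGS and the unbounded ones stay nonnegative because they were already large. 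Then $\sigma_{j}$ runs without leaving $\mathbb{N}^{D}$. Then I would run $N$ copies of the backward-pumping walk at $q_{j}$ from backward $\mathbf{n}$-pumpability to bring the $\imath(q_{j})$-coordinates down to $\mathbf{y}_{j}^{\mathbf{n}}$; the coordinates unbounded at $q_{j}$ come out very large, which re-establishes the invariant at $p_{j+1}$ via the connecting-edge equation~(\ref{lyjs-2}) together with the fact that a coordinate bounded at $q_{j}$ is bounded at $p_{j+1}$. The resulting sub-walk has length $2^{\texttt{poly}(|\mathbf{a}\Xi\mathbf{b}|)}$.

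Concatenating the $k+1$ repaired sub-walks along $\mathbf{t}_{1},\ldots,\mathbf{t}_{k}$ yields a witness to $\mathbf{a}\Xi\mathbf{b}$, and since $k\le|\mathbf{a}\Xi\mathbf{b}|$ its length is $O(k)\cdot 2^{\texttt{poly}(|\mathbf{a}\Xi\mathbf{b}|)}=2^{\texttt{poly}(|\mathbf{a}\Xi\mathbf{b}|)}$. I expect the main obstacle to be the bookkeeping hidden in the pumpable case: checking that the three kinds of coordinates --- orthogonal to $G_{j}$, in $\imath(p_{j})$ or $\imath(q_{j})$, and unbounded at the boundary --- partition $[D]$ compatibly with the connecting-edge equations, that the large values carried forward from upstream stay comfortably above the $2^{\texttt{poly}(|\mathbf{a}\Xi\mathbf{b}|)}$-deep dips met downstream, and that the pump-up and pump-down iteration counts can be chosen so that \emph{every} intermediate location lies in $\mathbb{N}^{D}$ and all the prescribed boundary values are hit at once; getting these coordinated choices right while keeping every magnitude $2^{\texttt{poly}(|\mathbf{a}\Xi\mathbf{b}|)}$ is the routine-but-delicate core of the argument, which is the standard soundness construction underlying the KLMST decomposition.
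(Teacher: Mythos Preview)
Your overall plan and the treatment of LCGS components match the paper, but the pumpable case has a genuine gap, not merely deferred bookkeeping: the pump-up/run/pump-down sub-walk $(\O_{p_j})^{N}\sigma_{j}(\O_{q_j})^{N}$ does not land at the required exit values. The forward cycle is only guaranteed to satisfy $\Delta(\O_{p_j})\ge_{\imath(p_j)}\mathbf{1}$; on coordinates outside $\imath(p_j)$ (in particular on $\imath(q_j)\setminus\imath(p_j)$) its displacement is uncontrolled, and even on $\imath(p_j)$ the per-lap increments differ across coordinates. Likewise for $\O_{q_j}$. Hence for $i\in\imath(q_j)$ the $i$-th displacement of your sub-walk is $N\Delta(\O_{p_j})(i)+N\Delta(\O_{q_j})(i)+\Delta(\sigma_{j})(i)$, and no single $N$ --- nor even a pair $N_{f},N_{b}$ --- can make this equal $\mathbf{y}_{j}^{\mathbf{n}}(i)-\mathbf{x}_{j}^{\mathbf{n}}(i)$ for all such $i$ at once: that is $|\imath(q_j)|$ linear constraints on one or two integer unknowns. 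The sub-walk therefore does not exit with the $\imath(q_j)$-coordinates at $\mathbf{y}_{j}^{\mathbf{n}}$, the boundary values mismatch across the connecting edges, and at $q_{k}$ you do not hit $\mathbf{b}$. For the same reason your claim that ``the coordinates unbounded at $q_{j}$ come out very large'' is unsupported: nothing in $(\O_{p_j})^{N}\sigma_{j}(\O_{q_j})^{N}$ pushes them up, so the invariant does not propagate.

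The ingredient the paper supplies is a \emph{compensation cycle} $\O_{\psi'}$ together with a \emph{lifting cycle} $\O_{\Psi}$, both extracted from the homogeneous solution $\mathbf{O}_{\mathbf{a}\Xi\mathbf{b}}$ of~(\ref{2024-01-04}): one takes $C$ minimal with $C\,\Psi_{j}^{\mathbf{O}_{\mathbf{a}\Xi\mathbf{b}}}>\Im(\O_{p_j})+\Im(\O_{q_j})$ and lets $\O_{\psi'}$ realise the difference $\psi'=C\,\Psi_{j}^{\mathbf{O}_{\mathbf{a}\Xi\mathbf{b}}}-\Im(\O_{p_j})-\Im(\O_{q_j})$. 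The sub-walk is then $(\O_{\Psi})^{A}(\O_{p_j})^{2L}\sigma_{j}(\O_{\psi'})^{2L}(\O_{q_j})^{2L}$, whose Parikh image in $\Xi_{j}$ equals $\Psi_{j}^{\mathbf{n}+(A+2LC)\mathbf{O}_{\mathbf{a}\Xi\mathbf{b}}}$. Because the three pumps now sum to a multiple of a homogeneous solution, their net displacement on every coordinate bounded at $p_j$ or $q_j$ is zero, so the bounded exit values are hit exactly; and the prefix $(\O_{\Psi})^{A}$ is precisely what lifts the unbounded-at-$p_j$ coordinates high enough for the subsequent pumps to stay in $\mathbb{N}^{D}$ --- this, not the upstream walk, is the source of ``very large''. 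Your proposal needs an equivalent use of a homogeneous solution before the argument can go through.
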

\begin{proof}
Suppose $\mathbf{n}\in\mathcal{H}(\mathcal{E}_{\mathbf{a}\Xi\mathbf{b}})$ is such that every component of $\mathbf{a}\Xi\mathbf{b}$ is either an LCGS or unbounded and $\mathbf{n}$-pumpable.
The witness system of an LCG guarantees that every solution to $\mathcal{E}_{\mathbf{a}\Xi\mathbf{b}}$ defines a path whose restriction to an LCG stays in the first orthant.
There is nothing more to say about the LCGes in $\Xi$.
Now suppose $\Xi_j$ is unbounded and $\mathbf{x}_j\Xi_j\mathbf{y}_j$ is $\mathbf{n}$-pumpable for some $j\in[k]$.
By Lemma~\ref{2019-05-09} there is an $\iota(p_j)$-walk $p_j\stackrel{\O_{p_j}}{\longrightarrow}p_j$ in $\mathbf{x}_j\Xi_j\mathbf{y}_j$ such that $\Delta(\O_{p_j})\ge_{\iota(p_j)}\mathbf{1}$.
Symmetrically there is an $\iota(q_j)$-walk $q_j\stackrel{\O_{q_j}}{\longrightarrow}q_j$ in $\mathbf{x}_j\Xi_j\mathbf{y}_j$ such that $\Delta(\O_{q_j})\le_{\iota(q_j)}-\mathbf{1}$.
Let $\mathbf{O}_{\mathbf{a}\Xi\mathbf{b}}$ be defined as in~(\ref{2024-01-04}), and let $C$ be the minimal number such that for every $e\in T_j$,
\begin{equation}\label{2024-11-02-wan}
\left(C\Psi_j^{\mathbf{O}_{\mathbf{a}\Xi\mathbf{b}}}\right)(e) > \Im(\O_{p_j})(e)+\Im(\O_{q_j})(e).
\end{equation}
Let \[\psi'=C\Psi_j^{\mathbf{O}_{\mathbf{a}\Xi\mathbf{b}}} - \Im(\O_{p_j})-\Im(\O_{q_j}).\]
Since $\Psi_j^{\mathbf{O}_{\mathbf{a}\Xi\mathbf{b}}}$, $\Im(\O_{p_j})$ and $\Im(\O_{q_j})$ meet the equality~(\ref{lyjs-0}), $\psi'$ must meet the equality~(\ref{lyjs-0}).
Additionally the strict inequality~(\ref{2024-11-02-wan}) implies that $\psi'$ also meets the equality~(\ref{all-edges}).
We conclude that $\psi'$ defines a single cycle. 
Let the circular paths $p_j\stackrel{\O_{p_j}}{\longrightarrow}p_j$, $q_j \stackrel{\O_{q_j}}{\longrightarrow}q_j$ and $q_j \stackrel{\O_{\psi'}}{\longrightarrow}q_j$ be defined respectively by  $\Im(\O_{p_j})$, $\Im(\O_{q_j})$ and $\psi'$.
Let $p_j\stackrel{\pi_j}{\longrightarrow}q_j$ be a path defined by $\Psi_j^{\mathbf{n}}$.
Consider
\begin{equation}\label{2023-12-31-wan}
p_j\stackrel{\O_{p_j}}{\longrightarrow}p_j \stackrel{\pi_j}{\longrightarrow}q_j \stackrel{\O_{\psi'}}{\longrightarrow}q_j \stackrel{\O_{q_j}}{\longrightarrow}q_j.
\end{equation}
There is no guarantee that any of the paths $\O_{p_j},\pi_j,\O_{\psi'},\O_{q_j}$ in~(\ref{2023-12-31-wan}) is a walk.
In Figure~\ref{cw4ncgs}, a possible shape of~(\ref{2023-12-31-wan}) is projected at the $x$-axis and the $y$-axis (values in other axes are ignored), where $x$ is unbounded at $p_j$ and $y$ is unbounded at $q_j$.
Let $M=\|\Delta^{\min}(\pi_j)\|$ and $N=\|\Delta^{\min}(\O_{\psi'})\|$, confer~(\ref{2024-01-04-minD}) for the definition of $\Delta^{\min}(\_)$, and let $L=M+N$.
Consider the path
\[(\O_{p_j})^{2L}\pi_j(\O_{\psi'})^{2L}(\O_{q_j})^{2L}.\]
For each $i\in\iota(p_j)$, by the forward $\mathbf{n}$-pumpability,
\begin{eqnarray*}
\Delta\left((\O_{p_j})^{2L}\right)(i) &\ge& 2L, \end{eqnarray*}
and consequently
$\Delta\left((\O_{p_j})^{2L}\pi_j\O_{\psi'}\right)(i) \ge L$.
For each $i\in\iota(q_j)$, by the backward $\mathbf{n}$-pumpability,
\begin{eqnarray*}
\Delta\left((\O_{p_j})^{2L}\pi_j(\O_{\psi'})^{2L}\right)(i) &\ge& 2L,
\end{eqnarray*}
and consequently
$\Delta\left((\O_{p_j})^{2L}\pi_j(\O_{\psi'})^{k}\right)(i) \ge 2L$ for all $k\in[2L]$.
It follows that~(\ref{2023-12-31-wan}) is an $\left(\iota(p_j)\cap\iota(q_j)\right)$-walk.
If $p_j$ is unbounded at $i\in[D]$ that is not orthogonal, $\O_{p_j}$ might go into negative in the $i$-th entry when starting from $\mathbf{x}_j^{\mathbf{n}}$.
We need to uplift the entry location at $p_j$ high enough so that $(\O_{p_j})^{2L}$ is a walk.
Let $A_1=\|\Delta^{\min}((\O_{p_j})^{2L})\|$.
Symmetrically let $A_2=\|\Delta^{\min}((\O_{q_j})^{2L})\|$.
Let $\O_{\Psi}$ be a cycle defined by $\Psi_j^{\mathbf{O}_{\mathbf{a}\Xi\mathbf{b}}}$, and let $A_3=\|\Delta^{\min}(\O_{\Psi})\|$.
Set $A=A_1+A_2+A_3$.
Then
\begin{equation}\label{2024-01-05-noon}
p_j\stackrel{\left(\O_{\Psi}\right)^{A}}{\longrightarrow}p_j \stackrel{(\O_{p_j})^{2L}}{\longrightarrow}p_j \stackrel{\pi_j}{\longrightarrow}q_j \stackrel{(\O_{\psi'})^{2L}}{\longrightarrow}q_j \stackrel{(\O_{q_j})^{2L}}{\longrightarrow}q_j
\end{equation}
is a walk.

Let's take a look at the minimal size of the walk in~(\ref{2024-01-05-noon}).
By Lemma~\ref{2019-05-09}, the cycles $\O_{p_j}$ and $\O_{q_j}$ are bounded by $2^{\texttt{poly}(|\mathbf{a}\Xi\mathbf{b}|)}$ in length.
By Lemma~\ref{eq-sol}, the length of the cycle  $\O_{\psi'}$, the length of the path $\pi_j$, and $C$ as well are bounded by $2^{\texttt{poly}(|\mathbf{a}\Xi\mathbf{b}|)}$.
Consequently the values $L,M,N$ are bounded by $2^{\texttt{poly}(|\mathbf{a}\Xi\mathbf{b}|)}$.
For the same reason $A_1,A_2,A_3$ and $A$ as well are bounded by $2^{\texttt{poly}(|\mathbf{a}\Xi\mathbf{b}|)}$.
We conclude that there is a witness to the normal CGS $\mathbf{a}\Xi\mathbf{b}$ whose length is bounded by $2^{\texttt{poly}(|\mathbf{a}\Xi\mathbf{b}|)}$.
\end{proof}

\section{Geometrically $d$-Dimensional $\mathbb{VASS}$ is in $\mathbf{F}_d$}
\label{Sec-Reachability-in-d-VASS}

It is easily seen that the refinement sequence~(\ref{2024-10-28}) must be finite:
simplifications reduce graph size; algebraic decompositions generate smaller CGes; and combinatorial decompositions turn some indices to orthogonal.
A metric can be defined in terms of these parameters, which suffices for a finiteness argument for~(\ref{2024-10-28}).
If one is concerned with the complexity, one should really be interested in how a tight bound on the length of the sequence in~(\ref{2024-10-28}) can be derived and when the refinement sequence ought to be stopped.
This will be discussed respectively in Section~\ref{Sec-Geometric-Dimension} and in Section~\ref{sec-Leaf}.
Using the results from Section~\ref{Sec-Geometric-Dimension} and Section~\ref{sec-Leaf}, an $\mathbf{F}_d$ upper bound on the minimal size of the good CGSes will be derived in Section~\ref{sec-SuanFa}.
This upper bound suggests an extremely simple algorithm for $g\mathbb{VASS}^d$.

\subsection{Geometry of Decomposition}
\label{Sec-Geometric-Dimension}

Let $G=(Q,T)$ be a $D$-VASS.
A path in $G$ contains a sequence of simple cycles in $G$ and some edges that are not in any cycles.
The number of the latter is less than $|Q|$.
Let $V_{G}$ be the subspace of $\mathbb{Q}^D$ spanned by the displacements of the simple cycles in $G$.
By definition if $i$ is orthogonal to $G_j$, the $i$-th axis is orthogonal to the space $V_{G_j}$.
Let $d_{G}$ be the dimension of $V_{G}$.
We say that $G$ is \emph{geometrically $d_{G}$-dimensional}.
The following fact was observed by Leroux and Schmitz~\cite{LerouxSchmitz2019}.
We present an outline of the proof.
\begin{lemma}\label{decomp-2023-05-11}
Let $G_j'$ be the subgraph of $G_j$ defined by the unbounded edges of $\,G_j$.
If $G_j'\ne G_{j}$, then $d_{G_j'}<d_{G_j}$.
\end{lemma}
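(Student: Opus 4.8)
The plan is to show that the displacement space strictly shrinks when we pass to the unbounded subgraph. First I would recall what it means for an edge to be bounded: $e$ is bounded in $\xi_j$ precisely when $\Psi_j^{\mathbf{O}_{\mathbf{a}\xi\mathbf{b}}}(e)=0$, i.e. every solution to the homogeneous characteristic system $\mathcal{E}_{\mathbf{a}\xi\mathbf{b}}^0$ assigns $0$ to the occurrence-count variable of $e$. I would first verify $V_{G_j'}\subseteq V_{G_j}$, which is immediate since every simple cycle of $G_j'$ is a simple cycle of $G_j$. So it remains to exhibit a vector in $V_{G_j}\setminus V_{G_j'}$, which will establish the strict inequality on dimensions.

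The key observation is that a bounded edge $e$ lies on no simple cycle of $G_j'$ (it is deleted), yet because $G_j$ is strongly connected $e$ lies on some simple cycle $\o$ of $G_j$. I would argue that the displacement $\Delta(\o)$ of such an $\o$ cannot lie in $V_{G_j'}$. Suppose for contradiction that $\Delta(\o)\in V_{G_j'}$, i.e. $\Delta(\o)$ is a rational combination of displacements of simple cycles of $G_j'$. Clearing denominators, there exist nonnegative integers so that some integer multiple $N\Delta(\o)$ equals an integer combination of simple-cycle displacements from $G_j'$, and by adding further copies of cycles one can arrange a genuine nonnegative-integer identity. One then assembles a Parikh image $\psi$ on $T_j$: take $N$ copies of $\o$ on one side and the $G_j'$-cycles on the other, so that the two multisets of edges have the same displacement. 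Combined with the Euler-type flow conditions (which hold for any cycle and are preserved under such bookkeeping, exactly as in the manipulation leading to~(\ref{2024-11-02-wan}) in the proof of Proposition~\ref{2024-11-01}), this yields a nonzero solution to the homogeneous equations~(\ref{hcs-3}) and~(\ref{hcs-4}) for component $j$, extended by zero on all other components and on the $\mathbf{x}_0,\mathbf{y}_k$ variables — hence a nontrivial solution to $\mathcal{E}_{\mathbf{a}\xi\mathbf{b}}^0$ — that assigns a positive value to $\Psi_j(e)$. That contradicts the assumption that $e$ is bounded, i.e.\ that $\Psi_j^{\mathbf{O}_{\mathbf{a}\xi\mathbf{b}}}(e)=0$.

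Therefore $\Delta(\o)\in V_{G_j}\setminus V_{G_j'}$, which gives $V_{G_j'}\subsetneq V_{G_j}$ and hence $d_{G_j'}<d_{G_j}$, completing the argument. The main obstacle I anticipate is the middle step: turning the rational linear dependence $\Delta(\o)\in V_{G_j'}$ into an honest nonnegative-integer Parikh image that simultaneously (a) has the right net displacement, (b) satisfies the Eulerian balance condition~(\ref{hcs-3}) at every vertex, and (c) is supported only on $G_j'$ except for a positive amount on $e$. One has to be careful that the $G_j'$-cycles witnessing the dependence can be chosen with their displacements spanning $V_{G_j'}$ while the assembled flow still closes up into a legitimate circulation; padding with extra simple cycles of $G_j'$ (which do not change membership in $V_{G_j'}$) is the tool that makes the balance and nonnegativity work out. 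Once that bookkeeping is in place, the contradiction with boundedness of $e$ is immediate from the definition of $\mathbf{O}_{\mathbf{a}\xi\mathbf{b}}$ in~(\ref{2024-01-04}).
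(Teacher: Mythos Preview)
Your overall strategy is the same as the paper's: assume the cycle space does not shrink, express the displacement of a cycle through a bounded edge as a combination of $G_j'$-cycle displacements, and convert this into a homogeneous solution that assigns a positive value to the bounded edge, contradicting boundedness. The paper uses a full Eulerian cycle $\O_j$ in $G_j$ where you use a single simple cycle $\o$ through one bounded edge $e$; either choice is fine for the contradiction.

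There is, however, a genuine gap in your execution. You propose to build the witnessing homogeneous solution by ``extending by zero on all other components.'' In a multi-component CGS this forces $\mathbf{x}_j=\mathbf{y}_j=\mathbf{0}$ via the chaining equations~(\ref{hcs-5}), hence $\Delta(\psi_j)=\mathbf{0}$. Your candidate $\psi_j=N\Im(\o)-\sum_i r_i\Im(\o_i)$ does have zero displacement, but it can be negative on edges of $G_j'$, and your proposed remedy of ``padding with extra simple cycles of $G_j'$'' will in general change the displacement: there is no reason a strictly positive circulation on $G_j'$ with \emph{zero} displacement exists (think of the case where every simple cycle of $G_j'$ has displacement in a proper cone not containing $\mathbf{0}$ in its interior).

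The fix is exactly what the paper does: do not build the solution from scratch, but modify the existing one. Take $k'\mathbf{O}_{\mathbf{a}\xi\mathbf{b}}$ and replace $\Psi_j^{k'\mathbf{O}_{\mathbf{a}\xi\mathbf{b}}}$ by $\Psi_j^{k'\mathbf{O}_{\mathbf{a}\xi\mathbf{b}}}+N\Im(\o)-\sum_i r_i\Im(\o_i)$, leaving all other variables as in $k'\mathbf{O}_{\mathbf{a}\xi\mathbf{b}}$. The added term has zero boundary and zero displacement, so (\ref{hcs-3})--(\ref{hcs-5}) are preserved; nonnegativity on $G_j'$ holds for $k'$ large since $\Psi_j^{\mathbf{O}_{\mathbf{a}\xi\mathbf{b}}}$ is strictly positive there, and on bounded edges only $N\Im(\o)$ contributes, giving $\Psi_j(e)>0$. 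With this correction your argument goes through and is essentially the paper's proof.
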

\begin{proof}
We shall refer to the homogeneous solution $\mathbf{O}_{\mathbf{a}\xi\mathbf{b}}$ defined in~(\ref{2024-01-04}).
Assume that $d_{G_j'}=d_{G_j}$.
Then $V_{G_j'}=V_{G_j}$.
Let $\O_j$ be a cycle containing every edge of the strongly connected graph $G_j$.
By the assumption, $\Delta(\O_j)$ is a linear combination of $\{\Delta(\o)\}_{\o\in\mathbf{C}}$, where $\mathbf{C}$ is the set of the simple cycles of $G_j'$.
Equivalently,
\begin{equation}\label{second-last}
k\Delta(\O_j)=\sum_{\o\in\mathbf{C}}r_{\o}{\cdot}\Delta(\o),
\end{equation}
where the coefficients are integers and $k$ is positive.
Let $k'$ be the minimal positive integer such that $\left(\Psi_j^{k'\mathbf{O}_{\mathbf{a}\xi\mathbf{b}}}-\sum_{\o\in\mathbf{C}}r_{\o}\Im(\o)\right)(e)>0$ for every edge $e$ in $G_j'$.
It follows from~(\ref{second-last}) that
\[\Delta\left(\Psi_j^{k'\mathbf{O}_{\mathbf{a}\xi\mathbf{b}}}\right) = k\Delta(\O_j)+\left(\Delta\left(\Psi_j^{k'\mathbf{O}_{\mathbf{a}\xi\mathbf{b}}}\right)-\sum_{\o\in\mathbf{C}}r_{\o}\Delta(\o)\right).\]
So from the homogeneous solution $k'\mathbf{O}_{\mathbf{a}\xi\mathbf{b}}$ one defines another homogeneous solution showing that every edge of $G_j$ is unbounded.
The contradiction refutes the assumption $d_{G_j'}=d_{G_j}$.
\end{proof}

It follows from Lemma~\ref{decomp-2023-05-11} that if a node is obtained from its parent by an algebraic decomposition, its geometrical dimension is strictly smaller than that of its parent.
We are now establishing the same property for the combinatorial decompositions.
\begin{lemma}\label{2024-01-17}
If $p_{j'}G_{j'}q_{j'}$ is obtained from $p_{j}G_{j}q_{j}$ by a combinatorial decomposition, then $d_{G_j'}<d_{G_j}$.
\end{lemma}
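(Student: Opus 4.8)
The plan is to read off, from the construction of a combinatorial decomposition, a single index $i$ whose orthogonality status changes, and then argue that this change by itself cuts the geometric dimension. Recall from Section~\ref{Sec-Coverability} that when $\xi_j=p_jG_jq_j$ is decomposed combinatorially one picks an index $i\in\imath(p_j)\cup\imath(q_j)$, so $i$ is \emph{not} orthogonal to $G_j$ by the very definition of these sets, builds the product VASS $G_j^{-i}$, to which $i$ \emph{is} orthogonal by Lemma~\ref{2024-09-04}, and then applies an orthogonal simplification. Hence every component graph $G_{j'}$ of the resulting CGS $\xi_j'$, which has the shape~(\ref{2019-09-28}), is a subgraph of $G_j^{-i}$, and therefore $i$ is orthogonal to $G_{j'}$ as well, since orthogonality (Definition~\ref{our-rigidity}) obviously descends to subgraphs.

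The next step is to pin $V_{G_{j'}}$ down inside $V_{G_j}$. I would use the projection $(p,c)\mapsto p$, which sends every edge $(p,c)\stackrel{\mathbf{t}}{\longrightarrow}(q,c\,{+}\,\mathbf{t}(i))$ of $G_j^{-i}$ to the edge $p\stackrel{\mathbf{t}}{\longrightarrow}q$ of $G_j$ carrying the \emph{same} $D$-dimensional label. Consequently a simple cycle $\o$ of $G_{j'}$ projects to a closed walk of $G_j$ of displacement exactly $\Delta(\o)$; writing that closed walk as a union of simple cycles of $G_j$ shows $\Delta(\o)\in V_{G_j}$. Since $i$ is orthogonal to $G_{j'}$, we moreover have $\Delta(\o)(i)=0$. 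Letting the cycles $\o$ range over a spanning set of $V_{G_{j'}}$, this yields $V_{G_{j'}}\subseteq V_{G_j}\cap H_i$, where $H_i=\{\mathbf{v}\in\mathbb{Q}^D\mid\mathbf{v}(i)=0\}$.

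The argument then closes with a dimension count. As $i$ is not orthogonal to $G_j$, some simple cycle $\o^{\ast}$ of $G_j$ has $\Delta(\o^{\ast})(i)\ne0$, so $V_{G_j}$ is not contained in the hyperplane $H_i$; hence $V_{G_j}\cap H_i$ is a proper subspace of $V_{G_j}$, of dimension strictly less than $d_{G_j}$. Combined with the inclusion $V_{G_{j'}}\subseteq V_{G_j}\cap H_i$ this gives $d_{G_{j'}}=\dim V_{G_{j'}}<d_{G_j}$, as required. The step I expect to demand the most care is the projection argument of the middle paragraph: spelling out precisely why a cycle of the counter-annotated graph $G_j^{-i}$, once the annotation is forgotten, is genuinely a closed walk of $G_j$ and therefore has displacement in $V_{G_j}$. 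The remaining ingredients are routine --- orthogonality passing to subgraphs, a non-orthogonal index witnessing a simple cycle with nonzero $i$-th coordinate, and a hyperplane trimming one dimension off a subspace it does not contain. This mirrors the proof of Lemma~\ref{decomp-2023-05-11}, with the product construction of $G_j^{-i}$ taking over the bookkeeping role played there by the explicit homogeneous solution.
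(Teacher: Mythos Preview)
Your proof is correct. The core facts you isolate --- that the index $i$ chosen in the combinatorial decomposition is not orthogonal to $G_j$, that $i$ becomes orthogonal to $G_j^{-i}$ and hence to every subgraph $G_{j'}$, and that the label-preserving projection $(p,c)\mapsto p$ sends cycles of $G_{j'}$ to closed walks of $G_j$ with the same displacement --- are exactly what is needed, and the containment $V_{G_{j'}}\subseteq V_{G_j}\cap H_i$ together with $V_{G_j}\not\subseteq H_i$ immediately yields the strict drop in dimension.

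The paper reaches the same conclusion by a different, more explicit route: it fixes a $D\times c$ matrix $\mathbf{M}$ of $c$ independent cycle displacements of $G_j$, assumes for contradiction that $d_{G_{j'}}=c$, writes the corresponding matrix $\mathbf{O}$ for $G_{j'}$ as $\mathbf{M}\cdot\mathbf{R}_1$ (using that simple cycles of $G_{j'}$ project to multisets of simple cycles of $G_j$), and then computes $\mathrm{rank}(\mathbf{O})\le c-1$ from the fact that the $k$-th row of $\mathbf{O}$ vanishes by Lemma~\ref{2024-09-04}. Your subspace-and-hyperplane argument packages the same linear algebra more cleanly: the paper's row-rank computation is precisely the statement that intersecting $V_{G_j}$ with the coordinate hyperplane $H_i$ drops the dimension by one when $V_{G_j}\not\subseteq H_i$. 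What your approach buys is directness --- no contradiction setup, no block-matrix bookkeeping, and no need to single out $c$ independent rows. What the paper's approach makes more explicit is the mechanism by which cycles of $G_{j'}$ decompose in $V_{G_j}$, though your projection remark covers this adequately.
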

\begin{proof}
Let $G_j$ be a geometrically $c$-dimensional $D$-VASS.
If $c=D$, the result follows by the definition of combinatorial decomposition since $G_{j'}$ is essentially $(D{-}1)$-dimensional.
Assume that $c<D$.
By the definition of combinatorial decomposition some $k$ that is not orthogonal to $G_j$ is $B$-bounded in the $k$-th entry, where $B=2^{O(|\xi|)}$.
The bound incurs an orthogonal simplification.
Let $\mathbf{M}$ be the $(D{\times}c)$-matrix whose column vectors $\mathbf{c}_1,\ldots,\mathbf{c}_c$ are linearly independent displacements of $c$ simple cycles in $G_j$.
Let $\mathbf{r}_1,\ldots,\mathbf{r}_D$ be the row vectors of $\mathbf{M}$.
There must be $c$ linearly independent row vectors.
Without loss of generality, let them be $\mathbf{r}_1,\ldots,\mathbf{r}_c$. 
For each $j\in[D]\setminus[c]$, $\mathbf{r}_j$ is a linear combination of $\mathbf{r}_1,\ldots,\mathbf{r}_c$.
Therefore
$$
    \mathbf{M}=\begin{bmatrix}
        \mathbf{c}_1 \cdots \mathbf{c}_c
    \end{bmatrix}=\begin{bmatrix}
                \mathbf{r}_1 \\
        \vdots\\
        \mathbf{r}_D
    \end{bmatrix}= \begin{bmatrix}
        \mathbf{I}_c\\ \mathbf{R}_0
    \end{bmatrix}\cdot \begin{bmatrix}
        \mathbf{r}_1 \\
        \vdots\\
        \mathbf{r}_c
    \end{bmatrix}
$$
for some $(D-c)\times c$ matrix $\mathbf{R}_0$,
where $\mathbf{I}_c$ is the $c\times c$ identity matrix.

Suppose that the values in the $k$-th entry of the $D$-VASS $G_j$ are built into states when constructing $G_j'$.
We may assume that $k\in[c]$, otherwise we may swap $\mathbf{r}_k$ for some $\mathbf{r}_i$ in $\{\mathbf{r}_1,\ldots,\mathbf{r}_c\}$ without sacrificing the linear independence of the $c$ rows. 
Assume that $d_{G_j'}=d_{G_j}=c$.
Let $\mathbf{O}$ be the $(D{\times}c)$-matrix whose linearly independent columns $\mathbf{u}_1,\ldots,\mathbf{u}_c$ are displacements of $c$ simple circles in $G_j'$. Notice that a state in $G_j'$ is a pair consisting of a state in $G_j$ and a nonnegative integer of polynomial size.
A simple cycle in $G_j'$ is composed of a multi-set of simple cycles in $G_j$. 
Consequently there exists a $(c{\times}c)$-matrix $\mathbf{R}_1$ such that
$$
    \mathbf{O}=\begin{bmatrix}
        \mathbf{u}_1 \cdots \mathbf{u}_c
    \end{bmatrix}=\begin{bmatrix}
        \mathbf{c}_1  \cdots  \mathbf{c}_c
    \end{bmatrix}\cdot\mathbf{R}_1.
$$
Let the matrix $\mathbf{O}_1$ be composed of the first $c$ rows of $\mathbf{O}$, and $\mathbf{O}_2$ be composed of the other rows of $\mathbf{O}$. 
Then
$$
    \mathbf{O}=\begin{bmatrix}
    \mathbf{O}_1\\\mathbf{O}_2
\end{bmatrix}=\begin{bmatrix}
        \mathbf{c}_1  \cdots  \mathbf{c}_c
    \end{bmatrix}\cdot \mathbf{R}_1=\begin{bmatrix}
        \mathbf{I}_c\\ \mathbf{R}_0
    \end{bmatrix}\cdot \begin{bmatrix}
        \mathbf{r}_1 \\
        \vdots\\
        \mathbf{r}_c
    \end{bmatrix}\cdot \mathbf{R}_1,
$$
which implies that
$$
    \mathbf{O}_1=\mathbf{I}_c\begin{bmatrix}
        \mathbf{r}_1 \\
        \vdots\\
        \mathbf{r}_c
    \end{bmatrix}\mathbf{R}_1= \begin{bmatrix}
        \mathbf{r}_1 \\
        \vdots\\
        \mathbf{r}_c
    \end{bmatrix}\mathbf{R}_1,\quad \mathbf{O}_2=\mathbf{R}_0\begin{bmatrix}
        \mathbf{r}_1 \\
        \vdots\\
        \mathbf{r}_c
    \end{bmatrix}\mathbf{R}_1.
$$
It follows that $\mathbf{O}_2=\mathbf{R}_0\begin{bmatrix}
        \mathbf{r}_1 \\
        \vdots\\
        \mathbf{r}_c
    \end{bmatrix}\mathbf{R}_1=\mathbf{R}_0\mathbf{O}_1$. Therefore
$$
    rank(\mathbf{O})\leq rank(\mathbf{O}_1)\leq c-1.
$$
The last inequality is valid because the $k$-th row of $\mathbf{O}$ is $\mathbf{0}$ due to Lemma~\ref{2024-09-04}. 
So the assumption $d_{G_j'}=d_{G_j}$ must be false.
\end{proof}

Lemma~\ref{decomp-2023-05-11} and Lemma~\ref{2024-01-17} imply that the length of the refinement sequence in~(\ref{2024-10-28}) is bounded by $d$, that is $k\le d$.

\subsection{Termination of Decomposition}
\label{sec-Leaf}

Studies in the geometrically $d$-dimensional VASSes ought to begin with the geometrically $2$-dimensional VASSes.
Let us recall a major result proved in~\cite{FuYangZheng2024} about the geometrically $2$-dimensional VASSes.
\begin{theorem}\label{theorem-d-vasslps}
If there is a witness to $p(\mathbf{a})Gq(\mathbf{b})$, where $G=(Q,T)$ is a geometrically $2$-dimensional VASS, there is a witness to $\mathbf{a}\lambda\mathbf{b}$ where (i) $\lambda$ is a geometrically $2$-dimensional LCGS, (ii) $\lambda$ refines $pGq$, and (iii) $|\lambda|=2^{O(|G|)}$.
\end{theorem}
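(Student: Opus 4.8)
The plan is to strip the problem down to a genuinely two–dimensional one, solve that one with the elementary $2$-VASS short–path lemma (Lemma~\ref{2023-12-27}) together with the $1$-VASS linear–path–scheme theorem of Haase, Kreutzer, Ouaknine and Worrell~\cite{HaaseKreutzerOuaknineWorrell2009}, and then read the answer back into $G$. First I would dispose of the indices in $\bot_G$: an index orthogonal to $G$ is rigid (the discussion around Definition~\ref{our-rigidity}), so once the entry location $p(\mathbf{a})$ is fixed every state of $G$ carries a forced value in that coordinate; deleting the states whose forced value is negative produces a subgraph of $G$ — no size increase — in which that coordinate never goes below $0$, and iterating we may assume $\bot_G=\varnothing$. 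Next, since $d_G=2$, pick two coordinates $i_1,i_2$ for which the projection $V_G\to\mathbb{Q}^2$ is an isomorphism (these are automatically non-orthogonal). For each remaining coordinate $i$ there are rationals $\alpha_i,\beta_i$ of size $\|T\|^{O(1)}$ with $\mathbf{v}(i)=\alpha_i\mathbf{v}(i_1)+\beta_i\mathbf{v}(i_2)$ for all $\mathbf{v}\in V_G$; hence the functional $\mathbf{s}\mapsto \mathbf{s}(i)-\alpha_i\mathbf{s}(i_1)-\beta_i\mathbf{s}(i_2)$ vanishes on every cycle displacement, and because $G$ is strongly connected it is a coboundary: there is a state potential $\phi_i$ with $\mathbf{s}_j(i)=\alpha_i\mathbf{s}_j(i_1)+\beta_i\mathbf{s}_j(i_2)+\gamma_i(p_j)$ along any path from $p(\mathbf{a})$, where $\gamma_i(s)$ is a constant depending only on $\mathbf{a}$, $\phi_i$ and $s$. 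Consequently the set of locations at a state $s$ that lie in $\mathbb{N}^D$ is a convex polygon $P_s\subseteq\mathbb{Q}^2$ in the $(i_1,i_2)$-plane cut out by the $\le D$ half-planes $u\ge 0$, $v\ge 0$, $\alpha_i u+\beta_i v+\gamma_i(s)\ge0$, and the polygons $P_s$ for different states are translates of one another up to the affine shifts $\gamma_i(\cdot)$.

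The core of the argument is a short–linear–path–scheme lemma for this two–dimensional picture. I would classify the behaviour of a long witness $\mathbf{w}$ (projected to the $(i_1,i_2)$-plane) according to its distance to the boundary of the $P_s$'s. If $\mathbf{w}$ stays uniformly far from every edge of every $P_s$, then it evolves in an open two–dimensional region, and a direct cut–and–pump shortcut (or Lemma~\ref{2023-12-27} applied after pushing the obstructing edges away) replaces it by a short witness following a bounded LCGS. Otherwise $\mathbf{w}$ repeatedly enters a neighbourhood of some vertex or some edge of some $P_s$. Near a \emph{vertex} of $P_s$ the polygon locally coincides with a rotated integer quadrant (two of the defining half-planes meeting at an apex that may sit far from the origin, which does not matter since the bound in Lemma~\ref{2023-12-27} is independent of the starting location); after the unimodular change of basis aligning that quadrant with $\mathbb{N}^2$, Lemma~\ref{2023-12-27} — in particular its ``$o=o'$'' clause controlling loops — yields a short LCGS for the sojourn near the vertex. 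Near an \emph{edge} $E$ but away from its endpoints, only the single coordinate $i_E\in[D]$ corresponding to $E$ is constrained, the orthogonal direction being free, so the sojourn is governed by a one–dimensional VASS on $\mathbf{s}(i_E)$, for which~\cite{HaaseKreutzerOuaknineWorrell2009} provides linear path schemes of size polynomial in $|Q|+\|T\|$ and independent of $\mathbf{a}$. Cutting $\mathbf{w}$ at its $<|Q|$ acyclic–skeleton points, at the $\le D$ edge regimes and the $\le\binom{D}{2}$ vertex regimes, and splicing the corresponding short LCGSes along the connecting edges, I obtain a single LCGS $\lambda$ from $p(\mathbf{a})$ to $q(\mathbf{b})$.

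It then remains to check the three asserted properties. Every cycle of $\lambda$ is a cycle of $G$, so its displacement lies in $V_G$; hence $\lambda$ is geometrically $2$-dimensional, and since each path admitted by $\lambda$ is a path of the subgraph of $G$ we work in, $\lambda$ refines $pGq$. For the size bound, the construction uses polynomially many (in $|Q|$ and $D$) cycles, each either a simple cycle of $G$ (length $\le|Q|$) or a cycle coming from a $1$-VASS or $2$-VASS scheme of size $(|Q|+\|T\|)^{O(1)}$, so $|\lambda|=(|Q|+\|T\|)^{O(1)}\cdot D^{O(1)}=2^{O(|G|)}$, as $\|T\|\le 2^{|G|}$; crucially, no step inflates a large–dimensional coordinate window into control states, so the bound is independent of $\mathbf{a},\mathbf{b}$.

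The main obstacle I expect is exactly the step that in~\cite{FuYangZheng2024} is handled by the Farkas–Minkowski–Weyl theorem: making the boundary case analysis of the polygon $P_s$ rigorous with \emph{uniform, $\mathbf{a}$-independent} bounds — pinning down precisely when a sojourn counts as ``near a vertex'' versus ``near an edge'' versus ``interior'' (the subtle point being thin parts of $P_s$ and vertices lying close together, whose separation can be as small as $|Q|\cdot\|T\|$), verifying that within a vertex regime the walk really cannot cross to a non-incident edge, and gluing the finitely many regime-local LCGSes into one LCGS without blow-up. Everything else (the potential construction, the removal of orthogonal indices, and the size accounting) is routine once this geometric bookkeeping is in place.
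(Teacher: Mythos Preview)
Your route is genuinely different from the paper's, and in fact it is closer to the Farkas--Minkowski--Weyl based argument of \cite{FuYangZheng2024} that the present paper explicitly sets out to replace. The paper never fixes one global projection $(i_1,i_2)$ and never introduces the polygons $P_s$. Instead it (i) chooses, \emph{for each circular subwalk separately}, projection coordinates $(\imath,\jmath)$ that depend on the orthant $Z_{\mathbf c}$ of that subwalk's displacement, together with ratio bounds $|c_\imath|/|c_k|,|c_\jmath|/|c_k|\le(|Q|\,\|T\|)^{O(1)}$ controlling every remaining coordinate (Section~\ref{sec-DR}); (ii) splits $\mathbb{N}^D$ into $\mathbb{U}_G=[B_G,\infty)^D$ and $\mathbb{L}_G$, so that any circular subwalk with both endpoints in $\mathbb{U}_G$ lifts back correctly after Lemma~\ref{2023-12-27} thanks to those ratio bounds (Section~\ref{sec-CP}); and (iii) handles the sojourns in $\mathbb{L}_G$ by a pigeonhole on the crossing regions $\mathbb{J}_{ij}=\mathbb{I}_i\cap\mathbb{I}_j$, using $\dim V_G=2$ to show that at most one such region can be visited often, after which one bounded coordinate is absorbed into the state space and an induction on the number of crossing regions finishes (Section~\ref{sec-Walks-in-LG}). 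No convex geometry, no $1$-VASS citation, and the only ``regimes'' are $\mathbb{U}_G$ versus the slabs $\mathbb{I}_i$.

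Your plan has two related gaps, and neither is the obstacle you flag. First, you never bound the \emph{number of regime transitions}. A witness can alternate between the interior of $P_s$ and the neighbourhood of some edge $E_i$, or between neighbourhoods of two non-adjacent edges, an unbounded number of times; the clause ``cutting $\mathbf{w}$ at \ldots\ the $\le D$ edge regimes and the $\le\binom{D}{2}$ vertex regimes'' presumes each regime is visited in one contiguous block, which is false. The paper's crossing-region argument in Section~\ref{sec-Walks-in-LG} is exactly the missing device: if $\mathbb{J}_{ij}$ is visited more than $|Q|\,B_G^2\,\|T\|$ times then some cycle displacement is orthogonal to both axes $i$ and $j$, and the two-dimensionality of $V_G$ then forbids the same for any $\mathbb{J}_{ik}$ with $k\ne j$; this is what caps the number of pieces at $2^{O(|G|)}$. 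Second, in your vertex regime you project onto the two coordinates $(i,j)$ whose half-planes meet at that vertex and invoke Lemma~\ref{2023-12-27}, but the lifted LCGS walk is only guaranteed to keep $\mathbf{s}(i),\mathbf{s}(j)\ge0$; nothing confines it to the vertex neighbourhood, so the remaining constraints $\mathbf{s}(k)\ge0$ may fail along the lift (Lemma~\ref{2023-12-27} bounds the \emph{size} of the scheme, not the excursion of the walk it admits, whose length can be arbitrary). The paper resolves exactly this by the ratio bounds of Section~\ref{sec-DR} combined with the requirement that both endpoints lie in $\mathbb{U}_G$: if some $\mathbf{s}(k)$ drops below $\|T\|$, one of the chosen coordinates $\imath,\jmath$ must already be small, contradicting membership in $\mathbb{U}_G$. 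Your ``unimodular change of basis'' does not supply such a guarantee. The difficulty you identify (delimiting ``near a vertex'' versus ``near an edge'' with $\mathbf{a}$-independent thresholds) is real but downstream of these two issues.
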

Theorem~\ref{theorem-d-vasslps} is proved in~\cite{FuYangZheng2024}.
In Section~\ref{Sec-Geometrically-2-Dimensional-d-VASS} we shall give a different and simpler proof of Theorem~\ref{theorem-d-vasslps}.
A consequence of Theorem~\ref{theorem-d-vasslps} is the following corollary that is significant to the design of our version of the KLMST algorithm.
\begin{corollary}\label{2023-12-31-xh}
If $\mathcal{W}_{\mathbf{a}\xi\mathbf{b}}\ne\emptyset$ and $\xi_j$ is geometrically $2$-dimensional, then $\xi_j$ can be replaced by a geometrically $2$-dimensional LCGS $\lambda_j$ of size $2^{O(|\xi_j|)}$ rendering true $\mathcal{W}_{\mathbf{a}\xi'\mathbf{b}}\ne\emptyset$, where $\xi'$ is obtained from $\xi$ by substituting $\lambda_j$ for $\xi_j$. Moreover $\xi'\sqsubseteq\xi$.
\end{corollary}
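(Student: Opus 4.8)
The plan is to apply Theorem~\ref{theorem-d-vasslps} \emph{locally} to the component $\xi_j$ and then check that splicing the resulting linear path scheme back into $\xi$ preserves everything we need. First I would unpack the hypothesis $\mathcal{W}_{\mathbf{a}\xi\mathbf{b}}\ne\emptyset$: fix a witness $\mathbf{w}\in\mathcal{W}_{\mathbf{a}\xi\mathbf{b}}$ and let $\mathbf{w}_j$ be its sub-walk inside $\xi_j$, running from $p_j(\mathbf{c})$ to $q_j(\mathbf{d})$ with $\mathbf{c},\mathbf{d}\in\mathbb{N}^D$. By the definition of witness, $\mathbf{w}_j$ is a walk admitted by $p_jG_jq_j$, so $\mathbf{w}_j\in\mathcal{W}_{p_j(\mathbf{c})G_jq_j(\mathbf{d})}$. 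Since $\xi_j$ is geometrically $2$-dimensional, $G_j$ is a geometrically $2$-dimensional VASS, and Theorem~\ref{theorem-d-vasslps} applied to $p_j(\mathbf{c})G_jq_j(\mathbf{d})$ produces a geometrically $2$-dimensional LCGS $\lambda_j$ that refines $p_jG_jq_j$, has size $|\lambda_j|=2^{O(|G_j|)}=2^{O(|\xi_j|)}$, and admits a witness $\mathbf{w}_j'$ to $\mathbf{c}\lambda_j\mathbf{d}$. Let $\xi'$ be $\xi$ with $\xi_j$ replaced by $\lambda_j$; this is a legitimate CGS since the components of $\lambda_j$ are LCGes (hence strongly connected) and the overall entry and exit states of $\lambda_j$ are $p_j$ and $q_j$, matching the connecting edges of $\xi$ incident to $p_j$ and $q_j$ (if any).

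Next I would verify the two conclusions. For $\xi'\sqsubseteq\xi$: any path admitted by $\xi'$ factors as $\pi_0\mathbf{t}_1\pi_1\ldots\mathbf{t}_k\pi_k$ with $\pi_i$ admitted by $\xi_i$ for $i\ne j$ and $\pi_j$ a path from $p_j$ to $q_j$ admitted by $\lambda_j$; since $\lambda_j$ refines $p_jG_jq_j=\xi_j$, the path $\pi_j$ is also admitted by $\xi_j$, so the whole path is admitted by $\xi$. For $\mathcal{W}_{\mathbf{a}\xi'\mathbf{b}}\ne\emptyset$: splice $\mathbf{w}_j'$ into $\mathbf{w}$ in place of $\mathbf{w}_j$. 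Because $\mathbf{w}_j'$ begins at $p_j(\mathbf{c})$ and ends at $q_j(\mathbf{d})$---exactly the two local states at which $\mathbf{w}_j$ is attached to the rest of $\mathbf{w}$---and because $\mathbf{w}_j'$ stays in $\mathbb{N}^D$, the spliced sequence $\mathbf{w}'$ is again a walk, coincides with $\mathbf{w}$ outside $\xi_j$, and is admitted by $\xi'$; hence $\mathbf{w}'\in\mathcal{W}_{\mathbf{a}\xi'\mathbf{b}}$. That $\lambda_j$ is geometrically $2$-dimensional is clause~(i) of Theorem~\ref{theorem-d-vasslps}, and the size bound has already been recorded.

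This argument is essentially bookkeeping once Theorem~\ref{theorem-d-vasslps} is available: the only point requiring any care is that the replacement is purely local---the portion of $\mathbf{w}$ outside $\xi_j$ is untouched and the seam locations $\mathbf{c},\mathbf{d}$ are unchanged---so that inserting a fresh local witness cannot violate any $\mathbb{N}^D$ constraint elsewhere in the walk, and, symmetrically, that a refinement of a single component lifts to a refinement of the whole CGS. I do not anticipate a real obstacle here; the substantive work lies in Theorem~\ref{theorem-d-vasslps} itself, whose simpler proof is postponed to Section~\ref{Sec-Geometrically-2-Dimensional-d-VASS}.
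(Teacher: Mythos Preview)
Your proposal is correct and follows essentially the same approach as the paper's own proof: fix a witness $\mathbf{w}$, isolate its sub-walk $\mathbf{w}_j$ through $\xi_j$, apply Theorem~\ref{theorem-d-vasslps} to obtain $\lambda_j$ together with a local witness $\mathbf{w}_j'$, and splice back to get $\xi'$ and $\mathbf{w}'$. Your version is simply more explicit about why the splice yields a walk and why refinement of one component lifts to refinement of the whole CGS, but the argument is the same.
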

\begin{proof}
Suppose $\mathbf{w}\in\mathcal{W}_{\mathbf{a}\xi\mathbf{b}}$ and  $\xi_j=p_jG_jq_j$.
Let $\mathbf{w}_j$ be the sub-walk of $\mathbf{w}$ from $p_j(\mathbf{a}_j)$ to $q_j(\mathbf{b}_j)$ inside $G_j$.
By Theorem~\ref{theorem-d-vasslps}, there is an LCGS $\lambda_j$ whose size is bounded by $2^{O(|\xi_j|)}$ such that $\lambda_j\sqsubseteq\xi_j$ and there is a witness $\mathbf{w}_j'$ to $\mathbf{a}_j\lambda_j\mathbf{b}_j$.
Let $\mathbf{w}'$ be obtained from $\mathbf{w}$ by substituting $\mathbf{w}_j'$ for $\mathbf{w}_j$.
Then $\mathbf{w}'\in\mathcal{W}_{\mathbf{a}\xi'\mathbf{b}}$, where $\xi'$ is obtained from $\xi$ by substituting $\lambda_j$ for $\xi_j$.
By construction $\xi'\sqsubseteq\xi$.
\end{proof}

Since the existence of a witness to an LCGS can be checked algebraically, Corollary~\ref{2023-12-31-xh} points out that the sequence in~(\ref{2024-10-28}) may stop at a geometrically $2$-dimensional CG, hence the tighter bound $k\le d-2$.

It should be emphasized that the exponential blowup stated in Corollary~\ref{2023-12-31-xh} is local.
The exponential is about $|\xi_j|$, not about $|\xi|$.
This is in contrast to the global exponential blowup incurred by the decomposition operations. 

\subsection{Size of Decomposition}
\label{sec-SuanFa}

We are ready to define our version of the KLMST decomposition algorithm that operates on CGSes.
Let $\mathbf{a}\Xi\mathbf{b}$ denote the current CGS.
Initially it is the input $\mathbf{a}\xi\mathbf{b}$.
Here is the outline of the algorithm:
\begin{enumerate}
\item \label{step1}
Apply the Eulerian simplification and the orthogonal simplification to each component of the current CGS.
\item If $\Xi_j$ is geometrically $2$-dimensional, guess an LCGS $\lambda_j$ of size $2^{O(|\Xi_j|)}$ and substitute $\lambda_j$ for $\Xi_j$.
\item If $\Xi_j$ is not geometrically $2$-dimensional, nondeterministically choose a minimal solution $\mathbf{m}\in\mathcal{H}(\mathcal{E}_{\mathbf{a}\Xi\mathbf{b}})$.
\begin{enumerate}
\item If $\Xi_j$ is bounded, construct an algebraic decomposition of $\Xi_j$ using $\mathbf{m}$, and then substitute it for $\Xi_j$.
\item
    If $\Xi_j$ is unbounded and not $\mathbf{m}$-pumpable, construct a combinatorial decomposition of $\Xi_j$ and substitute it for $\Xi_j$.
\end{enumerate}
\item If $\mathbf{a}\Xi\mathbf{b}$ is normal, output `yes' and terminate; if not, go to Step~\ref{step1}.
\end{enumerate}
The algorithm aborts whenever the current CGS is unsatisfiable.
A nondeterministic execution of the KLMST algorithm is {\em successful} if it outputs a normal CGS.
\begin{proposition}\label{2024-01-01-first}
$\mathcal{W}_{\mathbf{a}\xi\mathbf{b}}\ne\emptyset$ if and only if upon receiving the input $\mathbf{a}\xi\mathbf{b}$ the KLMST algorithm has a successful execution.
\end{proposition}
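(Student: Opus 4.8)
The statement is the soundness-and-completeness of our KLMST algorithm, and I would prove the two implications separately, neither of which uses the geometric hypothesis.

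\textbf{Soundness ($\Leftarrow$).} Suppose the algorithm has a successful execution on $\mathbf{a}\xi\mathbf{b}$, yielding a refinement sequence as in~(\ref{2024-10-28}) whose final CGS $\mathbf{a}\Xi\mathbf{b}$ is normal. Each step taken is one of the four refinement operations or the geometrically $2$-dimensional substitution, and Lemma~\ref{Eulerian-Splf}, Lemma~\ref{WeqWp}, Lemma~\ref{my-decomposition-lemma}, Lemma~\ref{my-reduction-lemma} and Corollary~\ref{2023-12-31-xh} each assert that its output refines its input; since $\sqsubseteq$ is transitive, $\Xi\sqsubseteq\xi$. By Proposition~\ref{2024-11-01} there is a witness $\mathbf{w}$ to $\mathbf{a}\Xi\mathbf{b}$, i.e.\ a walk from $p(\mathbf{a})$ to $q(\mathbf{b})$ admitted by $\Xi$; being admitted by $\Xi$ it is admitted by $\xi$, so $\mathbf{w}\in\mathcal{W}_{\mathbf{a}\xi\mathbf{b}}\ne\emptyset$.

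\textbf{Completeness ($\Rightarrow$).} Assume $\mathcal{W}_{\mathbf{a}\xi\mathbf{b}}\ne\emptyset$. I would construct one successful execution by maintaining the invariant that the current instance $\mathbf{a}\Xi\mathbf{b}$ admits a witness $\mathbf{w}$. This holds initially. At each iteration, first pick a minimal solution $\mathbf{m}\in\mathcal{H}(\mathcal{E}_{\mathbf{a}\Xi\mathbf{b}})$ with $\mathbf{m}\le\mathbf{w}$; such an $\mathbf{m}$ exists by the Hilbert-base decomposition~(\ref{2024-01-31}) of $\mathbf{w}$ viewed as a solution. Then resolve every nondeterministic choice in Step~\ref{step1} and in Step~3 to agree with $\mathbf{w}$ and $\mathbf{m}$: Lemma~\ref{Eulerian-Splf} and Lemma~\ref{WeqWp} provide an Eulerian and an orthogonal simplification that keep $\mathbf{w}$ a witness, while Lemma~\ref{my-decomposition-lemma} (respectively Lemma~\ref{my-reduction-lemma}) provides an algebraic (respectively combinatorial) decomposition of the chosen component still admitting $\mathbf{w}$. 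In the geometrically $2$-dimensional case, Corollary~\ref{2023-12-31-xh} yields $\Xi'\sqsubseteq\Xi$ with $\mathcal{W}_{\mathbf{a}\Xi'\mathbf{b}}\ne\emptyset$; here the invariant survives, but the witness must be replaced by the new one $\mathbf{w}'$ the corollary supplies (and a fresh compatible $\mathbf{m}'$ is chosen at the next iteration). Finally, finiteness of the sequence follows from Section~\ref{Sec-Geometric-Dimension}: Lemma~\ref{decomp-2023-05-11} and Lemma~\ref{2024-01-17} show every algebraic or combinatorial decomposition strictly lowers the geometric dimension of the affected component, a geometrically $2$-dimensional component is turned into an LCGS and never decomposed thereafter, and the simplifications do not raise geometric dimension while strictly shrinking graph size; hence a lexicographic measure on $(\text{largest component dimension},\ \text{total graph size})$ decreases and the loop halts after at most $k\le d$ decomposition rounds interleaved with finitely many simplifications. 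A halting iteration must end at Step~4 answering `yes': if $\mathbf{a}\Xi\mathbf{b}$ were not normal, then, since it is satisfiable by the invariant, the chosen $\mathbf{m}$ cannot certify normality, so some component $\Xi_j$ is neither an LCGS nor unbounded-and-$\mathbf{m}$-pumpable, hence Step~2 (if $\Xi_j$ is geometrically $2$-dimensional), Step~3a (if $\Xi_j$ is bounded) or Step~3b (if $\Xi_j$ is unbounded and not $\mathbf{m}$-pumpable) still applies, contradicting termination. So the constructed execution is successful.

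\textbf{Main obstacle.} Soundness and the bare invariant-tracking are routine; the delicate points are twofold. First, the termination measure must genuinely decrease in spite of the exponential graph blow-ups that decompositions incur — this is exactly where Lemma~\ref{decomp-2023-05-11} and Lemma~\ref{2024-01-17} carry the argument, since the size increase is compensated by a strict drop in geometric dimension. Second, one must keep the pair $(\mathbf{w},\mathbf{m})$ consistent from one iteration to the next, in particular across the geometrically $2$-dimensional substitution where the witness itself changes, so that the decomposition lemmas of Section~\ref{Sec-Refinement-Operations}, which are phrased relative to a fixed witness and a fixed compatible minimal solution, remain applicable; getting this bookkeeping of "which nondeterministic branch to take" exactly right is the part that needs care.
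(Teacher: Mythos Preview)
Your proof is essentially correct and follows the same approach as the paper: soundness via Proposition~\ref{2024-11-01} together with the refinement chain, and completeness by tracking a witness through the nondeterministic choices using Lemmas~\ref{Eulerian-Splf}, \ref{WeqWp}, \ref{my-decomposition-lemma}, \ref{my-reduction-lemma} and Corollary~\ref{2023-12-31-xh}. The paper's own proof is considerably terser and does not spell out termination here at all (that is deferred to Section~\ref{Sec-Reachability-in-d-VASS}).

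One small correction: your proposed lexicographic measure $(\text{largest component dimension},\ \text{total graph size})$ does not decrease under a decomposition step. If $\Xi$ has components of geometric dimensions $5,5,3$ and you decompose one of the dimension-$5$ components into several dimension-$4$ pieces, the maximum stays $5$ while the total size can grow. What does work is the multiset of component geometric dimensions under the Dershowitz--Manna multiset ordering (a decomposition replaces one element by finitely many strictly smaller ones; simplifications only delete or keep elements), possibly paired with graph size to handle pure simplification steps. Also, the minimal solution $\mathbf{m}\le\mathbf{w}$ should be chosen \emph{after} the Eulerian simplification of Step~\ref{step1}, not before: until the unused edges are removed, $\mathbf{w}$ need not satisfy~(\ref{all-edges}) and hence is not a solution to $\mathcal{E}_{\mathbf{a}\Xi\mathbf{b}}$, so no $\mathbf{m}\le\mathbf{w}$ is available yet. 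Both points are bookkeeping rather than conceptual.
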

\begin{proof}
If $\mathcal{W}_{\mathbf{a}\xi\mathbf{b}}\ne\emptyset$,  Lemma~\ref{Eulerian-Splf}, Lemma~\ref{WeqWp}, Lemma~\ref{my-decomposition-lemma}, Lemma~\ref{my-reduction-lemma} and Corollary~\ref{2023-12-31-xh} imply that there is a successful execution.
If $\mathcal{W}_{\mathbf{a}\xi\mathbf{b}}=\emptyset$, the algorithm must abort in all executions.
To prove the claim, assume that there were a successful execution that produces a normal CGS $\Xi$.
The execution could have guessed wrong minimal solutions.
But the point is that the CGS sequence the execution have generated is a sequence of successive refinements.
By Proposition~\ref{2024-11-01}, a witness to $\mathbf{a}\Xi\mathbf{b}$ can be constructed, which is also a witness to $\mathbf{a}\xi\mathbf{b}$ because $\Xi$ is a refinement of $\xi$.
This is in contradiction to the assumption $\mathcal{W}_{\mathbf{a}\xi\mathbf{b}}=\emptyset$.
\end{proof}

We need to know the output size of a successful execution.
The following theorem provides an answer, where $F_d$ and is defined in Section~\ref{s-Non-Elementary-Complexity-Class}.
The proof of the theorem is placed in Section~\ref{Sec-Proliferation-Tree}.
\begin{theorem}\label{OUR-F-D}
Upon receiving a geometrically $d$-dimensional input instance $\mathbf{a}\xi\mathbf{b}$, a successful execution of the KLMST algorithm produces some $\mathbf{a}\Xi\mathbf{b}\in g\mathbb{VASS}^d$ such that  $|\mathbf{a}\Xi\mathbf{b}|<F_d(|\mathbf{a}\xi\mathbf{b}|)$.
\end{theorem}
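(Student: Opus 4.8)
The plan is to recast a successful execution of the KLMST algorithm as a finite rooted tree, the \emph{proliferation tree} $\mathcal{T}$, and to bound $|\mathbf{a}\Xi\mathbf{b}|$ by an analysis of $\mathcal{T}$ carried out level by level. The root of $\mathcal{T}$ is the input (with $G$ first split into its strongly connected components), whose components are geometrically at most $d$-dimensional. Whenever the algorithm performs an algebraic decomposition (Section~\ref{Sec-Algebraic-Aspect-of-VASS-Reachability}) or a combinatorial decomposition (Section~\ref{Sec-Coverability}) on a component, the components of the resulting CGS in~(\ref{2019-09-28}) become the children of that node; a node is a leaf once its component is replaced by a geometrically $2$-dimensional LCGS via Corollary~\ref{2023-12-31-xh}, or is already normal. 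By Lemma~\ref{Eulerian-Splf}, Lemma~\ref{WeqWp} and Lemma~\ref{2024-09-15} the Eulerian and orthogonal simplifications interleaved with the decompositions create no new nodes and do not increase sizes, so they can be disregarded in the size bookkeeping; likewise each refinement only shrinks geometric dimension, so $\Xi$ stays geometrically at most $d$-dimensional, which is the easy half of $\mathbf{a}\Xi\mathbf{b}\in g\mathbb{VASS}^d$.

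First I would bound the depth of $\mathcal{T}$. By Lemma~\ref{decomp-2023-05-11} and Lemma~\ref{2024-01-17} every child has strictly smaller geometric dimension than its parent, so, the root being geometrically at most $d$-dimensional and the descent stopping at geometric dimension $2$ by Corollary~\ref{2023-12-31-xh}, the depth of $\mathcal{T}$ is at most $d-2$.

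Next comes the size bound, which is the substance of the theorem. I would prove, by induction on the geometric dimension $c$ of a node $v$ (equivalently, downward over the levels of $\mathcal{T}$), a statement of the form: the portion of $\Xi$ that $v$ is eventually expanded into has size at most $g_c(|v|)$, where $|v|$ is the size of $v$'s component and $g_c$ is a function in $\mathbf{F}_c$. The base case $c=2$ is exactly Corollary~\ref{2023-12-31-xh}, giving $g_2(s)=2^{O(s)}$. For the inductive step, a dimension-$c$ node $v$ is decomposed, using some $\mathbf{m}\in\mathcal{H}(\mathcal{E}_{\mathbf{a}\Xi\mathbf{b}})$ of the current CGS, into at most $\|\mathbf{m}\|_1$ children, each of geometric dimension at most $c-1$; Lemma~\ref{2024-01-08-sln} bounds $\|\mathbf{m}\|_1$, and Lemma~\ref{my-decomposition-lemma} and Lemma~\ref{my-reduction-lemma} bound the size of each child (in the combinatorial case through the Rackoff bound $B=A^{1+D^{D}}$ of Section~\ref{Sec-Coverability}), all of them exponential in the current size. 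Applying the inductive hypothesis $g_{c-1}$ to the children and summing --- the number of applications being governed by the exponentially bounded number of children --- the recursion $F_c(x)=F_{c-1}^{x+1}(x)$ produces a $g_c\in\mathbf{F}_c$. Instantiating $c=d$ and $|v|=|\mathbf{a}\xi\mathbf{b}|$, and then adding the contributions of the polynomially many root components together with the connecting edges, yields $|\mathbf{a}\Xi\mathbf{b}|<F_d(|\mathbf{a}\xi\mathbf{b}|)$, after absorbing the polynomial overhead into the argument using the growth rate of $F_d$ and the closure of $\mathbf{F}_d$ under elementary reductions.

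The main obstacle is the inductive step of the size bound, and specifically the charging scheme. A decomposition is performed relative to the characteristic system of the \emph{whole} current CGS, so the size that governs how many children a node has, and how large they are, is a global quantity already inflated by every node processed before it; one has to fix a processing order for the components and a way of attributing the global blowup of each single decomposition step to the individual subtrees, so that the nested iterations telescope precisely into the definition of $F_d$ and land in $\mathbf{F}_d$ rather than one level higher. A secondary complication is that the ambient dimension $D$ is part of the input, so the $D^{D}$ that enters the Rackoff bound must be kept inside the polynomial in the base of each exponential step rather than being allowed to become an iteration count.
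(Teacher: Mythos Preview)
Your overall architecture matches the paper's: recast the execution as a tree whose depth is at most $d-2$ by Lemmas~\ref{decomp-2023-05-11} and~\ref{2024-01-17}, and bound the output size by a level-by-level recursion that collapses to $F_d$. You also correctly isolate the real difficulty, namely that each decomposition's fan-out and child sizes are governed by the \emph{global} current CGS rather than by the local node. Where your proposal stops short is exactly there: the inductive statement ``the expansion of $v$ has size at most $g_c(|v|)$'' with $|v|$ the local size cannot be made to work as written, because the exponent in Lemmas~\ref{my-decomposition-lemma} and~\ref{my-reduction-lemma} is $\texttt{poly}(|\mathbf{a}\Xi\mathbf{b}|)$ for the whole $\Xi$, not $\texttt{poly}(|v|)$; two nodes of the same local size and dimension can produce wildly different subtrees depending on what has already been expanded elsewhere.

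The paper resolves this not by a clever charging scheme attributing global blowup to individual subtrees, but by abstracting the whole process to a pure combinatorial game: a tree is grown one leaf at a time, and the number of children attached is $2^{p(l)}$ for a fixed elementary $p$, where $l$ is the \emph{current total leaf count}. The key step you are missing is an optimality lemma: among all growth orders, the depth-first order maximises the final leaf count. This is proved by a swapping argument --- if at some step a shallower leaf is grown before a deeper one, exchanging the two steps can only increase the final number of leaves. Once depth-first is known to be the worst case, the recursion becomes clean: when you begin expanding the $(k{+}1)$-th child of the root, everything to its left is already fully grown and its total contributes a single number $F_{\hbar}(M_k)$ that feeds into $M_{k+1}$; iterating over the $n$ children gives $F_{\hbar}^{n+1}(n)=F_{\hbar+1}(n)$. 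This is precisely the telescoping you anticipated, but it is the depth-first lemma, not an attribution scheme, that makes it go through. Your secondary worry about $D^{D}$ is harmless: it enters only inside a single exponential per step and is absorbed into the elementary $p$.
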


In the light of Proposition~\ref{2024-01-01-first} and Theorem~\ref{OUR-F-D}, our algorithm for $g\mathbb{VASS}^d$ is very simple: 
Upon receiving a reachability instance $\mathbf{a}G\mathbf{b}$, guess a walk from $p(\mathbf{a})$ to $q(\mathbf{b})$ whose length is bounded by $F_d(|\mathbf{a}G\mathbf{b}|)$.
The proof of Theorem~\ref{MAIN} is complete.

\section{Geometrically $2$-Dimensional VASS}
\label{Sec-Geometrically-2-Dimensional-d-VASS}

We now prove that every geometrically $2$-dimensional VASS can be refined to a geometrically $2$-dimensional LCGS of exponential size.
Let $G=(Q,T)$ be a geometrically $2$-dimensional $D$-VASS.
Notice that
\begin{eqnarray*}
|Q|,|T|,\|T\|\le 2^{O(|G|)}. \end{eqnarray*}
For every VASS $G'=(Q',T')$ introduced in the rest of the section, we will maintain the following complexity bounds.
\begin{eqnarray}\label{2024-01-newbound}
|Q'|,|T'|,\|T'\|\le 2^{O(|G|)}.
\end{eqnarray}
The bound $2^{O(|G|)}$ in Theorem~\ref{theorem-d-vasslps} is essentially due to~(\ref{2024-01-newbound}).

Our proof follows the line of investigation advocated in~\cite{BlondinFinkelGoellerHaaseMcKenzie2015,FuYangZheng2024}.
It differs from those of~\cite{FuYangZheng2024} in technical terms.
In particular it makes use of a simpler technique to project a geometrically $2$-dimensional walk to a $2$-dimensional walk.

\subsection{Circular Paths in a Geometrically  $2$-Dimensional Space}
\label{sec-DR}

Let $\mathbf{C}_G$ be the set of the simple cycles of $G$.
From now on we see $V_G$ as the subspace of $\mathbb{Q}^D$ spanned by $\mathbf{C}_G$.
Recall that a subspace passes through $\mathbf{0}$.
Let $\mathbf{r},\mathbf{s}$ be locations in $\mathbb{Z}^D$, and consider a circular path $q(\mathbf{r})\stackrel{\O}{\longrightarrow}_{\mathbb{Z}^d}q(\mathbf{s})$.
We are going to take a look at how to project this path onto a two dimensional space.
Let $\mathbf{c}=\mathbf{s}-\mathbf{r}$.
Let $\bot_{\mathbf{c}}\subseteq[D]$ be the set of $i$ such that $\mathbf{c}(i)=0$.
The vector $\mathbf{c}$ is either collinear with $\Delta(\o)$ for some $\o\in\mathbf{C}_G$, or is a linear combination of $\Delta(\o_1)$ and $\Delta(\o_2)$, where $\o_1,\o_2\in\mathbf{C}_G$ and $\Delta(\o_1),\Delta(\o_2)$ span $V_G$.
In the former case
\begin{equation}\label{2024-01-11-night}
\frac{|c_i|}{|c_k|}= \frac{|\Delta(\o)(i)|}{|\Delta(\o)(k)|}\le|Q|{\cdot}\|T\|
\end{equation}
for all $i,k\in[D]\setminus\bot_{\mathbf{c}}$.
In the latter case,
\begin{eqnarray*}
\mathbf{c} &=& \ell_1\Delta(\o_1)+\ell_2\Delta(\o_2)
\end{eqnarray*}
for some nonzero rational numbers $\ell_1,\ell_2$.
Let $\mathbf{u}'$ be $\Delta(\o_1)$ if $\ell_1>0$ and be $-\Delta(\o_1)$ otherwise.
Similarly let $\mathbf{v}'$ be $\Delta(\o_2)$ if $\ell_2>0$ and be $-\Delta(\o_2)$ otherwise.
Clearly
$\|\mathbf{u}'\|,\|\mathbf{v}'\| \,\le\, |Q|{\cdot}\|T\|$ and
\begin{eqnarray}\label{2023-11-08}
\mathbf{c} &=& |\ell_1|{\cdot}\mathbf{u}'+|\ell_2|{\cdot}\mathbf{v}'.
\end{eqnarray}
The vectors $\mathbf{u}',\mathbf{v}'$ may not be in $Z_{\mathbf{c}}=Z_{\#_1,\ldots,\#_D}$.
We will replace them by those in $Z_{\mathbf{c}}$.
The latter are obtained by solving the homogeneous equation system
\begin{eqnarray}\label{2024-01-11-evening}
\begin{pmatrix}
\#_1 z_1 \\
\vdots \\
\#_D z_D
\end{pmatrix} &=& z'\mathbf{u}'+z''\mathbf{v}'.
\end{eqnarray}
Since $\mathbf{u}',\mathbf{v}'$ are not collinear, the vectors $(z_1,\ldots,z_D)^T$ we are looking for are nontrivial.
It follows from~(\ref{2023-11-08}) that $m'\mathbf{c},k',l'$ form a solution to~(\ref{2024-01-11-evening}) for some positive integers $m',k',l'$.
The solution must be a multi-set of the minimal solutions to~(\ref{2024-01-11-evening}).
In other words, $\mathbf{c}=\sum_{i\in[l]}\ell^i\mathbf{m}_i$, where $\ell^1,\ldots,\ell^l$ are positive rational numbers, and $\mathbf{m}_1,\ldots,\mathbf{m}_l$ are the minimal solutions restricted to the left part of~(\ref{2024-01-11-evening}).
Because $V_G$ is geometrically $2$-dimensional and $\mathbf{m}_1,\ldots,\mathbf{m}_l$ are in the same orthant, we derive from $\mathbf{c}=\sum_{i\in[l]}\ell^i\mathbf{m}_i$ that either $\mathbf{c}=\ell\mathbf{u}$ for some positive rational number $\ell$ and some $\mathbf{u}\in\{\mathbf{m}_1,\ldots,\mathbf{m}_l\}$, or $\mathbf{c}=\ell\mathbf{u}+\ell'\mathbf{v}$ for some positive rational numbers $\ell,\ell'$ and some $\mathbf{u},\mathbf{v}\in\{\mathbf{m}_1,\ldots,\mathbf{m}_l\}$.
The former case has been treated in the beginning of this paragraph.
In the other case, notice that
$\|\mathbf{u}'\|,\|\mathbf{v}'\| \,\le\, |Q|{\cdot}\|T\|$ and Lemma~\ref{pottier-lemma} together imply that
\begin{equation}\label{2023-10-14}
\|\mathbf{u}\|,\|\mathbf{v}\| \,\le\, (|Q|{\cdot}\|T\|)^{O(1)}.
\end{equation}
We are going to project $\mathbf{c}$ onto two of its entries so that we can use Lemma~\ref{2023-12-27}.
The projection is defined by fixing two linearly independent pairs $(u_{\imath},v_{\imath})$, $(u_{\jmath},v_{\jmath})$, where
\begin{equation}\label{2024-08-27}
\{\imath,\jmath\}\cap\bot_{\mathbf{c}}=\emptyset.
\end{equation}
The pair are picked up by the following strategy:
\begin{enumerate}
\item If $u_{\imath}=0$ for some $\imath\in[D]\setminus\bot_{\mathbf{c}}$ and $v_{\jmath}=0$ for some $\jmath\in[D]\setminus\bot_{\mathbf{c}}$, then pick the pairs $(u_{\imath},v_{\imath})$ and $(u_{\jmath},v_{\jmath})$.
Notice that $u_{\imath}=0$ and $v_{\imath}=0$ cannot happen at the same time, otherwise $\mathbf{c}(\imath)=0$, contradicting to~(\ref{2024-08-27}).
For the same reason, $u_{\jmath}=0$ and $v_{\jmath}=0$ cannot happen at the same time.
\item If $u_{\imath}=0$ for some $\imath\in[D]\setminus\bot_{\mathbf{c}}$ and $v_{\jmath}\ne0$ for all $\jmath\in[D]\setminus\bot_{\mathbf{c}}$, then pick the pair $(u_{\imath},v_{\imath})$ and pick some $(u_{\jmath},v_{\jmath})$ such that $u_{\jmath}\ne0$.
    This is possible because $\mathbf{u}\ne\mathbf{0}$.
\item If $u_{\imath}\ne0$ for all $\imath\in[D]\setminus\bot_{\mathbf{c}}$ and $v_{\jmath}=0$ for some $\jmath\in[D]\setminus\bot_{\mathbf{c}}$, then pick the pair $(u_{\jmath},v_{\jmath})$ and pick some  $(u_{\imath},v_{\imath})$ such that $v_{\imath}\ne0$.
    This is possible because $\mathbf{v}\ne\mathbf{0}$.
\item If $u_{\imath}\ne0$ for all $\imath\in[D]\setminus\bot_{\mathbf{c}}$ and $v_{\jmath}\ne0$ for all $\jmath\in[D]\setminus\bot_{\mathbf{c}}$, then pick up linearly independent $(u_{\imath},v_{\imath})$ and $(u_{\jmath},v_{\jmath})$.
    This is possible because $\mathbf{u}$ and $\mathbf{v}$ are linearly independent.
\end{enumerate}
Now $c_i = \ell u_{i}+\ell'v_{i}$ and $|c_i| = \ell|u_{i}|+\ell'|v_{i}|$ for all $i\in[D]$.
The second equality is due to the fact that $\mathbf{u},\mathbf{v}$ are in the same orthant.
Let $k\in[D]\setminus\bot_{\mathbf{c}}\setminus\{\imath,\jmath\}$.
If $v_k=0$, then $v_{\jmath}=0$, and
\begin{equation}\label{2024-1-11}
\frac{|c_{\jmath}|}{|c_k|}= \frac{|u_{\jmath}|}{|u_k|} \le (|Q|{\cdot}\|T\|)^{O(1)},
\end{equation}
where the upper bound is from~(\ref{2023-10-14}).
If $u_k=0$, then $u_{\imath}=0$, and
\begin{equation}\label{2024-1-11-b}
\frac{|c_{\imath}|}{|c_k|}= \frac{|v_{\imath}|}{|v_k|} \le (|Q|{\cdot}\|T\|)^{O(1)}.
\end{equation}
If $u_k\ne0$ and $v_k\ne0$, then
\begin{equation}\label{2023-10-15}
\frac{|c_{\jmath}|}{|c_k|}=\frac{\ell{\cdot}|u_{\jmath}|+\ell'{\cdot}|v_{\jmath}|}{\ell{\cdot}|u_k|+\ell'{\cdot}|v_k|} \le \max\left\{\frac{|u_{\jmath}|}{|u_k|},\frac{|v_{\jmath}|}{|v_k|}\right\} \le (|Q|{\cdot}\|T\|)^{O(1)}.
\end{equation}
We remark that the upper bound in these inequalities is polynomially related to the upper bound in~(\ref{2024-01-11-night}).

\subsection{Dividing $\mathbb{N}^D$ into $\mathbb{L}_G$ and $\mathbb{U}_G$}
\label{sec-CP}

Suppose $q(\mathbf{a})\stackrel{\O}{\longrightarrow}_{\mathbb{N}^D}q(\mathbf{b})$.
Let $\mathbf{c}=\mathbf{b}-\mathbf{a}$.
Using the notations in Section~\ref{sec-DR}, we project $q(\mathbf{a})\stackrel{\O}{\longrightarrow}_{\mathbb{N}^D}q(\mathbf{b})$ onto the $2$-dimensional walk $q(\mathbf{a}(\imath),\mathbf{a}(\jmath))\stackrel{\O^{\imath,\jmath}}{\longrightarrow}_{\mathbb{N}^2}q(\mathbf{b}(\imath),\mathbf{b}(\jmath))$, where $\O^{\imath,\jmath}$ is the projection of $\O$ onto the $\imath$-th and the $\jmath$-th coordinates.
By Lemma~\ref{2023-12-27} there is a $2$-dimensional LCGS $\lambda$ of size $2^{O(|G|)}$ such that
(i) $\lambda$ refines $qGq$ and (ii) $\lambda$ contains at most two cycles $\O_1,\O_2$ and $\Delta(\O_1),\Delta(\O_2)\in Z_{\mathbf{b}-\mathbf{a}}$ and
(iii) some $q(\mathbf{a}(\imath),\mathbf{a}(\jmath))\stackrel{\pi}{\longrightarrow}_{\mathbb{Z}^2}q(\mathbf{b}(\imath),\mathbf{b}(\jmath))$ is admitted by $\lambda$.
We restore from $\pi$ a $D$-dimensional path
$q(\mathbf{a})\stackrel{\varpi}{\longrightarrow}_{\mathbb{Z}^D}q(\mathbf{b}')$
such that $\varpi^{\imath,\jmath}=\pi$.
We remark that $\varpi$ might not be unique since two distinct transitions of $G$ might coincide after the projection.
But since $q(\mathbf{a})\stackrel{\O}{\longrightarrow}_{\mathbb{N}^D}q(\mathbf{b})$ and $q(\mathbf{a})\stackrel{\varpi}{\longrightarrow}_{\mathbb{Z}^D}q(\mathbf{b}')$ are circular walks, $\mathbf{a},\mathbf{b},\mathbf{b}'$ are in the same $2$-dimensional plane.
It follows that, for every $i\in[D]\setminus\bot_{\mathbf{c}}\setminus\{\imath,\jmath\}$, the equality $\mathbf{b}(i)=\mathbf{b}'(i)$ is implied by $\mathbf{b}'(\imath)=\mathbf{b}(\imath)$ and $\mathbf{b}'(\jmath)=\mathbf{b}(\jmath)$.
Also notice that $\mathbf{b}'(i)=\mathbf{a}(i)=\mathbf{b}(i)$ for every $i\in\bot_{\mathbf{c}}$.
Hence $\mathbf{b}'=\mathbf{b}$.

The size of $\lambda$ is bounded by $2^{O(|G|)}$.
If $\mathbf{a}(i),\mathbf{b}(i)\ge|\lambda|{\cdot}\|T\|=2^{O(|G|)}{\cdot}\|T\|$ for all $i\in[D]$, the path $q(\mathbf{a})\stackrel{\varpi}{\longrightarrow}_{\mathbb{Z}^D}q(\mathbf{b})$ is an $\{\imath,\jmath\}$-walk.
It is not guaranteed to be in the first orthant though.
But in the light of~(\ref{2024-01-11-night}), (\ref{2024-1-11}), (\ref{2024-1-11-b}) and~(\ref{2023-10-15}),
if $q(\mathbf{a})\stackrel{\varpi}{\longrightarrow}_{\mathbb{Z}^D}q(\mathbf{b})$ passes a location $\mathbf{d}$ such that $0\le d_k<\|T\|$ for some $k\in[D]\setminus\bot_{\mathbf{c}}\setminus\{\imath,\jmath\}$, then either $d_{\imath}<(|Q|{\cdot}\|T\|)^{O(1)}{\cdot}\|T\|$ or $d_{\jmath}<(|Q|{\cdot}\|T\|)^{O(1)}{\cdot}\|T\|$.
Consequently if for all $i\in[D]$,
\begin{equation}\label{2024-08-29}
|\mathbf{a}(i)|,|\mathbf{b}(i)|\ge B_G \stackrel{\rm def}{=} 2^{O(|G|)}{\cdot}\|T\| +(|Q|{\cdot}\|T\|)^{O(1)}{\cdot}\|T\|=2^{O(|G|)},
\end{equation}
then $q(\mathbf{a})\stackrel{\varpi}{\longrightarrow}_{\mathbb{N}^D}q(\mathbf{b})$.
The parameter $B_G$ tells us to divide the first orthant into two parts, one way up in the space, the other near to the axis planes.
Let us set
\begin{eqnarray*}
\mathbb{U}_G &=& [B_G,\infty)^D, \\
\mathbb{L}_{G} &=& \bigcup_{i\in[D]}\mathbb{I}_i,
\end{eqnarray*}
where for each $i\in[D]$,
\begin{eqnarray*}
\mathbb{I}_i &=& \mathbb{N}^{i-1}\times[0,B_G)\times\mathbb{N}^{D-i}.
\end{eqnarray*}
A walk is then divided to a sequence of sub-walks whose locations belong to either $\mathbb{L}_{G}$ or $\mathbb{U}_G$ in an alternating manner.
These sub-walks are connected by edges that cross between $\mathbb{L}_{G}$ and $\mathbb{U}_G$.
Confer the diagram in Figure~\ref{borderline}.
We are going to prove by induction that every sub-walk completely inside $\mathbb{U}_G$ or completely inside $\mathbb{L}_G$ can be admitted by LCGSes of small size.

\subsection{Dividing Walks by the Border between $\mathbb{L}_G$ and $\mathbb{U}_G$}
\label{sec-W-U-L}

Consider $p(\mathbf{a})\longrightarrow^{*}_{\mathbb{N}^D}q(\mathbf{b})$.
For each $i$ orthogonal to $G$, we carry out the orthogonal simplification, after which the number of vertices is below the bound in~(\ref{2024-01-newbound}), and the $i$-th entry can be ignored.
So we only have to consider the indices that are not orthogonal to $G$.

\begin{figure}[t]
\centering
\includegraphics[scale=0.8]{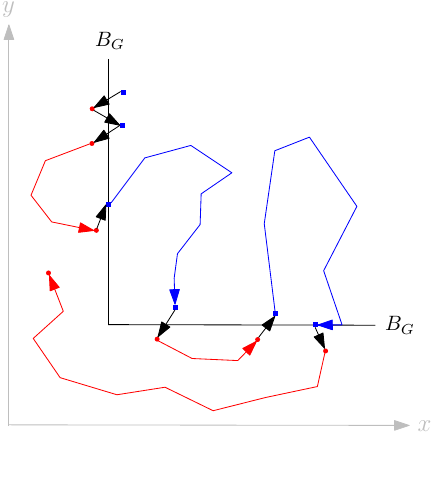}
\caption{A Walk Divided by the Border between $\mathbb{L}_G$ and $\mathbb{U}_G$.}  \label{borderline}
\end{figure}

Let $\mathbb{U}_{\rm cfg}$ be the multi-set of configurations in $p(\mathbf{a})\longrightarrow^{*}_{\mathbb{N}^D}q(\mathbf{b})$ that are also in $\mathbb{U}_G$.
Let $p_1(\mathbf{a}_1),p_2(\mathbf{a}_2),\ldots,p_b(\mathbf{a}_b)$ be the elements of $\mathbb{U}_{\rm cfg}$, listed in the order that they appear in $p(\mathbf{a})\longrightarrow^{*}_{\mathbb{N}^D}q(\mathbf{b})$.
If $b<|Q|$, the total length of the sub-walks of $p(\mathbf{a})\longrightarrow^{*}_{\mathbb{N}^D}q(\mathbf{b})$ in $\mathbb{U}_G$ is short. 
If $b\ge|Q|$, we look for two configurations $p_i(\mathbf{a}_i),p_{i'}(\mathbf{a}_{i'})$, whose existence is guaranteed, rendering true the following:
\begin{itemize}
\item $i<i'$ and $p_i=p_{i'}$;
\item $p_i\notin\{p_1,\ldots,p_{i-1}\}$ and $p_{i'}\notin\{p_{i'+1},\ldots,p_b\}$; and
\item $\{p_1,\ldots,p_{i-1}\}\cap\{p_{i'+1},\ldots,p_b\}=\emptyset$.
\end{itemize}
Clearly $p_i(\mathbf{a}_i)\longrightarrow^{*}_{\mathbb{N}^D}p_{i'}(\mathbf{a}_{i'})$ is a sub-walk of $p(\mathbf{a})\longrightarrow^{*}_{\mathbb{N}^D}q(\mathbf{b})$ that is circular and $\mathbf{a}_i,\mathbf{a}_{i'}\in\mathbb{U}_G$.
By applying the argument inductively to $p(\mathbf{a})\longrightarrow^{*}_{\mathbb{N}^D}p_{i-1}(\mathbf{a}_{i-1})$ and $p_{i'+1}(\mathbf{a}_{i'+1})\longrightarrow^{*}_{\mathbb{N}^D}q(\mathbf{b})$ we conclude that $p(\mathbf{a})\longrightarrow^{*}_{\mathbb{N}^D}q(\mathbf{b})$ must be of the following form for some $c<|Q|$.
\begin{equation}\label{2024-02-04}
p(\mathbf{a})\xrightarrow{\pi_1 \alpha_1\varpi_1 \alpha_2\pi_2\varpi_2\ldots \alpha_c\pi_c\varpi_c \alpha_{c+1}\pi_{c+1}}_{\mathbb{N}^D}q(\mathbf{b}).
\end{equation}
In~(\ref{2024-02-04}), $\varpi_1,\varpi_2\ldots,\varpi_c$ are circular walks whose first and last configurations are in $\mathbb{U}_G$.
Suppose $\varpi_i$ is $p_i(\mathbf{a}_i)\stackrel{\varpi_i}{\longrightarrow}_{\mathbb{N}^D}p_{i'}(\mathbf{a}_{i'})$ for $i\in[c]$.
By Lemma~\ref{2023-12-27} and the correctness of the projection method discussed in Section~\ref{sec-CP},
$p_i(\mathbf{a}_i)\longrightarrow^{*}_{\mathbb{N}^D}p_{i'}(\mathbf{a}_{i'})$ is admitted by some LCGS $\lambda_{i,i'}$ that refines $p_{i}Gp_{i'}$ and contains at most two cycles $\O_1,\O_2$ such that $\Delta(\O_1),\Delta(\O_2)\in Z_{\mathbf{a}_{i'}-\mathbf{a}_{i}}$; moreover $|\lambda_{i,i'}|=2^{O(|G|)}$.
Also in~(\ref{2024-02-04}), 
$\alpha_1, \alpha_2\ldots, \alpha_{c+1}$ are short walks crossing between $\mathbb{U}_G$ and $\mathbb{L}_G$, which are among the black arrows in Figure~\ref{borderline}.
Notice that $|\alpha_1|+|\alpha_2|+\ldots+|\alpha_{c+1}|\le2|Q|$, otherwise there would be a cycle whose initial and final locations are in $\mathbb{U}_G$, which would be one of $\varpi_1,\varpi_2\ldots,\varpi_c$.
Finally all the walks $\pi_1,\pi_2\ldots,\pi_{c+1}$ are completely in $\mathbb{L}_G$.
It follows from the above discussion that for each $i\in[c+1]$, the sub-walk $\pi_i$ must be of the form $\pi_i^0\O_i^1\pi_i^1\ldots\pi_i^{k_i}\O_i^{k_i}\pi_i^{k_i+1}$, where $\O_i^1,\ldots,\O_i^{k_i}$ are circular and $|\pi_i^0\pi_i^1\ldots\pi_i^{k_i}\pi_i^{k_i+1}|<|Q|$.
We will show in Section~\ref{sec-Walks-in-LG} that every circular walk in $\mathbb{L}_G$ is admitted by a $2^{O(|G|)}$-size LCGS, which will complete the proof of Theorem~\ref{theorem-d-vasslps}.

\subsection{Walks in $\mathbb{L}_G$}
\label{sec-Walks-in-LG}

Suppose $o(\mathbf{c})\longrightarrow_{\mathbb{L}_G}^{*}o(\mathbf{c}')$.
We introduce, for distinct $i,j\in[D]$, the {\em crossing region} between $\mathbb{I}_i$ and $\mathbb{I}_j$ defined as follows:
\begin{eqnarray*}
\mathbb{J}_{ij} &\stackrel{\rm def}{=}& \mathbb{I}_i\cap\mathbb{I}_j.
\end{eqnarray*}
Let $D'=D-|\bot_G|$, recall that $\bot_G$ is the set of the indices orthogonal to $G$.
We shall only be concerned with the crossing region $\mathbb{J}_{ij}$ where neither $i$ nor $j$ being orthogonal to $G$.
There are altogether $\frac{1}{2}D'(D'\,{-}\,1)$ such crossing regions.
A walk that starts from some location in $\mathbb{I}_i\setminus\mathbb{I}_j$ and goes to a location in $\mathbb{I}_j\setminus\mathbb{I}_i$, or vice versa, may pass through the crossing region $\mathbb{J}_{ij}$ several times.
Now suppose there is a sub-walk $o_0(\mathbf{c}_0)\longrightarrow_{\mathbb{L}_G}^{*}o_1(\mathbf{c}_1)$ that passes at least $|Q|{\cdot}B_G^2{\cdot}\|T\|+1$ configurations in $\mathbb{J}_{ij}$.
By the pigeon hole principle, $o_0(\mathbf{c}_0)\longrightarrow_{\mathbb{L}_G}^{*}o_1(\mathbf{c}_1)$ must pass two configurations $o'(\mathbf{c}'),o''(\mathbf{c}'')$ such that $o'=o''$, $(\mathbf{c}''-\mathbf{c}')(i)=0$ and $(\mathbf{c}''-\mathbf{c}')(j)=0$.
In other words there is a circular walk $\O'$ whose displacement is orthogonal to both the $i$-th axis and the $j$-th axis.
In the presence of $\O'$, the number of configurations in $\mathbb{I}_{ik}$, for each $k\in[D]\setminus\bot_G\setminus\{i,j\}$, the walk $o(\mathbf{c})\longrightarrow_{\mathbb{L}_G}^{*}o(\mathbf{c}')$ may pass must be bounded by $|Q|{\cdot}B_G^2{\cdot}\|T\|$.
Otherwise by the same argument, there would be a circular walk $\O''$ whose displacement is orthogonal to both the $i$-th axis and the $k$-th axis, which would imply that the $2$-dimensional space $V_G$ were orthogonal to the $i$-th axis, contradicting to our assumption $i\notin\bot_G$.
Thus there can be two cases.
\begin{enumerate}
\item For each $k\in[D]\setminus\bot_G\setminus\{i\}$, the circular walk $o(\mathbf{c})\longrightarrow_{\mathbb{L}_G}^{*}o(\mathbf{c}')$ passes no more than $|Q|{\cdot}B_G^2{\cdot}\|T\|$ configurations in $\mathbb{J}_{ik}$.
    In this case we divide the first quadrant by the hyperplane $x_i=B_G$.
    \begin{itemize}
    \item
    In $\mathbb{I}_i$ there are at most $\frac{1}{2}{\cdot}(D'-1){\cdot}|Q|{\cdot}B_G^2{\cdot}\|T\|+1$ sub-walks between these configurations.
    By the construction described in Section~\ref{Sec-Eulerian-Simplification} right after Lemma~\ref{WeqWp}, we can turn $\mathbb{I}_i$ to a VASS to which  $i$ is orthogonal, which by the definition of $B_G$ in~(\ref{2024-08-29}), satisfies the bound in~(\ref{2024-01-newbound}).
    By induction each of the sub-walks in $\mathbb{I}_i$ is admitted by an LCGS of size $2^{O(|G|)}$.
    \item
    There are also at most $\frac{1}{2}{\cdot}(D'-1){\cdot}|Q|{\cdot}B_G^2{\cdot}\|T\|+1$ sub-walks in $\mathbb{L}_G\setminus\mathbb{I}_i$.
    There are altogether $\frac{1}{2}(D'\,{-}\,1)(D'\,{-}\,2)$ crossing regions in $\mathbb{L}_G\setminus\mathbb{I}_i$.
    By induction on the number of crossing regions, each of the sub-walks is admitted by an LCGS of size $2^{O(|G|)}$.
    \end{itemize}
    We conclude that $o(\mathbf{c})\longrightarrow_{\mathbb{L}_G}^{*}o(\mathbf{c}')$ is admitted by the concatenation of at most $(D'-1){\cdot}|Q|{\cdot}B_G^2{\cdot}\|T\|+2=2^{O(|G|)}$ LCGSes of size $2^{O(|G|)}$, the total size being $2^{O(|G|)}$.
    \item The circular walk $o(\mathbf{c})\longrightarrow_{\mathbb{L}_G}^{*}o(\mathbf{c}')$ passes more than $|Q|{\cdot}B_G^2{\cdot}\|T\|$ configurations in $\mathbb{J}_{ij}$, and for each $k\in[D]\setminus\bot_G\setminus\{i,j\}$, the circular walk $o(\mathbf{c})\longrightarrow_{\mathbb{L}_G}^{*}o(\mathbf{c}')$ passes no more than $|Q|{\cdot}B_G^2{\cdot}\|T\|$ configurations in $\mathbb{J}_{ik}$ and  no more than $|Q|{\cdot}B_G^2{\cdot}\|T\|$ configurations in $\mathbb{J}_{jk}$.
    We have proved in the above that there is a circular sub-walk $\O'$ of $o(\mathbf{c})\longrightarrow_{\mathbb{L}_G}^{*}o(\mathbf{c}')$ whose displacement is orthogonal to both the $i$-th axis and the $j$-th axis.
    Now $o(\mathbf{c})\longrightarrow_{\mathbb{L}_G}^{*}o(\mathbf{c}')$ contains at most $(D'-2){\cdot}|Q|{\cdot}B_G^2{\cdot}\|T\|+1$ sub-walks in $\mathbb{I}_i\cup\mathbb{I}_j$.
    Each of the sub-walks is either in $\mathbb{I}_i$ or in $\mathbb{I}_j$ or in both $\mathbb{I}_i$ and $\mathbb{I}_j$.
    The first and the second situations can be treated by induction.
    Assuming that $i<j$, a sub-walk in the third case is in \[\mathbb{N}^{i-1}\times\left[0,A_G{+}B_G\right)\times \mathbb{N}^{j-i-1}\times\left[0,A_G{+}B_G\right)\times\mathbb{N}^{D-j},\] where $A_G \stackrel{\rm def}{=} |Q|{\cdot}(B_G\,{+}\,1){\cdot}\|T\|$.
    To see this, suppose $o(\mathbf{c})\longrightarrow_{\mathbb{L}_G}^{*}o(\mathbf{c}')$ contains a sub-walk $q^f(\mathbf{c}^f)\stackrel{\pi}{\longrightarrow}_{\mathbb{I}_j}q^l(\mathbf{c}^l)$ such that
\begin{equation}\label{2023-12-26-noon}
|\mathbf{c}^l(j)-\mathbf{c}^f(j)| > |Q|{\cdot}(B_G\,{+}\,1){\cdot}\|T\|.
\end{equation}
It follows from~(\ref{2023-12-26-noon}) that in $\pi$ there must be at least $|Q|{\cdot}(B_G\,{+}\,1)$ configurations
\begin{equation}\label{2024-01-BD}
q_1(\mathbf{c}^1),q_2(\mathbf{c}^2),\ldots,q_{|Q|{\cdot}(B_G\,{+}\,1)}(\mathbf{c}^{|Q|{\cdot}(B_G\,{+}\,1)})
\end{equation}
rendering true the inequality $\mathbf{c}^{k+1}(j)-\mathbf{c}^k(j)\ge 1$
for all $k\in[|Q|{\cdot}(B_G\,{+}\,1)]$.
By the pigeonhole principle there are configurations $q'(\mathbf{c}^s),q''(\mathbf{c}^t)$ in~(\ref{2024-01-BD}) satisfying $q'=q''$, $\mathbf{c}^s(i)=\mathbf{c}^t(i)$ and $\mathbf{c}^s(j)\ne\mathbf{c}^t(j)$.
In other words, there is a circular sub-walk $\O''$ such that $\Delta(\O'')$ is orthogonal to the $i$-th axis but not orthogonal to the $j$-th axis.
Together with $\Delta(\O')$ it implies that $V_G$ is orthogonal to the $i$-th axis, contradicting to the assumption.
The rest of the argument is the same as in the above case.
\end{enumerate}
The proof of Theorem~\ref{theorem-d-vasslps} is now complete.

\section{Complexity Upper Bound}
\label{Sec-Proliferation-Tree}

The process of successive generations of CGSes in an execution of the KLMST algorithm can be visualized as tree growing, in which the leaves are the components of the CGS constructed so far.
The number of the CGes of the output CGS of a successful execution of KLMST is the number of the leaves of the final tree.
We shall formalize a form of tree growing and then use it to derive a complexity upper bound for our KLMST algorithm.

Fix an elementary function $p(n)=e(n)-1$, called a {\em proliferation function}.
The {\em height} of a finite tree is defined by the number of edges in the longest branch of the tree.
Suppose $\mathfrak{T}$ is a tree of height one whose root has $n$ children.
For $\hbar\ge3$ an $\hbar$-{\em proliferation} of $\mathfrak{T}$ is a tree of height $\hbar-1$ obtained by growing $\mathfrak{T}$.
Let us call a leaf in an $\hbar$-proliferation of $\mathfrak{T}$ {\em active} if its height is less than $\hbar-1$.
The rule of growth is this:
Let $\mathfrak{T}'$ be the tree half way through growing.
If the set of the active leaves is nonempty, choose one such a leaf and attach $2^{p(l)}$ children to the leaf, where $l$ is the number of the leaves of $\mathfrak{T}'$.
There is a huge number of $\hbar$-proliferations of $\mathfrak{T}$.
We are interested in the following question:
Which $\hbar$-proliferation is maximum in the sense that it has the maximum number of leaves?
It turns out that the answer to the question is standard.

The depth first strategy is to always grow an active leaf with the maximum height.
Our intuition tells us that to produce a maximum $\hbar$-proliferation tree of $\mathfrak{T}$, one should adopt the depth first strategy.

\begin{figure}[t]
\centering
\includegraphics[scale=0.35]{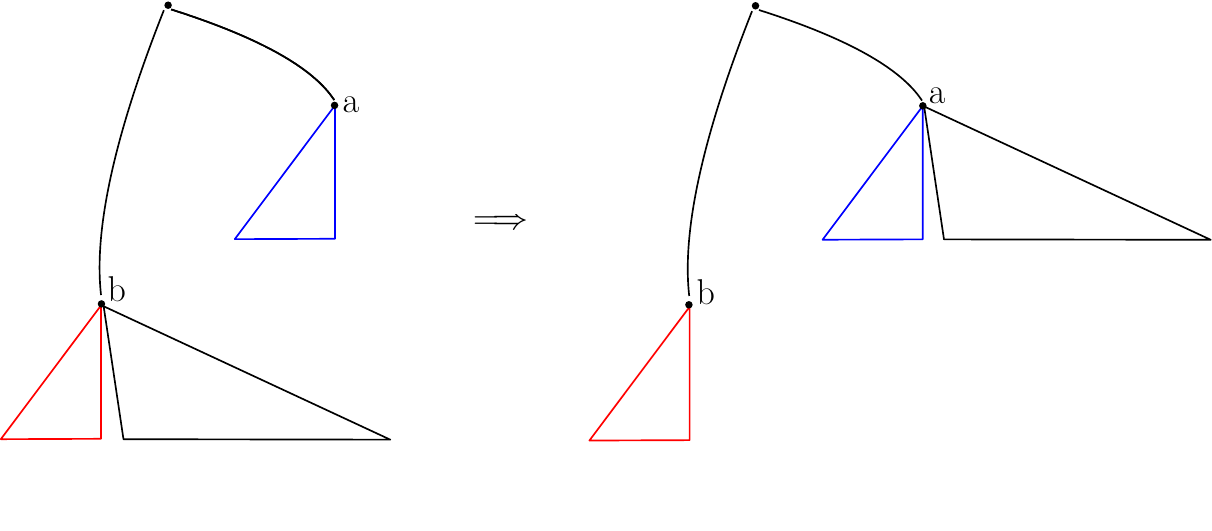}
\caption{Swapping the order of the growth of $a$ and $b$.}  \label{PT}
\end{figure}

\begin{lemma}
The depth first strategy always produces a maximum $\hbar$-proliferation of $\,\mathfrak{T}$.
\end{lemma}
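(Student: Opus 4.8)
The plan is to first collapse the optimization to the task of performing as many growth steps as possible, and then to settle that task by an exchange argument.

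The first step is the observation that a single growth affects the leaf count in a way independent of which leaf is chosen: if the current tree has $l$ leaves then after growing one of them it has $g(l):=l-1+2^{p(l)}$ leaves. Hence after $k$ growths the tree has exactly $g^k(n)$ leaves, so an $\hbar$-proliferation built from a total of $K$ growths has precisely $g^K(n)$ leaves. As $p$ is nondecreasing with $p(l)\ge1$ for $l\ge n\ge1$ (the natural assumption on a proliferation function), $g(l)>l$ along the orbit of $n$, so $g^k(n)$ is strictly increasing in $k$; therefore a maximum $\hbar$-proliferation is exactly one produced by the maximum possible number of growths. It is also useful that the $i$-th growth attaches the predetermined number $w_i:=2^{p(g^{i-1}(n))}$ of children, and that $(w_i)_i$ is nondecreasing. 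From here on I work with the abstract process on finite multisets $M$ of leaf depths in $\{1,\dots,\hbar-1\}$ --- a leaf of depth $d$ is active iff $d\le\hbar-2$ --- where the $i$-th step replaces one active leaf of some depth $d$ by $w_i$ leaves of depth $d+1$. The process always terminates (the ordinal $\sum_{v}\omega^{\hbar-1-\mathrm{depth}(v)}$ over the leaves $v$ strictly decreases), and the step-tree is finitely branching, so $\mathrm{opt}(M,i)$, the largest number of steps of such a process started from $M$ with step indices $i,i+1,\dots$, is well defined.

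The second step is a monotonicity lemma. Write $M_1\preceq M_2$ when $|M_1|=|M_2|$ and some bijection $M_2\to M_1$ never increases depth. Then $M_1\preceq M_2$ implies $\mathrm{opt}(M_1,i)\ge\mathrm{opt}(M_2,i)$ for every $i$. I would prove this by induction on $\mathrm{opt}(M_2,i)$: an optimal run from $(M_2,i)$ first grows some leaf $y$; grow the corresponding leaf $f(y)$ from $M_1$, which is no deeper, hence also active; the two resulting multisets still satisfy $\preceq$ (match the common parts by $f$ and the two fresh batches of $w_i$ leaves arbitrarily), and the inductive hypothesis finishes. Informally: shallower leaves are at least as productive.

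The third and decisive step is to show, by strong induction on $\mathrm{opt}(M,i)$, that the depth-first strategy attains $\mathrm{opt}(M,i)$ from every $(M,i)$; taking $M$ to be the $n$ depth-one leaves of $\mathfrak T$ and $i=1$ yields the lemma. Let $\alpha$ be a deepest active leaf of $M$, of depth $d^\ast$, and $M_\alpha$ the multiset obtained by growing $\alpha$ with the index-$i$ batch. The only thing to prove is $\mathrm{opt}(M_\alpha,i+1)=\mathrm{opt}(M,i)-1$, i.e.\ that some optimal run from $(M,i)$ begins by growing a depth-$d^\ast$ leaf; then depth-first grows $\alpha$ and, by the inductive hypothesis on $(M_\alpha,i+1)$, attains $\mathrm{opt}(M_\alpha,i+1)$ more, for the total $\mathrm{opt}(M,i)$. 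Suppose instead that no optimal run starts this way; among optimal runs pick one, $R$, minimizing the step $t$ at which a depth-$d^\ast$ leaf is first grown ($t$ exists since $\alpha$ must eventually be grown, and $t>i$ by hypothesis). An easy induction shows every leaf present before step $t$ has depth $\le d^\ast$, so the leaf $\epsilon$ grown by $R$ at step $t-1$ has depth $\le d^\ast$, in fact $<d^\ast$ by minimality of $t$; also $\alpha$ is still present just before step $t-1$. Swap steps $t-1$ and $t$ of $R$: grow $\alpha$ at $t-1$ (legal: present, active), then grow $\epsilon$ at $t$ (still present, since $\epsilon\ne\alpha$ as their depths differ). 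The multiset reached after step $t$ in the swapped run is $\preceq$ the one $R$ reaches there, since the only change is that the larger batch $w_t\ge w_{t-1}$ of new leaves now sits at the shallower depth $\mathrm{depth}(\epsilon)+1$ and the smaller batch at $d^\ast+1$; by the monotonicity lemma the tail completes with at least as many steps, so the swapped run is still optimal --- yet it grows a depth-$d^\ast$ leaf at step $t-1<t$, contradicting minimality. This is exactly the exchange pictured in Figure~\ref{PT}.

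I expect this last step to be the main obstacle: one has to be careful about legality of the swap (availability of $\epsilon$ once $\alpha$ has been grown) and, above all, to write down the explicit depth-nonincreasing bijection witnessing $\preceq$ between the two post-swap multisets, which is where the monotonicity $w_{t-1}\le w_t$ of the spawn counts is genuinely used; were the $w_i$ allowed to decrease, the claim would fail. Everything else --- termination and finiteness of $\mathrm{opt}$, the leaf-count identity $g^k(n)$, and the base cases --- is routine.
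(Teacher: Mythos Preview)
Your proposal is correct and follows the same adjacent-swap exchange argument as the paper: compare a run that grows a shallower leaf before a deeper one with the run obtained by swapping those two steps, and argue the swapped run does at least as well. Your formulation is tighter—you make explicit that the leaf count after $k$ growths is the deterministic value $g^k(n)$, reducing the question to maximizing the number of growths, and you isolate the simulation step as a clean monotonicity lemma on depth multisets—whereas the paper leaves these implicit in its Figure~\ref{PT} picture and is slightly loose in claiming the swap yields strictly ``more'' leaves (equality occurs when $w_{t-1}=w_t$); but the core mechanism is identical.
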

\begin{proof}
Suppose in the generation of the $\hbar$-proliferation $\mathfrak{T}_0$ of $\mathfrak{T}$ an active leaf $a$ of height $h_0$ is grown in the $i$-th step and an active leaf $b$ of height $h_1$ is grown in the $(i{+}1)$-th step; moreover $h_0<h_1$.
Confer the left tree in Figure~\ref{PT}.
Suppose that before the $i$-th step is executed the tree has $l$ leaves.
Now grow a tree $\mathfrak{T}_1$ (the right tree in Figure~\ref{PT}) by simulating the growth of $\mathfrak{T}_0$ with the modification that $b$ is grown in the $i$-th step and that $a$ is grown in the $(i{+}1)$-th step.
The simulation is simple.
The blue (respectively red, black) subtree of the left tree is simulated by the blue (respectively red, black) subtree of the right tree.
Since $h_0<h_1$, at the time the simulation of the growth of $\mathfrak{T}_0$ is completed, the growth of the tree $\mathfrak{T}_1$ is not yet completed.
So the final tree $\mathfrak{T}_1$ must have more leaves than $\mathfrak{T}_0$.

If the growth of $\mathfrak{T}_1$ violates the depth first principle, we can apply the above conversion to $\mathfrak{T}_1$ while increasing the number of leaves.
Eventually we get some $\hbar$-proliferation of $\mathfrak{T}$ that is generated in the depth first fashion.
It is clear that if $h_0=h_1$, the conversion neither increases nor decreases the number of leaves.
It follows that all $\hbar$-proliferations of $\mathfrak{T}$ grown in the depth first manner have the same number of leaves.
\end{proof}
Next we are going to derive an upper bound on the number of the leaves of a maximum $\hbar$-proliferation of $\mathfrak{T}$ for $\hbar>3$.
The main technical result is the following, where $F_{\hbar}$ is the function defined in Section~\ref{s-Non-Elementary-Complexity-Class}.

\begin{lemma}\label{jinggangshan}
For each $\hbar>3$ and each tree $\mathfrak{T}$ of height $1$ with $n$ leaves, the number of the leaves of a maximum $\hbar$-proliferation of $\mathfrak{T}$ is bounded by $F_{\hbar}(n)$.
\end{lemma}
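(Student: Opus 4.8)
The plan is to reduce, via the preceding lemma, to the depth-first growth strategy (which realises the maximum $\hbar$-proliferation) and then to read off an arithmetic recursion for the number of leaves. Write $M_{\hbar}(n)$ for the number of leaves of a maximum $\hbar$-proliferation of a height-$1$ tree with $n$ children, and for a leaf of \emph{budget} $b$ (one that may still grow $b$ levels downwards) let $g_{b}(c)$ denote the number of leaves obtained after depth-first growing such a leaf inside a tree that currently has $c$ leaves, the other leaves left untouched during that phase. Because depth-first always works on the deepest active leaf, it completes the whole subtree under one child before touching the next, so $g_{0}=\mathrm{id}$, $g_{1}(c)=c-1+2^{p(c)}$, and for $b\ge 1$,
\[
g_{b}(c)=g_{b-1}^{\,2^{p(c)}}\!\big(c-1+2^{p(c)}\big),
\]
where the superscript is functional iteration; and since the $n$ children of $\mathfrak{T}$ sit at depth $1$ with budget $\hbar-2$, one gets $M_{\hbar}(n)=g_{\hbar-2}^{\,n}(n)$. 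I would also collect the facts I will lean on: $g_{1}$ is elementary (so $g_{1}(c)\le 2^{p(c)+1}$) and each $g_{b}$ is nondecreasing with $g_{b}(c)\ge c$; $F_{a}^{\,k}(x)\le F_{a+1}(x)$ whenever $k\le x+1$; $F_{a-1}\big(F_{a}(m)\big)\le F_{a}(m+1)$ for $a\ge 4$; and the absorption fact that for any fixed elementary $q$ there is a constant $c_{0}$ with $q\big(F_{a}^{\,j}(m)\big)\le F_{a}^{\,j}(m+c_{0})$ for all $a\ge 3$ and $j\ge 1$ --- ``an elementary function applied on the outside of a tower of $F_{a}$'s only shifts the argument by a constant''. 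Finally, every elementary function is dominated by $F_{3}$, hence by $F_{b}$ for all $b\ge 3$.

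The core is an induction on $b$ proving: (i) there are an elementary $q$ and a constant $c_{0}$ with $g_{2}^{\,k}(c)\le F_{3}^{\,k}\big(q(c)+(k-1)c_{0}\big)$ for all $k\ge 1$; and (ii) for $b\ge 3$, $g_{b}^{\,k}(c)\le F_{b+1}^{\,k}\big(F_{b}(c)+(k-1)\big)$ for all $k\ge 1$, so in particular $g_{b}(c)\le F_{b+1}\big(F_{b}(c)\big)$. For (i): $g_{2}(c)=g_{1}^{\,2^{p(c)}}(c-1+2^{p(c)})$ is an iterate of an elementary function taken a number of times ($2^{p(c)}$) that is itself elementary in $c$; such an iterate is bounded by a tower whose height and top entry are elementary in $c$, hence by $F_{3}(q(c))$ for a suitable elementary $q$, and the iterated form follows by telescoping $(F_{3}\circ q)^{\,k}$ with the absorption fact at $a=3$. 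For the inductive step (which, applied once to (i), also establishes (ii) at $b=3$), feed the inductive bound on $g_{b-1}$ into the recursion for $g_{b}$: the $2^{p(c)}$-fold composition of $F_{b}\circ F_{b-1}$ telescopes --- via $F_{b-1}\big(F_{b}(m)\big)\le F_{b}(m+1)$ --- to $F_{b}^{\,2^{p(c)}}$ applied to an argument that is only elementarily larger than $c$; since both $2^{p(c)}$ and that argument are $\le F_{b}(c)$, collapsing with $F_{b}^{\,k}(x)\le F_{b+1}(x)$ for $k\le x+1$ yields $g_{b}(c)\le F_{b+1}\big(F_{b}(c)\big)$, and re-telescoping gives the iterated form.

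Taking $k=n$ then finishes the proof. For $\hbar=4$, using (i),
\[
M_{4}(n)=g_{2}^{\,n}(n)\le F_{3}^{\,n}\big(q(n)+(n-1)c_{0}\big)\le F_{3}^{\,n}\big(F_{3}(n)\big)=F_{3}^{\,n+1}(n)=F_{4}(n),
\]
the middle inequality because $q(n)+(n-1)c_{0}$ is elementary, hence $\le F_{3}(n)$ for $n$ large. For $\hbar\ge 5$, using (ii),
\[
M_{\hbar}(n)=g_{\hbar-2}^{\,n}(n)\le F_{\hbar-1}^{\,n}\big(F_{\hbar-2}(n)+(n-1)\big)\le F_{\hbar-1}^{\,n}\big(F_{\hbar-1}(n)\big)=F_{\hbar-1}^{\,n+1}(n)=F_{\hbar}(n),
\]
the middle inequality because $F_{\hbar-2}(n)+n\le F_{\hbar-1}(n)$ for $n$ large. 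The finitely many small $n$ are verified by hand (or simply subsumed, since the lemma is applied to inputs of non-trivial size). The step I expect to be the real work is the second one: making the telescoping bookkeeping for the fast-growing functions rigorous --- showing that iterating $F_{b+1}\circ F_{b}$ costs exactly one $F_{b+1}$ per step plus an additive shift, rather than a multiplicative blow-up or an extra tower level --- everything around it being routine.
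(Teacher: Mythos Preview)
Your argument is correct and follows essentially the same route as the paper's proof: reduce to the depth-first growth via the preceding lemma, then induct on the height while telescoping through the fast-growing hierarchy using that an elementary function composed outside $F_a$ is absorbed into an additive shift of the argument. The paper simply tracks the running leaf-count $M_k$ after finishing the $k$-th child of the root and shows $F_{\hbar}(M_{k+1})\le F_{\hbar}^{k+2}(n)$, which is exactly your iteration $g_{\hbar-2}^{\,k}$ unwound without naming the auxiliary function; the key inequality it uses, $2^{e(x)}\le F_{\hbar}(x+N)$ for some constant $N$, is your ``absorption fact'' specialised to $j=1$.

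The one place where your write-up is arguably \emph{more} careful than the paper's is the base of the induction: the paper's inductive step goes from $\hbar$ to $\hbar+1$ with $\hbar\ge 3$, and the inductive hypothesis is applied to a subtree grown \emph{inside} the larger tree (so the proliferation count uses the global leaf total, not the subtree's own leaf count). Your $g_b(c)$ makes this dependence on the ambient leaf count explicit and your separate case~(i) for $b=2$ supplies the base that the paper leaves implicit. So there is no gap in your proposal; the ``real work'' you flag --- the telescoping $F_{b-1}(F_b(m))\le F_b(m+1)$ and $F_b^{\,k}(x)\le F_{b+1}(x)$ for $k\le x+1$ --- is straightforward from the definitions $F_{b}(x)=F_{b-1}^{\,x+1}(x)$, and the ``for $n$ large'' caveat is harmless for the intended application.
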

\begin{proof}
Let's now derive an upper bound for the maximum $(\hbar{+}1)$-proliferation of $\mathfrak{T}$ for $\hbar\ge3$.
In the depth first strategy the first child of the root is attached with $2^{p(n)}$ children.
Let $M_1=2^{p(n)}+n-1$.
Let $\mathfrak{T}^1$ denote the tree defined by the first child of $\mathfrak{T}$ and the $2^{p(n)}$ children.
By induction hypothesis the number of the leaves of the maximum $\hbar$-proliferation of $\mathfrak{T}^1$ is bounded by $F_{\hbar}(M_1)$.
In fact $F_{\hbar}(M_1)$ also bounds the number of the leaves of the entire tree right after the growth of the maximum $\hbar$-proliferation of $\mathfrak{T}^1$ is completed.
Suppose $k\in[n]\setminus[1]$.
For each $g\in[k]\setminus[1]$, let $\mathfrak{T}^{g}$ denote the tree defined by the $g$-th child of $\mathfrak{T}$ and its children.
Assume that the number of the leaves of the maximum $\hbar$-proliferation of $\mathfrak{T}^g$ is bounded by $F_{\hbar}(M_g)$, where $M_{g}=2^{p(F_{\hbar}(M_{g-1}))}+F_{\hbar}(M_{g-1})-1$.
Also assume that $F_{\hbar}(M_g)$ bounds the number of the leaves of the tree right after the growth of the maximum $\hbar$-proliferation of $\mathfrak{T}^g$ is completed.
Let $\mathfrak{T}^{k+1}$ denote the tree defined by the $(k{+}1)$-th child of $\mathfrak{T}$ and its children, which are less than $M_{k+1}=2^{p(F_{\hbar}(M_k))}+F_{\hbar}(M_k)-1\le 2^{e(F_{\hbar}(M_k))}$ in number.
By induction hypothesis the number of the leaves of the maximum $\hbar$-proliferation of $\mathfrak{T}^{k+1}$, as well as the number of the leaves of the tree right after the growth of the maximum $\hbar$-proliferation of $\mathfrak{T}^{k+1}$ is completed, is bounded by
\begin{eqnarray}
F_{\hbar}(M_{k+1}) &\le& F_{\hbar}\left(2^{e(F_{\hbar}(M_{k}))}\right) \nonumber \\
 &\le& F_{\hbar}\left(F_{\hbar}(M_{k}+N+1)\right) \label{2024-02-02} \\
 &\le& F_{\hbar}^2\left(2^{p(F_{\hbar}(M_{k-1}))}+F_{\hbar}(M_{k-1})-1+N+1\right) \nonumber \\
 &\le& F_{\hbar}^2\left(2^{e(F_{\hbar}(M_{k-1}))}\right) \nonumber \\
 &\le& \ldots \nonumber \\
 &\le& F_{\hbar}^k\left(F_{\hbar}(M_{1}+N+1)\right) \nonumber \\
 &=& F_{\hbar}^{k+1}\left(2^{p(n)}+n-1+N+1\right) \nonumber \\
 &\le& F_{\hbar}^{k+1}\left(2^{e(n)}\right) \nonumber \\
 &\le& F_{\hbar}^{k+2}(n). \label{2024-02-02-b}
\end{eqnarray}
The elementary function $e(n)$ must be bounded by $t(n,N)$ for some positive integer $N$.
In the above derivation, (\ref{2024-02-02}) and~(\ref{2024-02-02-b}) are valid because $F_{\hbar}$ grows faster than any elementary function.
It follows that the number of the leaves of the maximum $(\hbar{+}1)$-proliferation of $\mathfrak{T}$ is bounded by $F_{\hbar}(M_{n})\le F_{\hbar}^{n+1}(n)=F_{\hbar+1}(n)$.
\end{proof}

By Lemma~\ref{Eulerian-Splf},  Lemma~\ref{WeqWp}, Lemma~\ref{my-decomposition-lemma} and Lemma~\ref{my-reduction-lemma}, every time a refinement is performed to the current CGS $\mathbf{a}\Xi\mathbf{b}$, the size of the new CGS is non-increasing if the refinement is a simplification; it is bounded by $2^{\texttt{poly}(|\mathbf{a}\Xi\mathbf{b}|)}$ if the refinement is a decomposition.
If $\Xi_j=p_jG_jq_j$ is decomposed to a bunch of geometrically $2$-dimensional CGes, then according to Theorem~\ref{theorem-d-vasslps}, each of these CGes is replaced by an LCGS with a local exponential in size (see the remark in the last paragraph of Section~\ref{sec-Leaf}).
If we set $p(n)=2^{2^{2^{n}}}-1$, we get a bound on the number of the leaves of the KLMST decomposition tree.
It is also a bound on the size of the output normal CGS.

It is now clear that Theorem~\ref{OUR-F-D} follows from Lemma~\ref{jinggangshan}.

\section{Future Work}\label{Sec-Conclusion}

Theorem~\ref{MAIN} is a strong indication that the introduction of $g\mathbb{VASS}^d$ is on a right track.
It is in our opinion that $g\mathbb{VASS}^d$ offers a more insightful classification of the VASS reachability problem.
The relationship between $g\mathbb{VASS}^d$ and $\mathbb{VASS}^d$ is of course an issue calling for further study.
Is it that $g\mathbb{VASS}^d\le_C\mathbb{VASS}^d$ or even $g\mathbb{VASS}^d\le_K\mathbb{VASS}^d$?
Here $\le_K$ is the Karp reduction (polynomial time reduction) and $\le_C$ is the Cook reduction (polynomial time Turing reduction).
Can we derive better lower bound for $g\mathbb{VASS}^d$ than for $\mathbb{VASS}^d$?
We have not talked about $g\mathbb{VASS}^1$ and $g\mathbb{VASS}^2$.
Is $g\mathbb{VASS}^1$ in $\mathbf{NP}$?
Is $g\mathbb{VASS}^2$ in $\mathbf{PSPACE}$?
An immediate future work is to derive completeness result for $g\mathbb{VASS}^1$ and $g\mathbb{VASS}^2$.

As far as complexity theoretical studies of $\mathbb{VASS}^d$ are concerned, the KLMST algorithm seems indispensable.
All simplifications of the algorithm are welcome.
The present paper provides one significant simplification. 
It is our hope that the version of the KLMST algorithm with its analysis tools presented here may be of help in the studies of BVASS and PVASS.
The reachability of these variants of VASS remains mysterious even in low dimension.





\bibliographystyle{abbrvnat}


\end{document}